\documentclass[11pt]{article}

\usepackage[T1]{fontenc}
\usepackage[utf8]{inputenc}
\usepackage{lmodern}

\usepackage{amsmath}
\usepackage{amssymb}
\usepackage{amsthm}
\usepackage{mathtools}

\usepackage[margin=1in]{geometry}
\usepackage{enumitem}
\usepackage{booktabs}
\usepackage{microtype}
\usepackage{needspace}
\usepackage{float}

\usepackage[dvipsnames]{xcolor}

\definecolor{ParetoBlue}{HTML}{444444}
\definecolor{ParetoOrange}{HTML}{B96518}
\definecolor{ParetoGray}{HTML}{66717C}

\usepackage{tikz}
\usetikzlibrary{
  arrows.meta,
  calc,
  patterns,
  positioning
}

\usepackage{pgfplots}
\pgfplotsset{compat=1.18}

\usepackage{aliascnt}

\usepackage[
  colorlinks = true,
  linkcolor  = NavyBlue,
  citecolor  = ForestGreen,
  urlcolor   = NavyBlue
]{hyperref}

\usepackage[nameinlink, capitalize]{cleveref}

\newtheorem{theorem}{Theorem}[section]

\newaliascnt{lemma}{theorem}
\newtheorem{lemma}[lemma]{Lemma}
\aliascntresetthe{lemma}

\newaliascnt{proposition}{theorem}
\newtheorem{proposition}[proposition]{Proposition}
\aliascntresetthe{proposition}

\newaliascnt{corollary}{theorem}
\newtheorem{corollary}[corollary]{Corollary}
\aliascntresetthe{corollary}

\theoremstyle{definition}

\newaliascnt{definition}{theorem}

\aliascntresetthe{definition}

\newaliascnt{construction}{theorem}
\newtheorem{construction}[construction]{Construction}
\aliascntresetthe{construction}

\theoremstyle{remark}

\newaliascnt{remark}{theorem}

\aliascntresetthe{remark}

\newaliascnt{conjecture}{theorem}

\aliascntresetthe{conjecture}

\crefname{construction}{construction}{constructions}
\Crefname{construction}{Construction}{Constructions}

\newcommand{\E}{\mathbb E}
\newcommand{\Prb}{\mathbb P}
\newcommand{\vol}{\operatorname{vol}}
\newcommand{\ind}{\mathbf 1}
\newcommand{\Max}{\operatorname{Max}}

\newcommand{\proofstep}[1]{%

  \smallskip%
  \noindent%
  \textit{#1}\ %
}

\title{The Pareto Secretary Problem\thanks{%
A preliminary version of this paper has been accepted to the
22nd Conference on Web and Internet Economics (WINE 2026).}}

\author{
  Andr\'es Cristi\textsuperscript{1}
  \and
  Yilin Kong\textsuperscript{2}
  \and
  Jiechen Zhang\textsuperscript{1}
}

\date{}

\hypersetup{
  pdftitle  = {The Pareto Secretary Problem},
  pdfauthor = {Andrés Cristi, Yilin Kong, Jiechen Zhang}
}

\begin{document}

\maketitle

\footnotetext[1]{%
  EPFL.
  \texttt{\{andres.cristi, jiechen.zhang\}@epfl.ch}%
}

\footnotetext[2]{%
  The Chinese University of Hong Kong.
  \texttt{1155191882@link.cuhk.edu.hk}%
}

\setcounter{footnote}{2}

\begin{abstract}
The Pareto secretary problem asks for online selection of an observation that is maximal in several criteria. Observations arrive as an i.i.d. sample, and success means that no other observation is at least as good in every coordinate and strictly better in one. For every fixed dimension $d\ge2$ and independent continuous coordinates, we prove that the optimal failure probability is asymptotic to $\Theta_d(1/[(\log n)^{d-1}\log\log n])$, both with full information and with coordinate ranks alone. A deterministic rank rule attains this bound uniformly over the marginals. Under arbitrary dependence, a fixed joint density with uniform marginals can instead have optimal success tending to the scalar best-choice constant $\approx 0.58$. For every fixed positive definite Gaussian covariance $\Sigma$, the failure probability is also $\Theta_d(1/[(\log n)^{d-1}\log\log n])$, attained by a parameter-free ordinal algorithm. While the leading constant depends on $\Sigma$, this shows that the correlation between the dimensions has no first-order effect. For both independent coordinates and nonsingular Gaussian distributions, we determine the optimal fixed cutoff and its leading failure constant. The best cutoff rule has a failure probability larger than the unrestricted optimum by a factor $\Theta_d((\log\log n)^2)$. For singular Gaussian distributions with known correlations, we also characterize the optimal failure asymptotics achievable using coordinate ranks alone. We measure each point by the probability that a new observation dominates it, and use the scarcity of points with very small probabilities to bound the performance of every stopping rule. A matching rule selects a point not dominated by earlier observations if this probability is below a threshold that increases as time runs out. An initial sample lets us estimate these domination probabilities using coordinate ranks alone, accurately enough to attain the optimal leading failure probability under both independence and Gaussian dependence. Singular Gaussian dependence can make every observation acceptable, or bring back the difficulty of selecting a single best point.
\end{abstract}

\clearpage
\setcounter{tocdepth}{2}

{
  \small
  \tableofcontents
}

\clearpage

\section{Introduction}

The classical secretary problem asks a decision maker to select the best candidate from a sequence arriving in uniformly random order. Each candidate must be accepted or rejected immediately, and a rejected candidate cannot be recalled. When the decision maker observes only relative ranks, the optimal probability of selecting the best candidate tends to $1/e$~\cite{ferguson1989}. Observing numerical values drawn i.i.d.\ from a known continuous distribution improves the limit to approximately $0.58$ in the full-information best-choice problem~\cite{gilbert1966,gnedin2023}. In both settings, selecting the single best candidate leaves a constant probability of failure even as the number of candidates grows.

Several criteria can change which outcomes are acceptable. A hiring committee may compare candidates by education and professional experience without assigning a common numerical score to the two attributes. One candidate may have more education, while another has more experience. These candidates represent different tradeoffs, and either can be a reasonable choice when the criteria are assessed together. This setting leads naturally to a selection objective that allows several mutually incomparable outcomes.

We represent a candidate by a point in $\mathbb R^d$, with larger coordinates preferable. One point dominates another if it is at least as good in every coordinate and strictly better in at least one. A point is \emph{Pareto maximal} if no other point in the sample dominates it; the set of all such points is the \emph{Pareto frontier}. We study a known number $n$ of independent observations $Z_1,\ldots,Z_n$ from the same distribution, revealed sequentially, and seek to select one point on the frontier of the complete sample. Selection is irrevocable, and at most one point can be selected. Selecting a Pareto-maximal point is a success; selecting any other point or selecting nothing is a failure. When $d=1$ and the distribution is continuous, the frontier almost surely consists of the unique largest observation. In several dimensions it can contain many points, each offering a different tradeoff among the criteria.

This additional freedom changes the asymptotic outcome. For every fixed $d\ge2$, independent continuous coordinates permit success probability tending to one. Gnedin~\cite{gnedin1983pareto} already showed that this probability can tend to one in a model where observations are ranked independently and uniformly at random in two coordinates. The growing set of acceptable points creates opportunities for reliable online selection. Our results quantify how its size governs the optimal failure probability and identify how dependence between the criteria changes that conclusion.

\subsection{Main results}

We compare two information settings. A \emph{full-information} rule knows the joint distribution and observes numerical values. A \emph{rank-based} rule observes only the ranking of all observations seen so far in each coordinate. Below, we specify what distributional knowledge the rank-based rules need. We consider fixed dimensions $d\ge2$ and fixed distributions as $n\to\infty$.

The expected number of acceptable observations gives a common way to express our main result. Let $M_n$ be the expected number of Pareto-maximal points in the complete sample. For independent continuous coordinates and for Gaussian observations with a fixed positive definite covariance matrix, the optimal failure probability satisfies
\[
1-p_n^\star\sim\frac{2}{M_n\log M_n},
\]
where $p_n^\star$ denotes the optimal success probability in either information setting. Here $\sim$ means that the ratio of the two sides tends to one as $n\to\infty$. Thus, within these families, a larger expected frontier makes online selection easier in a precise quantitative sense. The constructions below attain this leading term using coordinate ranks alone; the Gaussian rule also learns the correlations from the observations. Section~\ref{gauss:comparison-independent} derives the common relation from the distribution-specific bounds.

\paragraph{Independent coordinates.}
We first assume that the coordinates of each observation are independent and have continuous distributions, which may differ across coordinates. A simple cutoff rule rejects the first $s$ observations and selects the first later point that no earlier point dominates. We characterize exactly which cutoffs make the probability of selecting a maximal point tend to one. The smallest failure probability among these rules is of order
\[
\frac{\log\log n}{(\log n)^{d-1}}
\]
(\Cref{thm:fixed-d-successful-cutoff-families,thm:cutoff-rate-lower-bound}).

We then improve this bound with a time-dependent threshold rule. It selects the first point that no earlier observation dominates and whose probability of being dominated by a new independent observation is below a threshold specified for its arrival time. This threshold increases as fewer observations remain, allowing more points to be accepted near the end. The failure probability improves to order
\[
\frac{1}{(\log n)^{d-1}\log\log n}.
\]
We determine the leading constants for both rules, which depend only on the dimension. A rank-based implementation of the more selective rule achieves the same leading failure probability as the optimal full-information rule, for every choice of continuous marginal distributions (\Cref{opt:thm:optimal-rate}). It needs to know only that the coordinates are independent.

\paragraph{Allowing dependence.}
We next allow dependence between coordinates, while observations remain independent. Without restrictions on this dependence, the improvement over one-dimensional selection can disappear. We construct a fixed distribution with a joint density and uniform marginals whose optimal full-information probability tends to $\approx 0.5801$, the tight success probability for the one-dimensional case~\cite{gilbert1966}. This is the sharp asymptotic worst-case constant among distributions with a joint density, since using a single coordinate always gives at least the one-dimensional full-information probability (\Cref{dep:thm:main}).

For Gaussian observations with a fixed positive definite covariance matrix $\Sigma$, both constructions extend from the independent case with the same orders of failure probability. Correlation enters through the size of the frontier. We write $a_\Sigma$ for its leading coefficient, so that
\[
M_n\sim a_\Sigma(\log n)^{d-1}.
\]
A larger coefficient means a larger expected frontier at the same sample size, to first order. It equals $1/(d-1)!$ under independence and depends only on the correlation matrix in the Gaussian model. Section~\ref{sec:gaussian} derives an integral formula for this coefficient and explains its role in both stopping rules. We determine the leading constants for the best cutoff and the optimal rule (\Cref{gauss:cutoff-consistency,gauss:headline}). A rank-based rule again matches the optimal full-information leading term, knowing only that the observations are Gaussian, without knowing their means, variances, or correlations.

We also consider singular Gaussian distributions, whose observations lie in a lower-dimensional affine subspace, and characterize how the relationships between coordinates affect the selection guarantees (\Cref{gauss:singular-gaussian}).

\subsection{Proof overview}

For independent coordinates, we apply each marginal distribution function and then the inverse standard Gaussian distribution function to obtain independent standard Gaussian coordinates.
Because we assume the distributions are absolutely continuous, this preserves independence and, almost surely, the ranking in every coordinate and hence which points are maximal. It gives us a common starting point for the later extension to Gaussian dependence.

Our rules never select a point already dominated by an earlier observation. They can therefore fail in two ways, by selecting nothing or by selecting a point that a later observation dominates. For a cutoff rule, waiting longer leaves fewer opportunities to select, but also fewer later observations that might dominate the selected point. We estimate both probabilities and balance them to find the best cutoff.

A more selective rule accepts only points that a new observation is unlikely to dominate. Such points are rare, so a rule that selects with high probability must sometimes accept a point that a later observation dominates. This gives a lower bound on failure. Our time-dependent threshold rule matches it. We bound how often an observation meets the threshold without being dominated by earlier observations, and hence how likely the rule is still waiting at each time. We then add the probabilities of selecting a point that a later observation dominates, bounding all failure near the end by the probability of still waiting.

To implement the rule using only coordinate rankings, we reject the first half of the observations and use them as a sample. For each later observation, we use the number of sample observations exceeding it in each coordinate to estimate the probability of being exceeded in that coordinate. Under independence, the product estimates its probability of being dominated by a new observation. We prove that these estimates are accurate enough to give the same leading failure probability as with known distributions.

With Gaussian dependence, a linear transformation makes the coordinates independent, but changes the directions in which one point dominates another (Figure~\ref{fig:intro-gaussian-change}). Probability estimates for the new comparison regions let us extend both the fixed-cutoff analysis and the lower and upper bounds on the optimal failure probability from the independent case. The rank-based rule also estimates the correlations from the initial sample.

 \begingroup
\setlength{\intextsep}{10pt}
\setlength{\abovecaptionskip}{8pt}
\begin{figure}[H]
\centering
\scalebox{0.93}{%
\begin{tikzpicture}[scale=1.45,font=\small,>=Stealth]
\begin{scope}
 \clip (-1.65,-1.45) rectangle (1.65,1.45);
 \fill[ParetoOrange!18] (.6,.55) rectangle (1.65,1.45);
 \begin{scope}[cm={.948683,.316228,.316228,.948683,(0,0)}]
  \foreach \r in {.55,.95,1.3}
   \draw[ParetoGray!65] (0,0) circle (\r);
 \end{scope}
 \draw[->,ParetoGray!70] (-1.5,0)--(1.58,0);
 \draw[->,ParetoGray!70] (0,-1.3)--(0,1.38);
 \draw[ParetoOrange,thick] (.6,1.45)--(.6,.55)--(1.65,.55);
 \fill (.6,.55) circle (1.8pt);
\end{scope}
\node[align=center,font=\small] at (0,-1.78) {(a) Correlated Gaussian\\observations};
\draw[->,thick] (2.0,.2)--(3.0,.2);
\node[anchor=south,align=center,text width=1.9cm,inner sep=2pt,font=\footnotesize] at (2.5,.32) {change of\\coordinates};
\begin{scope}[xshift=5cm]
 \begin{scope}
  \clip (-1.65,-1.45) rectangle (1.65,1.45);
  \fill[ParetoOrange!18] (.494106,.415049)--(6.186204,-1.482319)
    --(4.288836,4.209779)--(-1.403262,6.107147)--cycle;
  \foreach \r in {.55,.95,1.3}
   \draw[ParetoGray!65] (0,0) circle (\r);
  \draw[->,ParetoGray!70] (-1.5,0)--(1.58,0);
  \draw[->,ParetoGray!70] (0,-1.3)--(0,1.38);
  \draw[ParetoOrange,thick] (6.186204,-1.482319)--(.494106,.415049)--(-1.403262,6.107147);
  \fill (.494106,.415049) circle (1.8pt);
 \end{scope}
 \node[align=center,font=\small] at (0,-1.78) {(b) Independent standard\\Gaussian observations};
\end{scope}
\end{tikzpicture}%
}
\caption{A change of coordinates makes the Gaussian coordinates independent. The shaded regions identify the same dominating observations; after the change, comparisons follow the slanted region.}
\label{fig:intro-gaussian-change}
\end{figure}
\endgroup

The other results use separate arguments. For arbitrary dependence, we construct a distribution where the additional coordinates provide no useful alternative to selecting the first-coordinate maximum. To compare values with ranks in the worst case, we adapt the method of Correa et al.~\cite{correa2021}. We change numerical values without changing their order, making a rule's decisions depend almost entirely on ranks.

\subsection{Related work}

The early literature on stopping at a Pareto-optimal option includes Gnedin~\cite{gnedin1983pareto}; Ferguson~\cite{ferguson1992best} discusses this line of multicriteria stopping problems. Gnedin models the criteria as independent uniform random rankings and proves that, for two criteria, a fixed-cutoff rule can select a Pareto-maximal observation with probability tending to one. Independent continuous marginals induce the same rank distribution, so this conclusion also applies to our rank-based setting. The fixed-cutoff rule is the same as the one we analyze under independence. His result establishes the existence of cutoffs whose selection probability tends to one; we characterize all such cutoff sequences and determine the best cutoff failure asymptotics. Our analysis determines the sharp first-order failure asymptotics, compares numerical and rank information, and relates stopping difficulty to expected frontier size under independent coordinates and Gaussian dependence.

Another multiple-criteria objective accepts an observation that is best in at least one criterion. Gnedin~\cite{gnedin1981multicriterial} studies this objective using relative ranks under independent criteria. With two criteria, at most two observations are acceptable. Pareto maximality also accepts intermediate tradeoffs that maximize neither coordinate, and the expected frontier size grows with the sample under independence. Ferguson~\cite{ferguson1992best} and Gnedin~\cite{gnedin1994best} study dependent-criteria variants. Samuels and Chotlos~\cite{samuels1986multiple} instead minimize the expected sum of final coordinate ranks, with rank one best in each coordinate. Yu, Jacobson, and Kiyavash~\cite{yu2019bi} consider multiple selections with full information and two expected-reward objectives, studying policies that are Pareto optimal with respect to those objectives.

Secretary problems on partially ordered sets share the objective of selecting a maximal element. Freij and W\"astlund~\cite{freij2010} obtain a probability at least $1/e$ of selecting a maximal element for every finite partially ordered set in random arrival order. Kumar et al.~\cite{kumar2011} give guarantees improving with the number of maximal elements. Garrod and Morris~\cite{garrod2011} obtain a sharp improving bound when the width, the largest number of mutually incomparable elements, equals the number of maximal elements; they also obtain a $1/e$ guarantee without this restriction. These results connect selection guarantees to the number of acceptable points in a partially ordered set. Here the order is generated by an i.i.d.\ geometric sample, and the decision maker also has access to coordinate comparisons. This structure permits the sharp relation between expected frontier size and optimal failure in \eqref{eq:frontier-cutoff-optimum}. Partial comparisons also arise in selection models motivated by biased evaluations and fairness, with score-based or group-based objectives~\cite{salem2023,pmlr-v139-correa21a}.

The unknown-marginal reduction adapts the Ramsey-theoretic method of Correa et al.~\cite{correa2021} for making stopping decisions insensitive to numerical labels. Their objective maximizes the expected selected value relative to the expected sample maximum; ours maximizes the probability of selecting a Pareto-maximal point. Applying Ramsey theory to coordinate grids gives increasing changes of numerical values that preserve all coordinate comparisons. The argument allows dependence and ties, including singular distributions. The complete reduction appears in \Cref{app:rank-sufficiency}.

\paragraph{Organization.}
\Cref{sec:model} gives the model and information settings. \Cref{sec:fixed-cutoffs,sec:optimal-stopping} analyze fixed cutoffs and optimal stopping for independent coordinates. \Cref{sec:arbitrary-dependence} studies arbitrary dependence, and \Cref{sec:gaussian} extends the analysis to Gaussian dependence and singular cases, followed by discussion. The appendices contain the rank-sufficiency reduction and the probability and sample estimates used in the stopping proofs.

\section{Model and Information Settings}
\label{sec:model}

\subsection{Model}

Fix a horizon $n\in\mathbb N$ and a dimension $d\ge2$. The observations $Z_1,\ldots,Z_n$ are independent with common distribution $F$ on $\mathbb R^d$. Unless stated otherwise, we assume $F$ is absolutely continuous with respect to the Lebesgue measure. For $z,z'\in\mathbb R^d$, write $z\preceq z'$ if $z^{(k)}\le z'^{(k)}$ for every $k\in[d]$, and write $z\prec z'$ if also $z\ne z'$. We say that $z'$ \emph{dominates} $z$ when it is at least as good in every coordinate and strictly better in at least one. Under this convention, identical observations do not dominate each other. For observations $z_1,\ldots,z_t$, their \emph{Pareto frontier} is
\[
\Max(z_1,\ldots,z_t)=\{z_i: \nexists j\in [t] \text{ such that }z_i\prec z_j\}.
\]
At time $t$, we call $Z_t$ \emph{currently maximal} when $Z_t\in\Max(Z_1,\ldots,Z_t)$, and \emph{maximal} when $Z_t\in\Max(Z_1,\ldots,Z_n)$.

The decision maker knows the horizon $n$, and the observations arrive sequentially, one at a time. At time $t$, the decision maker observes $Z_t$ and either selects it and stops or rejects it permanently. The goal is to maximize the probability of selecting a maximal observation. We call this outcome a \emph{success}; selecting a nonmaximal observation or making no selection counts as a \emph{failure}.

A stopping rule $\mathbf r=(r_1,\ldots,r_n)$ consists of measurable functions $r_t:(\mathbb R^d)^t\to[0,1]$, where $r_t$ gives the stopping probability conditional on reaching time $t$. Using independent random variables $U_1,\ldots,U_n\sim\operatorname{Unif}[0,1]$, independent of the observations, its stopping time is
\[
\tau_{\mathbf r}=\inf\{t\in[n]:U_t\le r_t(Z_1,\ldots,Z_t)\},\qquad\inf\varnothing=\infty.
\]
For a fixed rule we write $\tau$ for $\tau_{\mathbf r}$. Its probability of selecting a maximal observation is
\[
p_F(\mathbf r)=\Prb_F\left\{\tau\le n,\ Z_{\tau}\in\Max(Z_1,\ldots,Z_n)\right\}.
\]
For conditioning in proofs, write
\[
\mathcal H_0=\{\varnothing,\Omega\},\qquad
\mathcal H_t=\sigma(Z_1,\ldots,Z_t,U_1,\ldots,U_t),
\quad t\in[n],
\]
where $\sigma(\cdot)$ denotes the information generated by the listed
random variables. This includes the observations and the rule's
randomization through time $t$.

For a point $z$, we let $q_F(z)=\Prb\left\{z\prec Z\right\}$ be the \emph{domination risk} of $z$, which is the probability that an observation $Z\sim F$ dominates it. We write $q(z)$ when $F$ is clear from context. Conditional on the observations through time $t$, the probability that $Z_t$ is maximal is
\begin{equation}\label{eq:conditional-success}
\Prb\left\{Z_t\in\Max(Z_1,\ldots,Z_n)\mid Z_1,\ldots,Z_t\right\}
=\ind\{Z_t\in\Max(Z_1,\ldots,Z_t)\}(1-q(Z_t))^{n-t}.
\end{equation}
The indicator requires that no earlier observation dominates $Z_t$ and the remaining factor is the probability that none of the independent
future observations does. An observation already dominated when it arrives cannot be maximal in the complete sample. A rule can reject such an observation and continue without decreasing its probability of selecting a maximal one. We may therefore restrict attention to rules that select only currently maximal observations whenever we seek an optimal rule or construct one. For any such rule, failure is the disjoint union of \[ \underbrace{\{\tau=\infty\}}_{\text{no selection}} \quad\text{and}\quad \underbrace{\{\tau\le n,\ \exists j>\tau:\ Z_\tau\prec Z_j\}}_{\text{later domination}}. \] These names refer to these two events. An unrestricted rule can also fail by selecting an observation already dominated when it arrives.

\paragraph{Additional Notation.} All logarithms are natural unless a base is displayed. Dimensions and distribution parameters are fixed as $n\to\infty$ unless stated otherwise. We use $c$ for a positive absolute constant and subscripts to show allowed dependence, as in $c_d$ and $c_\Sigma$. Such constants may change between inequalities unless explicitly fixed. Subscripts on $O$, $o$, $\Omega$, and $\Theta$ indicate the same allowed dependence. Without a subscript, dependence on fixed parameters is understood from context. We write $X_s\xrightarrow{p}X$ for convergence in probability and $o_p(1)$ for a random quantity converging to zero in probability.

\subsection{Information Settings}

We distinguish what the decision maker observes from what it knows about $F$. In the \emph{full-information} setting, it knows $F$ and observes numerical values. This is the natural multivariate extension of the full-information best-choice problem of Gilbert and Mosteller~\cite{gilbert1966}, with the objective of selecting a maximal observation. Write
\[
p_n^\star(F)=\sup_{\mathbf r}p_F(\mathbf r)
\]
for the optimal probability in this setting; the rule may depend on $F$.

A \emph{rank-based} rule uses only coordinate comparisons among the observations seen so far, together with their arrival order. Such rules are natural when the marginal distributions are unknown, as comparisons still determine which observations are maximal, without requiring a numerical scale for each criterion. Formally, for $i\le t$ and $k\in[d]$, let
\[
R_{t,i}^{(k)}=1+\#\{j\le t:Z_j^{(k)}<Z_i^{(k)}\},\qquad
\mathcal R_t=(R_{t,i}^{(k)})_{i\in[t],\,k\in[d]}.
\]
The array $\mathcal R_t$ represents the observed points on the \emph{rank grid} $[t]^d$, with label $i$ recording arrival time (Figure~\ref{fig:rank-grid}). A rule is \emph{rank-based} if its stopping probability at time $t$ depends on the observations only through $\mathcal R_t$. It may still use prior knowledge about their distribution.

\begin{figure}[tbp]
\centering
\begin{tikzpicture}[x=4.0cm,y=4.0cm,font=\small,>=Stealth]
\begin{scope}
\draw[->,ParetoGray] (0,0)--(1.08,0) node[right] {$z^{(1)}$};
\draw[->,ParetoGray] (0,0)--(0,1.08) node[above] {$z^{(2)}$};
\foreach \x/\y/\i in {.12/.78/1,.91/.92/2,.19/.12/3,.27/.19/4,.75/.25/5,.83/.84/6}{
 \draw[ParetoGray!35,densely dashed] (\x,0)--(\x,\y)--(0,\y);
 \fill[ParetoBlue] (\x,\y) circle[radius=2.4pt];
}
\node[text=black] at (.50,-.17) {(a) Numerical observations};
\end{scope}
\draw[->,ParetoOrange,thick] (1.28,.50)--(1.80,.50)
 node[midway,above,align=center,text=black,font=\footnotesize] {replace values\\by ranks};
\begin{scope}[xshift=8.4cm]
\foreach \r in {1,...,6}{
 \draw[ParetoGray!35] (.15*\r,.15)--(.15*\r,.90);
 \draw[ParetoGray!35] (.15,.15*\r)--(.90,.15*\r);
 \node[below,text=ParetoGray] at (.15*\r,0) {\r};
 \node[left,text=ParetoGray] at (0,.15*\r) {\r};
}
\draw[->,ParetoGray] (0,0)--(1.08,0) node[right] {$R^{(1)}_{6,i}$};
\draw[->,ParetoGray] (0,0)--(0,1.08) node[above] {$R^{(2)}_{6,i}$};
\foreach \x/\y/\i in {1/4/1,6/6/2,2/1/3,3/2/4,4/3/5,5/5/6}{
 \fill[ParetoBlue] (.15*\x,.15*\y) circle[radius=2.4pt];
}
\node[text=black] at (.50,-.17) {(b) The rank grid};
\end{scope}
\end{tikzpicture}
\caption{Six observations and their coordinate ranks on a $6\times6$ grid. Replacing values by ranks preserves comparisons but removes numerical distances.}
\label{fig:rank-grid}
\end{figure}

To specify that knowledge, let $G$ be the distribution of $\mathcal R_n$. It gives the probability of each arrangement of the $n$ labeled observations on the rank grid. Write
\[
\mathcal F(G)=\{F:\mathcal R_n\text{ has distribution }G
                         \text{ under }F^{\otimes n}\}.
\]
The complete array determines all earlier comparisons and whether a selected observation is maximal. Thus any fixed rank-based rule has the same value for every $F\in\mathcal F(G)$. Let $p^\star_{n,\mathrm{rank}}(G)$ be the largest such value. The following theorem shows that, if only $G$ is known, numerical observations cannot improve the worst-case guarantee.

\begin{theorem}\label{thm:rank-sufficiency}
Fix $n,d$ and a distribution $G$ of the rank array $\mathcal R_n$ that arises from i.i.d. observations. A rule may know $G$ and use numerical observations, but does not know which $F\in\mathcal F(G)$ generates them. Then
\[
 \sup_{\mathbf r}\inf_{F\in\mathcal F(G)}p_F(\mathbf r)
 =p^\star_{n,\mathrm{rank}}(G).
\]
More strongly, fix any $F_0\in\mathcal F(G)$, any numerical rule $\mathbf r$, and $\varepsilon>0$. There are strictly increasing coordinate maps $T_1,\ldots,T_d$ and a rank-based rule $\mathbf r'$ such that, for $F$ the distribution of $(T_1(Z^{(1)}),\ldots,T_d(Z^{(d)}))$ with $Z\sim F_0$,
\[
 p_F(\mathbf r)\le p_F(\mathbf r')+\varepsilon.
\]
\end{theorem}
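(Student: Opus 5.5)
The plan is to prove the ``more strongly'' statement first; the equality follows from it. For the inequality $\ge$: any rank-based rule has the same value $p_F(\mathbf r')$ for every $F\in\mathcal F(G)$, so taking the best rank-based rule gives $\sup_{\mathbf r}\inf_F p_F(\mathbf r)\ge p^\star_{n,\mathrm{rank}}(G)$. For $\le$: fix a numerical rule $\mathbf r$ and $\varepsilon>0$, pick $F_0\in\mathcal F(G)$, and apply the strong statement. It yields $F\in\mathcal F(G)$, which lies in the family because strictly increasing coordinate maps preserve every coordinate comparison and hence the law of $\mathcal R_n$. It also yields a rank-based $\mathbf r'$ with $\inf_{F'}p_{F'}(\mathbf r)\le p_F(\mathbf r)\le p_F(\mathbf r')+\varepsilon\le p^\star_{n,\mathrm{rank}}(G)+\varepsilon$. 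We then let $\varepsilon\to0$.

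For the strong statement I will adapt the Ramsey argument of Correa et al.~\cite{correa2021}. First, I discretize. Choose a large finite set of candidate values $V_k\subset\mathbb R$ in each coordinate, and quantize the observations from $F_0$ so that they fall into finitely many atoms, with comparisons (including ties) preserved except on an event of probability at most $\varepsilon/3$. Since $F_0$ may be singular or have ties, I will quantize through the quantile transform of each marginal onto a fine grid. Equal values stay equal, and distinct values separated by more than the mesh keep their order. This reduces the problem to observations supported on a grid $\prod_k V_k$ with $|V_k|=N$ large.

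Next comes the Ramsey step. A rule, restricted to time $t$, is a function of the labelled configuration: which grid values occur in each coordinate, in which arrival pattern. Fix the rank pattern $\rho$, meaning the order type of the at most $t$ distinct values used in each coordinate together with the arrival labels. The stopping probability $r_t$ then becomes a function of the increasing tuples of actual values $(v_1<\dots<v_m)$ used in each coordinate. Round each $r_t$ value into one of $\lceil 3/\varepsilon\rceil$ buckets, colour the tuples by the vector of buckets over all patterns and times, and apply the product (multi-dimensional) Ramsey theorem for $d$ coordinates simultaneously. This gives subsets $W_k\subset V_k$ of size at least $n$ on which every colour depends only on the rank pattern. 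Finite Ramsey applies since $n$ and the number of colours are fixed and $N$ can be taken huge.

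The maps $T_k$ are then strictly increasing maps sending the $n$-point quantized value set, in order, into $W_k$; more precisely, the quantization grid is embedded order-preservingly into $W_k$. Under the resulting $F$, the value $r_t(Z_1,\dots,Z_t)$ lies within $\varepsilon/3$ of a function $r'_t(\mathcal R_t)$. Coupling the two rules through the same $U_t$, each step changes the stopping decision with probability at most $\varepsilon/(3n)$ after refining the buckets to width $\varepsilon/(3n)$. Summed over $n$ steps, the success probabilities differ by at most $\varepsilon$.

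The main obstacle is making the discretization and the Ramsey colouring compatible. The grid embedding must be chosen before we know $W_k$, yet it must produce a genuine strictly increasing map on all of $\mathbb R$ pushing forward $F_0$. I will handle this by embedding the quantization grid, which has a fixed size $L$ depending on $\varepsilon$, into $W_k$ with $|W_k|\ge L$ rather than $\ge n$. I will then extend piecewise linearly to a strictly increasing $T_k$, and bound the non-grid error by the quantization event. Ties are handled because the rank pattern records equalities.
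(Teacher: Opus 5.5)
Your derivation of the equality from the strong statement, the discretization into cells, and the product-Ramsey colouring all match the paper's proof in substance. The gap is in the last step, where you assert that under $F$ the value $r_t(Z_1,\ldots,Z_t)$ lies within $\varepsilon/3$ of $r'_t(\mathcal R_t)$.

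Ramsey homogeneity controls $r_t$ only at tuples of values drawn from the finite sets $W_k$. But $F$ must be the image of $F_0$ under strictly increasing, hence injective, maps. When the marginals of $F_0$ have no atoms, as in the absolutely continuous setting of the theorem, the transformed coordinates have no atoms either. So almost surely no observed value lies in $W_k$. With your piecewise linear $T_k$, an observation inside a quantization cell lands strictly between grid points. Since $r_t$ is only measurable, its value there bears no relation to its value at the grid tuple, however close the points are.

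The quantization event only guarantees that coordinate orders and ties are preserved. It says nothing about the values of $r_t$, so ``bounding the non-grid error by the quantization event'' does not work. The issue you flag as the main obstacle is not one: the cell partition depends only on $F_0$ and the mesh, and you just take Ramsey subgrids with at least as many points as there are cells.

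The missing idea is a measure-theoretic smoothing argument. For each time $t$ and rank array $g$, the paper writes $r_t$ as a bounded measurable function $f_{t,g}$ of the ordered distinct coordinate values. It chooses the Ramsey grids by sampling from continuous densities, so that almost surely each of the finitely many grid tuples is a Lebesgue point of the relevant $f_{t,g}$.

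It then maps a cell assigned to grid point $a$ by $T(x)=a+U+b\,v(x)$. Here $v$ is strictly increasing into $(-1,1)$, and $U\sim\operatorname{Unif}(-b,b)$ is an independent offset for each coordinate and cell. On the event that distinct values lie in distinct cells, the transformed list of distinct values has a density on the cube of radius $2b$ around the grid tuple. This density is at most $2^{dn}$ times the uniform density, however singular $F_0$ is.

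Shrinking $b$ until the cube averages of $|f_{t,g}-f_{t,g}(a)|$ are at most $2^{-dn}\delta$ gives $\E_T\bigl[|r_t(T(Z_1),\ldots,T(Z_t))-f_{t,g}(a)|\bigr]\le\delta$. A fixed choice of offsets achieving the final bound is then extracted by averaging over $T$. You need some averaging of this kind over transformations with bounded density near each grid tuple. A deterministic piecewise linear $T_k$ leaves $r_t$ uncontrolled exactly where the observations fall.
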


The proof in Appendix~\ref{app:rank-sufficiency} adapts the finite-Ramsey argument of Correa et al.~\cite{correa2021}. The equality applies when the marginal scales are unknown and unrestricted. Numerical values can still help when $F$ is known.

When $F$ or $G$ is known, backward induction compares selecting now with continuing and gives an optimal rule. These decisions may depend on the full observed frontier or rank array. We next specify what is known in the two distribution families studied in this paper.

\paragraph{Independent coordinates.}
First consider independent coordinates with continuous marginal distributions. Write
\[
\mathcal F_d=\{F_1\otimes\cdots\otimes F_d:
F_1,\ldots,F_d\text{ have continuous distribution functions}\}.
\]
Continuity excludes point masses but does not require a density. In the rank-based setting, the marginals are unknown, but independence is known. It determines $G$, since the coordinate rankings are independent uniform permutations. Thus every rank-based rule has the same value throughout $\mathcal F_d$. Let $\Phi$ be the standard Gaussian distribution function,
\[
\Phi(x)=\int_{-\infty}^x\frac{e^{-t^2/2}}{\sqrt{2\pi}}\,dt.
\]
When the marginals are known, the coordinate transformation
\[
z^{(k)}\longmapsto\Phi^{-1}(F_k(z^{(k)}))
\]
gives independent standard Gaussian coordinates and preserves all sample comparisons with probability one. We denote their joint distribution by $\mathcal N_d(0,I_d)$. The endpoints of $[0,1]$ have probability zero. Thus both optimal probabilities are common to all continuous product distributions, and we can analyze them using independent standard Gaussian coordinates.

The same rank observations arise from fixed coordinate lists. For each coordinate, fix $n$ distinct values and shuffle them uniformly and independently of the other coordinates. The entries in position $i$ form observation $i$. The coordinate rankings are again independent uniform permutations, so every rank-based rule has the same probability of selecting a maximal observation as under $\mathcal F_d$.

\paragraph{Gaussian coordinates.}
More generally, write $\mathcal N_d(\mu,\Sigma)$ for the $d$-dimensional Gaussian distribution with mean vector $\mu$ and covariance matrix $\Sigma$. We also allow dependence while assuming that the observations are Gaussian. Full information specifies the mean vector and covariance matrix. A rank-based rule knows the Gaussian assumption but, unless stated otherwise, is not given these parameters. Unlike independence, the Gaussian assumption alone does not determine $G$, since the coordinate rankings depend on the correlation matrix. The rule can use the observed comparisons to estimate that matrix. For singular Gaussian distributions, we specify separately when the correlation matrix is given.

Strictly increasing transformations of individual coordinates preserve all comparisons. The Gaussian rank model therefore also covers observations obtained by applying such transformations to Gaussian coordinates, even when the transformed marginals are not Gaussian. Sections~\ref{sec:fixed-cutoffs} and~\ref{sec:optimal-stopping} analyze independent coordinates through $\mathcal N_d(0,I_d)$. Section~\ref{sec:gaussian} extends the analysis to Gaussian dependence, including singular covariances. Independent continuous coordinates are therefore included in this comparison model. In the independent model, independence is given to the rule; in the Gaussian model, the correlations may be unknown.

\section{Fixed-Cutoff Rules under Independence}
\label{sec:fixed-cutoffs}
\label{sec:d-dim-cutoff}

We begin with a fixed-cutoff rule for independent coordinates with
continuous marginal distributions. It rejects the first $s$ observations,
then selects the first currently maximal observation. Waiting longer
reduces the chance of later domination but leaves fewer opportunities
to select. We determine when its failure probability tends to zero and
how small it can be.

Throughout this section $d\ge2$ is fixed and omitted from the notation.
Write $m=n-s$. We sometimes use $s_n$ and $m_n$ to show their dependence
on the horizon $n$.

\subsection{Failure Decomposition}
\label{subsec:cutoff-rule-definition}

As explained in \Cref{sec:model}, an increasing transformation of each coordinate allows us to analyze every continuous product distribution using $F=\mathcal N_d(0,I_d)$. We use the Gaussian estimates in Appendix~\ref{app:stopping-estimates} with $\Sigma=I_d$. Throughout this section and Section~\ref{sec:optimal-stopping}, $Z_1,\ldots,Z_n$ have this distribution. The probability $q$ from Section~\ref{sec:model} is therefore
\begin{equation*}
q(z)=\prod_{k=1}^d(1-\Phi(z^{(k)})).
\end{equation*}

For $s\in\{1,\ldots,n-1\}$, the rule has stopping time
\[
\tau=\inf\{s<t\le n:Z_t\in\Max(Z_1,\ldots,Z_t)\},\qquad\inf\varnothing=\infty.
\]
Write $p_n(s)$ for its probability of selecting a maximal observation among all $n$ observations. Since the rule uses only coordinate comparisons, this probability is the same for all independent continuous marginals.

Let $\mathcal U_s$ be the set of points not dominated by the initial sample, with probability $A_s$ under $F$,
\[
\mathcal U_s=\{z\in\mathbb R^d:z\not\prec Z_t\text{ for all }t\le s\},\qquad A_s=F(\mathcal U_s).
\]
More generally, $\mathcal U_t$ and $A_t$ denote this set and its probability after $t$ observations. The rule selects the first observation after time $s$ in $\mathcal U_s$. Before that selection, every observation after the cutoff is dominated by one of the first $s$ observations, so it cannot dominate a point in $\mathcal U_s$. Figure~\ref{fig:cutoff-geometry} shows the same region after applying $\Phi$ to each coordinate, so that its probability is represented by area in the unit square.

\begin{figure}[tbp]
\centering
\begin{tikzpicture}[x=4.45cm,y=3.8cm,font=\small,>=Stealth]
\foreach \shift in {0,1.45} {
 \begin{scope}[xshift=\shift*4.45cm]
  \fill[ParetoBlue!10] (0,.8)--(.2,.8)--(.2,.5)--(.6,.5)--(.6,.25)--(.85,.25)--(.85,0)--(1,0)--(1,1)--(0,1)--cycle;
  \draw[ParetoGray!45] (0,0) rectangle (1,1);
  \draw[->,ParetoGray] (0,0)--(1.08,0) node[right] {$x$};
  \draw[->,ParetoGray] (0,0)--(0,1.08) node[above] {$y$};
  \draw[thick,ParetoBlue] (0,.8)--(.2,.8)--(.2,.5)--(.6,.5)--(.6,.25)--(.85,.25)--(.85,0);
  \foreach \x/\y in {.2/.8,.6/.5,.85/.25} {\fill[ParetoBlue] (\x,\y) circle (1.5pt);}
  \node[below left,text=ParetoGray] at (0,0) {$0$};
  \node[below,text=ParetoGray] at (1,0) {$1$};
  \node[left,text=ParetoGray] at (0,1) {$1$};
 \end{scope}
}
\node[ParetoBlue] at (.57,.87) {$U_s$};
\node[align=center,text=ParetoGray,font=\footnotesize] at (.25,.24) {already\\dominated};
\node[align=center] at (.5,-.24) {(a) Region left by the initial sample};
\begin{scope}[xshift=1.45*4.45cm]
 \filldraw[fill=ParetoOrange!20,draw=ParetoOrange,dashed,thick] (.68,.58) rectangle (1,1);
 \fill[black] (.68,.58) circle (1.8pt) node[below left] {$z$};
 \fill[ParetoOrange] (.88,.85) circle (1.8pt) node[above left] {$z'$};
 \node[ParetoOrange,font=\footnotesize] at (.88,.66) {$q(z)$};
 \node[align=center] at (.5,-.24) {(b) Risk after accepting $z$};
\end{scope}
\end{tikzpicture}
\caption{The initial frontier (dark gray) bounds the remaining region $U_s$. The rule selects its first hit $z$; a later point in the orange rectangle dominates $z$.}
\label{fig:cutoff-geometry}
\end{figure}

The rule fails if it does not select any observation, or if the selected point is dominated by a later observation. Write $Z_{1:s}=(Z_1,\ldots,Z_s)$. Conditional on this initial sample, the probability of making no selection is $(1-A_s)^m$. The rule first selects at time $s+j$ with probability $(1-A_s)^{j-1}A_s$, and the selected point has distribution $F$ conditional on $\mathcal U_s$. Imagine placing the $m-j$ future observations first. They dominate the selected point exactly when it lies in the part they remove from $\mathcal U_s$. Given $Z_{1:s}$, this removed probability has the same distribution as $A_s-A_{s+m-j}$. Since these observations are independent of the selected point, the conditional probability that a later observation dominates the selected point is $\E\left[A_s-A_{s+m-j}\mid Z_{1:s}\right]/A_s$. Multiplying by the selection probability and summing gives the exact failure probability
\begin{equation}\label{eq:cutoff-failure-identity}
1-p_{s+m}(s)
=\E\left[(1-A_s)^m\right]
+\E\left[\sum_{j=1}^m(1-A_s)^{j-1}(A_s-A_{s+m-j})\right].
\end{equation}

\subsection{Vanishing Failure Probability}

The two terms in \eqref{eq:cutoff-failure-identity} admit the following approximation.

\begin{theorem}
\label{thm:cutoff-quantitative}
As $s\to\infty$, for all integers $m\ge1$,
\begin{equation}\label{eq:cutoff-general-failure}
1-p_{s+m}(s)
=e^{-m\E\left[A_s\right]}+1-\frac{\E\left[A_{s+m}\right]}{\E\left[A_s\right]}
+o_d(1).
\end{equation}
The error tends to zero independently of $m$.
\end{theorem}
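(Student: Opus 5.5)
Starting from the exact identity \eqref{eq:cutoff-failure-identity}, I will approximate each of its two terms separately by replacing the random quantity $A_s$ with its mean. The basic input is concentration of $A_s$. Under $\mathcal N_d(0,I_d)$ (equivalently, uniform coordinates after applying $\Phi$), $A_s$ is the probability that a fresh point is not dominated by $Z_1,\ldots,Z_s$. Hence $\E[A_s]=\Prb\{Z_{s+1}\in\Max(Z_1,\ldots,Z_{s+1})\}$, which is $M_{s+1}/(s+1)\asymp(\log s)^{d-1}/s$. I will use the estimates of Appendix~\ref{app:stopping-estimates} to show $A_s/\E[A_s]\xrightarrow{p}1$. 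For this I would bound $\operatorname{Var}(A_s)$ by writing $\E[A_s^2]$ as the probability that two independent fresh points are both undominated by the sample. The two events are nearly independent except when the points are close in the sense that their dominating regions overlap substantially, and that overlap contributes only a lower-order amount. This concentration step is the main obstacle. If a second-moment bound is too delicate, I would fall back on an Efron--Stein estimate: removing one sample point changes $A_s$ by at most the mass of the region that only that point dominates, and the expected sum of squares of these masses is $o(\E[A_s]^2)$.

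For the first term, the map $a\mapsto(1-a)^m$ is bounded by $1$, and $(1-a)^m=e^{-ma}+O(ma^2e^{-ma})$ uniformly in $m$. Since $ma^2e^{-ma}\le a\cdot\sup_x xe^{-x}$ and $A_s\le\E[A_s]\cdot O_p(1)\to0$, the correction is $o(1)$ uniformly in $m$. Concentration gives $A_s=\E[A_s](1+\varepsilon_s)$ with $\varepsilon_s=o_p(1)$. The function $x\mapsto e^{-x}$ has the property that $|e^{-x(1+\varepsilon)}-e^{-x}|\le\sup_{x\ge0}xe^{-x(1-|\varepsilon|)}|\varepsilon|$, which is small uniformly in $x$. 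Bounded convergence then gives $\E[(1-A_s)^m]=e^{-m\E[A_s]}+o(1)$ uniformly in $m$.

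For the second term, I will rewrite the sum as $\E\bigl[\sum_{j}(1-A_s)^{j-1}A_s\cdot(A_s-A_{s+m-j})/A_s\bigr]$. The weights $(1-A_s)^{j-1}A_s$ form the distribution of the first selection time, truncated at $m$. Conditionally on $Z_{1:s}$, the quantity $\E[A_s-A_{s+m-j}\mid Z_{1:s}]/A_s$ is the probability that $m-j$ fresh points dominate a uniform point of $\mathcal U_s$. Replacing $A_s$ by $\E[A_s]$, the selection time is approximately geometric with rate $\lambda=\E[A_s]$, so $j\approx\lambda^{-1}\cdot\mathrm{Exp}(1)$. I will then compare with the target $1-\E[A_{s+m}]/\E[A_s]$ using monotonicity: since $A_t$ is nonincreasing in $t$, each ratio lies between $0$ and $(A_s-A_{s+m})/A_s$. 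The key point is that when $m\lambda$ is large, only the first $o(m)$ selection times have non-negligible weight, and for those $j$ the ratio is close to $1-\E[A_{s+m}]/\E[A_s]$. For this I need $\E[A_{s+m-j}]/\E[A_{s+m}]\to1$ whenever $j=O(1/\lambda)=o(m)$. The explicit asymptotics $\E[A_t]\sim(\log t)^{d-1}/((d-2)!\,t)$ from the appendix give this regular variation. When $m\lambda$ is bounded, both sides tend to the same limit: the target is $1-\E[A_{s+m}]/\E[A_s]=o(1)$ because $m=O(s/(\log s)^{d-1})=o(s)$, and the sum is $o(1)$ by the same bound. Finally, conditional ratios must be replaced by ratios of expectations. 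Here I again use concentration, now of $A_s-A_{s+k}$ relative to $\E[A_s]$, with errors bounded uniformly since every ratio lies in $[0,1]$. Splitting according to whether $m\lambda$ is large or bounded lets the $o(1)$ error be chosen independently of $m$.
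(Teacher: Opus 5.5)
Your proposal is correct, and for the second term it takes a genuinely different route from the paper. The first term is handled as in the paper. The paper uses the bounds $0\le e^{-mx}-(1-x)^m\le cx$ and $|e^{-ty}-e^{-t}|\le c|y-1|$ together with $L^1$ concentration of $A_s/\E[A_s]$ from \eqref{gauss:cutoff-concentration}.

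For the second term, the paper replaces every $A_{s+m-j}$ by $A_{s+m}$ at once. Conditional on $Z_{1:s}$, each additional observation removes expected mass at most $\int_{\mathcal U_s}q\,dF$. Hence both the replacement error and the cross term $\E[(1-A_{s+m}/A_s)(1-A_s)^m]$ are bounded by $\E[A_s^{-2}\int_{\mathcal U_s}q\,dF]=O((\log s)^{1-d})$, which is \eqref{gauss:cutoff-waiting-estimate}. Summing the geometric series then leaves $1-\E[A_{s+m}/A_s]$.

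You instead sandwich the sum using monotonicity of $A_t$. It is at most $1-\E[A_{s+m}/A_s]$. Keeping only $j\le J=\lceil K/\E[A_s]\rceil$ shows it is at least $1-\E[A_{s+m-J}/A_s]-\E[(1-A_s)^J]$. Your first-term estimate gives $\E[(1-A_s)^J]\le e^{-K}+o(1)$. Since $J=o(s)$, regular variation of $\E[A_t]$ closes the gap uniformly in $m$; for $m<J$ both sides are $o(1)$. Letting $K\to\infty$ then finishes.

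Your route needs only concentration of $A_s$ and the first-order asymptotics of $\E[A_t]$, so it avoids the second-order estimate \eqref{gauss:cutoff-waiting-estimate}. The price is that your errors are only $o(1)$. The paper's intermediate equalities in \eqref{eq:cutoff-later-domination} carry explicit $O((\log s)^{1-d})$ errors. These are reused in \Cref{thm:cutoff-rate-lower-bound}, where they must be $o(m/s)$, and your argument would not supply them.

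Some smaller points:
\begin{itemize}
\item You do not need separate concentration of $A_s-A_{s+k}$. Since $A_{s+k}/A_s\le1$, you have $\bigl|\E[A_{s+k}/A_s]-\E[A_{s+k}]/\E[A_s]\bigr|\le\E\bigl[\bigl|1-A_s/\E[A_s]\bigr|\bigr]$. This requires $L^1$ convergence, but for a nonnegative mean-one variable it follows from convergence in probability.
\item The mean asymptotic has $(d-1)!$, not $(d-2)!$. This does not affect the regular-variation argument.
\item For the retained indices you need $j=o(s)$, not $j=o(m)$. When $m\E[A_s]$ is of order $K$, $O(1/\E[A_s])$ is not $o(m)$, but it is $o(s)$, which is what regular variation requires.
\item If you prove concentration yourself by Efron--Stein, note that the paper applies it to a restricted quantity $V\le A_s$, using directions in $P_s$ and $q\le u$. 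The single-deletion bound \eqref{gauss:covered-weighted} is only available there. Lower-tail concentration of $V$, together with $\E[V]\sim\E[A_s]$, then yields the $L^1$ statement for $A_s$.
\end{itemize}
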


\begin{proof}
Lemma~\ref{lem:cutoff-common-replacements} gives $\E\left[(1-A_s)^m\right]=e^{-m\E\left[A_s\right]}+o_d(1)$. For the second term of \eqref{eq:cutoff-failure-identity}, that lemma allows us to replace $A_{s+m-j}$ by $A_{s+m}$ and sum the geometric series,
\begin{equation}\label{eq:cutoff-later-domination}
\begin{aligned}
\E\left[\sum_{j=1}^m(1-A_s)^{j-1}(A_s-A_{s+m-j})\right]&=\E\left[\left(1-\frac{A_{s+m}}{A_s}\right)
             (1-(1-A_s)^m)\right]+O_d((\log s)^{1-d})\\
&=1-\E\left[\frac{A_{s+m}}{A_s}\right]+O_d((\log s)^{1-d})\\
&=1-\frac{\E\left[A_{s+m}\right]}{\E\left[A_s\right]}+o_d(1).
\end{aligned}
\end{equation}
The last two equalities use \eqref{eq:cutoff-volume-cross-term} and the mean-ratio bound in that lemma.
\end{proof}

For example, if $s_n/n\to\alpha\in(0,1)$, the mean estimate in
Lemma~\ref{gauss:cutoff-volume-estimate} gives
\[
m_n\E\left[A_{s_n}\right]\to\infty,\qquad
\frac{\E\left[A_n\right]}{\E\left[A_{s_n}\right]}
\sim\frac{s_n}{n}\left(\frac{\log n}{\log s_n}\right)^{d-1}
\to\alpha.
\]
Hence $p_n(s_n)\to\alpha$ by \eqref{eq:cutoff-general-failure}.
To make failure vanish, the fraction left after the cutoff must instead
tend to zero, while leaving enough observations to select. Write
\[
a_s=\frac{(\log s)^{d-1}}{(d-1)!s}\sim\E\left[A_s\right].
\]

\begin{theorem}
\label{thm:fixed-d-successful-cutoff-families}
For any sequence $s_n\in\{1,\ldots,n-1\}$ and $m_n=n-s_n$,
\[
1-p_n(s_n)\to0
\quad\Longleftrightarrow\quad
\frac{m_n}{s_n}\to0\quad\text{and}\quad m_na_{s_n}\to\infty.
\]
\end{theorem}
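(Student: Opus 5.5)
The plan is to read the criterion off \Cref{thm:cutoff-quantitative}. Both main terms there, $e^{-m\E[A_s]}$ and $1-\E[A_{s+m}]/\E[A_s]$, are nonnegative, the latter because $A_t$ is nonincreasing in $t$. So whenever $s_n\to\infty$, the failure tends to zero if and only if both terms tend to zero. That theorem requires $s\to\infty$, so I first dispose of sequences where $s_n$ does not tend to infinity. Passing to subsequences, it suffices to show that failure stays bounded away from zero along any subsequence on which $s_n$ is bounded, say $s_n=s$ is constant.

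\emph{Bounded cutoff.} With $s$ fixed, $p:=\E[A_s]>0$ is a fixed number. The rule stops at time $s+1$ with probability $p$, and the selected point then has law $F$ restricted to $\mathcal U_s$. On this event, the remaining $n-s-1\to\infty$ independent observations dominate the selected point with conditional probability tending to one, since $q>0$ almost surely. By dominated convergence, the failure probability is eventually at least $p/2$, so it does not vanish. For the converse direction this case never arises: $m_n/s_n\to0$ together with $m_n\ge1$ forces $s_n\to\infty$.

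\emph{Exponential term.} Assume now $s_n\to\infty$. Lemma~\ref{gauss:cutoff-volume-estimate} gives $\E[A_s]\sim a_s$. Hence $m\E[A_s]\to\infty$ if and only if $m a_s\to\infty$, which is exactly when $e^{-m\E[A_s]}\to0$. (If $m a_s$ fails to diverge, I pass to a subsequence on which it is bounded, and there the term is bounded below.)

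\emph{Ratio term.} By the same mean estimate, $\E[A_{s+m}]/\E[A_s]=(1+o(1))\,g(r)$, where $r=(s+m)/s=n/s\ge1$, $L=\log s$, and
\[
g(r)=\frac{1}{r}\Bigl(1+\frac{\log r}{L}\Bigr)^{d-1}.
\]
If $m/s\to0$ then $r\to1$, so $g(r)\to1$ and the ratio term vanishes. Conversely, suppose $r\ge1+\delta$ along a subsequence. The logarithmic derivative of $g$ is $\bigl((d-1)/(L+\log r)-1\bigr)/r$, which is negative for all $r\ge1$ once $L>d-1$. So $g$ is decreasing on $[1,\infty)$, and $g(r)\le g(1+\delta)\to1/(1+\delta)<1$ as $L\to\infty$. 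The ratio term therefore stays at least about $\delta/(1+\delta)$, and failure does not vanish. Combining the two terms gives both directions of the equivalence.

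The main obstacle is this monotonicity step. It has to control the ratio uniformly over all $r$ up to $n/s$, including $s$ as small as $\log$-scale relative to $n$. I also need to check that the $(1+o(1))$ in $\E[A_t]\sim a_t$ is uniform enough when applied at both $t=s$ and $t=n$; both indices tend to infinity, so this should be fine.
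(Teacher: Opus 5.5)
Your proof is correct and follows essentially the paper's route: reduce to the two nonnegative terms of \Cref{thm:cutoff-quantitative}, use $\E[A_s]\sim a_s$, and exclude bounded cutoffs separately. The differences are minor. The paper bounds the ratio term through the pathwise inequality $A_{s+m}\le A_{s+\lceil cs\rceil}$ instead of your monotonicity of $g$, and your uniformity worry is unfounded in either version since $\E[A_t]\sim a_t$ is applied separately at $t=s$ and $t=n$. The paper also rules out bounded cutoffs by running the rule on an infinite sequence, rather than by your lower bound from a stop at time $s+1$.
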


\begin{proof}
Under the two conditions, $s_n\sim n\to\infty$, $\E\left[A_n\right]/\E\left[A_{s_n}\right]\to1$, and $m_n\E\left[A_{s_n}\right]\to\infty$, by \Cref{gauss:cutoff-volume-estimate}. Substituting into \Cref{thm:cutoff-quantitative} gives $1-p_n(s_n)\to0$.

Conversely, suppose $1-p_n(s_n)\to0$. First, $s_n\to\infty$. Otherwise some subsequence uses a fixed cutoff. Continue the observations indefinitely. The rule eventually selects a point, and that point has $q(Z)>0$ with probability one. The probability that no later observation dominates it tends to zero as the horizon increases, a contradiction.

For any fixed $c>0$, if $m\ge cs$, then $A_{s+m}\le A_{s+\lceil cs\rceil}$. Thus \Cref{thm:cutoff-quantitative} gives
\[
1-p_{s+m}(s)
\ge1-\frac{\E\left[A_{s+\lceil cs\rceil}\right]}{\E\left[A_s\right]}
-o_d(1)
=\frac{c}{1+c}+o(1).
\]
The last equality follows from \Cref{gauss:cutoff-volume-estimate}. Thus $1-p_n(s_n)\to0$ requires $m_n/s_n\to0$. It follows that $\E\left[A_n\right]/\E\left[A_{s_n}\right]\to1$, so \Cref{thm:cutoff-quantitative} also requires $m_n\E\left[A_{s_n}\right]\to\infty$. Since $\E\left[A_{s_n}\right]\sim a_{s_n}$, this gives the other necessary condition.
\end{proof}

For $d=2$, for example, $m_n=\lceil n\log\log n/\log n\rceil$ satisfies both conditions. The rule can therefore reject all but a vanishing fraction of the observations and still select a maximal observation with probability tending to one.

\subsection{Optimal Cutoff}

The two causes of failure suggest minimizing $e^{-ma_s}+m/s$.
Writing $x=ma_s$ gives $e^{-x}+x/(sa_s)$, whose minimum occurs at
$x=\log(sa_s)$. The next theorem verifies the resulting cutoff.

\begin{theorem}\label{thm:cutoff-rate-lower-bound}
For fixed $d\ge2$,
\[
\min_{1\le s<n}(1-p_n(s))
\sim(d-1)(d-1)!\frac{\log\log n}{(\log n)^{d-1}}.
\]
Every minimizing cutoff has
$m_n\sim(d-1)(d-1)!n\log\log n/(\log n)^{d-1}$.
\end{theorem}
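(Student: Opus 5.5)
The plan is to sharpen \Cref{thm:cutoff-quantitative} from an $o_d(1)$ error to an error of order $o(\log\log n/(\log n)^{d-1})$ in the relevant regime, and then to minimize the two explicit main terms. By \Cref{thm:fixed-d-successful-cutoff-families}, every cutoff outside the regime $m_n/s_n\to0$, $m_na_{s_n}\to\infty$ has failure bounded away from zero along a subsequence. Such cutoffs cannot be minimizers for large $n$, so we may restrict to that regime; in particular $s\sim n$ and $\log s=\log n+o(1)$.

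\proofstep{Quantified failure identity.}
Start from the exact identity \eqref{eq:cutoff-failure-identity}.

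\emph{Lower bound on the no-selection term.} The map $x\mapsto(1-x)^m$ is convex, so Jensen's inequality gives
\[
\E\left[(1-A_s)^m\right]\ge(1-\E[A_s])^m\ge e^{-m\E[A_s]-m\E[A_s]^2}.
\]
Since $m\E[A_s]^2=o(1)$ when $m/s\to0$, this becomes $e^{-m\E[A_s]}(1+o(1))$.

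\emph{Upper bound on the no-selection term.} For the matching upper bound I would use the concentration of $A_s$ around its mean from Lemma~\ref{lem:cutoff-common-replacements}, with relative fluctuations $o(1)$. This is enough because the no-selection term will turn out to be lower order at the optimum.

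\emph{Later-domination term.} The chain \eqref{eq:cutoff-later-domination} already carries an error $O_d((\log s)^{1-d})$, which is $o(\log\log n/(\log n)^{d-1})$. So the only error needing refinement is the final replacement of $\E[A_{s+m}/A_s]$ by the ratio of means. I would reprove that step with an explicit error, again using the concentration of $A_s$ and \eqref{eq:cutoff-volume-cross-term}.

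\proofstep{The mean ratio.}
This is the step I expect to be the main obstacle. A ratio of two asymptotics $a_t(1+O(1/\log t))$ loses precision of order $1/\log s$, far too much when the target quantity is $m/s\approx\log\log n/(\log n)^{d-1}$.

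The way around it is exact. Observe that $\E[A_t]=M_{t+1}/(t+1)$, where $M_t$ is the expected frontier size of $t$ points: $\E[A_t]$ is the probability that $Z_{t+1}$ is currently maximal among $t+1$ points, and by exchangeability this equals $M_{t+1}/(t+1)$. Hence
\[
\frac{\E[A_{s+m}]}{\E[A_s]}=\frac{s+1}{s+m+1}\cdot\frac{M_{s+m+1}}{M_{s+1}}.
\]
The increments of $M_t$ are $M_{t+1}-M_t=O((\log t)^{d-2}/t)$, which should follow from the same estimate as \Cref{gauss:cutoff-volume-estimate}, since $M_{t+1}-M_t=\E[A_t]-M_t/(t+1)\cdot(\ldots)$ telescopes. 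This gives $M_{s+m+1}/M_{s+1}=1+O(m/(s\log s))$. Therefore
\[
1-\frac{\E[A_{s+m}]}{\E[A_s]}=\frac{m}{s}\,(1+O(1/\log s)).
\]

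\proofstep{Optimization.}
Combining the pieces, uniformly over the regime,
\[
1-p_n(s)=e^{-ma_s}(1+o(1))+\frac ms(1+o(1))+O((\log n)^{1-d}).
\]
Substitute $x=ma_s$ and $N=sa_s=(\log s)^{d-1}/(d-1)!$. The objective becomes $e^{-x}+x/N$, up to the stated errors.

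\emph{Upper bound.} Choose $x=\log N$. This gives $1/N+\log N/N\sim(d-1)(d-1)!\log\log n/(\log n)^{d-1}$, because $\log N\sim(d-1)\log\log n$.

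\emph{Lower bound.} For every $x$, $e^{-x}+x/N\ge(1+\log N)/N$ by calculus. The multiplicative $(1+o(1))$ factors and the additive $O((\log n)^{1-d})$ term do not change the leading order, since they are $o(\log N/N)$.

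\emph{Location of minimizers.} A minimizer must have $x/N\le(1+o(1))\log N/N$, so $x\le(1+o(1))\log N$. It must also have $e^{-x}=o(\log N/N)$ up to the leading constant, which forces $x\ge(1-o(1))\log N$. Hence $x\sim\log N$, that is,
\[
m_n=\frac{x}{a_s}\sim\frac{\log N}{a_s}\sim(d-1)(d-1)!\,\frac{n\log\log n}{(\log n)^{d-1}},
\]
which is the stated asymptotic for every minimizing cutoff.
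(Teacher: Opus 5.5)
Your outline matches the paper's structure: restrict to the regime of \Cref{thm:fixed-d-successful-cutoff-families}, use Jensen for the no-selection term, balance $e^{-x}$ against $x/N$, and read off $x\sim\log N$. Your identity $\E[A_t]=M_{t+1}/(t+1)$ is also correct. The gap is in the step you call routine. Write $D=A_s-A_{s+m}$. Then the chain \eqref{eq:cutoff-later-domination} makes the later-domination probability $\E[D/A_s]+O((\log s)^{1-d})$, and you need $\E[D/A_s]=(1+o(1))\E[D]/\E[A_s]$, that is, an error $o(m/s)$.

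\Cref{lem:cutoff-common-replacements} only bounds this replacement by $\E|1-A_s/\E[A_s]|=o(1)$, and \eqref{eq:cutoff-volume-cross-term} concerns a different replacement. Lower-tail concentration \eqref{gauss:cutoff-concentration} does give the upper direction: use $D/A_s\le1$ on $\{A_s<(1-\varepsilon)\E[A_s]\}$. It does not give the lower direction $\E[D/A_s]\ge(1-o(1))m/s$. You need that direction both for the lower bound on the minimum and for $x\le(1+o(1))\log N$ at a minimizer. It would require $\E[D\,\ind\{A_s>(1+\varepsilon)\E[A_s]\}]=o(m\E[A_s]/s)$. The bound $D\le A_s$ with uniform integrability only gives $o(\E[A_s])$, which is too weak by the factor $m/s$. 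The bound $\E[D\mid Z_{1:s}]\le m\int_{\mathcal U_s}q\,dF\le mA_s^2$ would require $\E[(A_s/\E[A_s])^2\ind\{A_s>(1+\varepsilon)\E[A_s]\}]=o((\log s)^{1-d})$, an upper-tail estimate that the paper's lemmas do not supply.

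The missing idea is truncation in $q$. The paper writes the later-domination term as $\E[A_s^{-1}\int_{\mathcal U_s}(1-(1-q)^m)\,dF]$. For the lower bound it keeps only $\{q\le T/s\}$, where $1-(1-q)^m\ge mq-m^2q^2/2$. There $s\int_{\mathcal U_s}q\ind\{q\le T/s\}\,dF\le TA_s$, so the random ratio is bounded and $L^1$ convergence of $A_s/\E[A_s]$ suffices (last display of \Cref{gauss:cutoff-volume-estimate}). This gives $\frac ms(\int_0^Tte^{-t}\,dt+o(1))-\frac{m^2T^2}{2s^2}$, and then $T\to\infty$.

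Linearizing this way also makes your exact mean-ratio computation unnecessary. Only $\E[\int_{\mathcal U_s}q\,dF]/\E[A_s]\sim1/s$ is used, which is a ratio of first-order asymptotics. Separately, your increment bound $M_{t+1}-M_t=O((\log t)^{d-2}/t)$ does not follow from the paper's first-order estimates, whose leading terms cancel. It does hold for product measures via the classical recursion $M_t^{(d)}=\sum_{k\le t}M_k^{(d-1)}/k$.

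Finally, your no-selection upper bound $e^{-ma_s}(1+o(1))$ at $x=\log N$ is not justified. A fixed relative fluctuation $\varepsilon$ of $A_s$ gives only $\E[(1-A_s)^m]\le N^{-(1-\varepsilon)}+O_\varepsilon(1/N)$, which is not $o(\log N/N)$. Also, $\E[A_s]=a_s(1+o(1))$ does not give $e^{-m\E[A_s]}\sim e^{-ma_s}$ when $x\to\infty$. Only the order $O((\log s)^{1-d})$ is needed for this term. The paper gets it from the exponential-moment bound \eqref{gauss:cutoff-no-selection}; alternatively, you can take $x=(1+\delta)\log N$ and let $\delta\downarrow0$.
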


For all sufficiently large $n$, the leading term alone gives an
asymptotically optimal choice,
\begin{equation}\label{eq:cutoff-explicit-choice}
m_n=\left\lceil(d-1)(d-1)!\frac{n\log\log n}{(\log n)^{d-1}}\right\rceil,
\qquad s_n=n-m_n.
\end{equation}

\begin{proof}
We first sharpen the probability of later domination. Suppose
$m/s\to0$ and $ma_s\to\infty$. The first two equalities in
\eqref{eq:cutoff-later-domination} give
\begin{equation}\label{eq:cutoff-later-equivalent}
\begin{aligned}
\Prb\left\{\tau\le n,\ Z_\tau\notin\Max(Z_1,\ldots,Z_n)\right\}
&=\E\left[\frac1{A_s}\int_{\mathcal U_s}(1-(1-q(z))^m)\,dF(z)\right]
 +O_d((\log s)^{1-d})\\
&\sim m/s.
\end{aligned}
\end{equation}
For the upper bound, use $1-(1-q)^m\le mq$ on
$\{A_s\ge(1-\varepsilon)\E\left[A_s\right]\}$; for the lower bound, use
$1-(1-q)^m\ge mq-m^2q^2/2$ on $\{q\le T/s\}$.
Lemma~\ref{gauss:cutoff-volume-estimate} bounds the resulting integral
from above and below, respectively, by
\[
\frac{1+o(1)}{1-\varepsilon}\frac ms
 +O_{d,\varepsilon}((\log s)^{1-d}),
\qquad
\frac ms\left(\int_0^T te^{-t}\,dt+o(1)\right)
 -\frac{m^2T^2}{2s^2}.
\]
For fixed $T$ and $\varepsilon$, the errors are $o(m/s)$ because $ma_s\to\infty$ and $m/s\to0$, respectively.
Let $T\to\infty$ and $\varepsilon\downarrow0$.

For the choice \eqref{eq:cutoff-explicit-choice},
$ma_s=(d-1)\log\log s+o(1)$.
The exponential-moment estimate \eqref{gauss:cutoff-no-selection}
bounds $\Prb\left\{\tau=\infty\right\}$ by $O_d((\log s)^{1-d})=o(m/s)$.
Adding this to \eqref{eq:cutoff-later-equivalent} proves the upper bound.

Conversely, the upper bound and
Theorem~\ref{thm:fixed-d-successful-cutoff-families} imply $s\sim n$
and $ma_s\to\infty$ for every minimizing cutoff. Jensen's inequality gives
\begin{equation}\label{eq:cutoff-no-choice-bounds}
(1-\E\left[A_s\right])^m\le\Prb\left\{\tau=\infty\right\}\le1-p_n(s)
=O_d\left(\frac{\log\log n}{(\log n)^{d-1}}\right).
\end{equation}
Taking logarithms shows that
$m\E\left[A_s\right]\ge(d-1-o(1))\log\log n$. Consequently,
\[
1-p_n(s)\ge(1-o(1))m/s
\ge\bigl((d-1)(d-1)!-o(1)\bigr)
\frac{\log\log n}{(\log n)^{d-1}}.
\]
The matching upper bound fixes both the leading failure probability
and the asymptotic location of every minimizing cutoff.
\end{proof}

For $d=2$, the best cutoff has failure probability asymptotic to
$\log\log n/\log n$. Section~\ref{sec:optimal-stopping} improves this
by a factor asymptotic to $\tfrac12(\log\log n)^2$ using a threshold
that changes with the number of observations remaining.

\section{Optimal Stopping under Independence}
\label{sec:optimal-stopping}

Fixed-cutoff rules can select a maximal observation with probability tending to one, but their failure probability can be reduced further. We now let the threshold depend on how many observations remain. This gives the leading constant in the smallest possible failure probability. A rule with known marginals and a rule using only coordinate ranks both achieve it.

As in Section~\ref{sec:fixed-cutoffs}, we work with independent standard Gaussian coordinates with $F=\mathcal N_d(0,I_d)$, write $p(\mathbf r)=p_F(\mathbf r)$, and put $m=n-s$. Let $G$ be the distribution of their rank array, as in Section~\ref{sec:model}.

\begin{theorem}\label{opt:thm:optimal-rate}
For every fixed $d\ge2$,
\begin{equation}\label{opt:eq:optimal-rate}
1-p^\star_{n,\mathrm{rank}}(G)
\sim1-p_n^\star(F)
\sim\frac{2(d-2)!}{(\log n)^{d-1}\log\log n}.
\end{equation}
For every product distribution with continuous marginals, the two time-dependent rules below achieve this asymptotic. One uses the known marginals; the other uses only coordinate ranks.
\end{theorem}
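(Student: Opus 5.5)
The plan has three parts. First, a lower bound on failure for every full-information rule, obtained from a one-dimensional relaxation. Second, a matching upper bound from an explicit time-dependent threshold rule with known marginals. Third, a rank-based version of that rule whose estimation error does not affect the leading term. Since every rank-based rule is in particular a full-information rule, $1-p^\star_{n,\mathrm{rank}}(G)\ge 1-p^\star_n(F)$. So it suffices to prove the lower bound for $p^\star_n(F)$ and the upper bound for the rank rule. Throughout, write $L=\log n$ and $K=L^{d-1}/(d-1)!$. The basic input from Appendix~\ref{app:stopping-estimates} (with $\Sigma=I_d$) is the small-$x$ law of the domination risk,
\[
\Prb\{q(Z)\le x\}\sim x\,\frac{(\log(1/x))^{d-1}}{(d-1)!},
\]
so that for $x$ between $n^{-1}$ and $n^{-1+o(1)}$ this probability is $\approx Kx$. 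The target constant is
\[
\frac{2(d-2)!}{L^{d-1}\log L}=\frac{2}{(d-1)K\log L}\sim\frac{2}{K\log K}.
\]

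\emph{Lower bound.} By \eqref{eq:conditional-success}, selecting $Z_t$ succeeds with conditional probability at most $(1-q(Z_t))^{n-t}$. I would therefore dominate $p_F(\mathbf r)$ by the value of a relaxed problem. In it, the decision maker sees the i.i.d.\ values $Q_t=q(Z_t)$, ignores the current-maximality indicator (dropping it only helps), and receives payoff $(1-Q_t)^{n-t}$. This is a one-dimensional full-information stopping problem, and its optimal rule is a threshold rule $Q_t\le\theta_{n-t}$ given by backward induction. I would analyze it in the continuous-time scaling. With $r=n-t$ observations remaining, the acceptance hazard is $h(r)\approx K\theta_r$, and an acceptance costs about $r\theta_r/2$ in expectation, because $Q$ is roughly uniform on $[0,\theta_r]$ and $1-(1-Q)^r\approx rQ$. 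This gives the variational problem
\[
\text{minimize}\quad \int S(r)\,\frac{r\,h(r)^2}{2K}\,dr+S(0),\qquad S'=hS.
\]
Its solution has $h(r)\approx 1/(r\log(Kr)\ldots)$ near the end, and its value is $(2+o(1))/(K\log K)$. To make this rigorous I would bound the value function $V_r$ of the discrete problem from below by induction on $r$. I would use a comparison function $f(r)$ with $f(r)\approx 2/(K\log K)$ for $r\ge n^{1-o(1)}$, checking $f(r)\le \min_\theta\{\ldots\}$ on each step using only the distribution estimate above.

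\emph{Upper bound with known marginals.} Take the rule that rejects the first $s=n/2$ observations. At time $t>s$ it accepts $Z_t$ if $Z_t$ is currently maximal and $q(Z_t)\le\theta_{n-t}$, where $\theta_r$ is the relaxed-optimal threshold, capped so that $\theta_r\ge c/n$ is never smaller than needed near $r=0$. Two estimates are needed. First, a point with $q(Z_t)\le\theta\ll 1/\log n\cdot$ small is currently maximal with conditional probability $(1-q)^{t-1}=1-O(n\theta)$. This factor is harmless once $\theta_r=O(1/(rK))$ and $r\le n/2$, except for a transient that I handle by the same domination bound. So the acceptance hazard matches the relaxation up to $1+o(1)$. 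Second, the later-domination cost at acceptance is exactly $1-(1-q)^{r}$, as in the relaxation. Summing, the failure probability is the relaxed value times $1+o(1)$. The no-selection term is controlled by the survival $S(0)$, and the final stretch (small $r$) is bounded by the probability of still waiting. The first half contributes only $e^{-\Omega(K\cdot\text{stuff})}$ because the thresholds for $r\ge n/2$ are negligible by design.

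\emph{Rank-based rule and main obstacle.} Use the first half as a sample. For $t>s$ and each coordinate $k$, estimate $1-\Phi(Z_t^{(k)})$ by $\hat p_k=\#\{i\le s:Z_i^{(k)}>Z_t^{(k)}\}/s$, and accept when $\prod_k\hat p_k\le\theta_{n-t}$ and $Z_t$ is currently maximal. This is rank-based and uses only independence. The main obstacle is showing that the estimation error does not change the leading constant. Points relevant near the threshold have $q\approx 1/(rK)$, but an individual coordinate tail $p_k$ can be as small as about $1/n$, where the relative error of $\hat p_k$ is $\Theta(1)$. My plan is to split the acceptance region according to whether $\min_k p_k\ge (\log n)^{A}/n$. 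On that part, binomial concentration gives $\hat p_k/p_k=1+o(1)$ uniformly, so the hazard and cost are perturbed by $1+o(1)$. The complementary part has $F$-mass $O(K\theta/(\log n)^{1-o(1)})$, i.e., a vanishing fraction of the corners with small $q$, as one reads off from the integral $\int\ind\{\prod p_k\le x\}$ with one coordinate tail tiny. So errors there change both the hazard and the cost only by a factor $1+o(1)$. Conditional on the sample these comparisons hold with probability $1-o(1/(K\log K))$. That yields the same asymptotic failure, completing \eqref{opt:eq:optimal-rate} for every continuous product distribution via the transformation $z^{(k)}\mapsto\Phi^{-1}(F_k(z^{(k)}))$.
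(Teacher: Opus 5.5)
Your lower bound fails at the relaxation step. Once you drop $\ind\{Z_t\in\Max(Z_1,\ldots,Z_t)\}$, stopping at $t=n$ pays $(1-Q_n)^0=1$, so the relaxed problem has value exactly $1$. Backward induction gives $\theta_0=1$ and $\theta_r=0$ for every $r\ge1$, the relaxed failure value is $0$ at every $r$, and any comparison function must vanish identically.

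The functional you minimize, with its terminal cost $S(0)$, is not the value of this relaxation. Nor does it produce $\log K$ by itself. The schedule $h(r)=2/(r\log(1+r/r_0))$ costs about $2/(K\log(n/r_0))$, so a cutoff at $r_0=O(1)$ would give $2/(K\log n)$, below the truth. The value $2/(K\log K)$ needs the effective range to start at $r_0=n(\log n)^{-(d-1)+o(1)}$. What enforces this is exactly the constraint you discarded: currently maximal arrivals are scarce.

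The paper keeps this constraint. With $\ell=\lfloor n/[(\log n)^{d-1}\log\log n]\rfloor$, any rule stops during the last $\ell$ steps with probability at most $\sum_{s>n-\ell}\E[(1-q(Z_s))^{s-1}]=O(\ell(\log n)^{d-1}/n)=o(1)$. So a nearly optimal rule must stop, with probability $1-o(1)$, while $m\ge\ell$ observations remain and at a point with $mq\le1/\log\log n$. Only for such stops is relaxing to the law of $q(Z)$ harmless. For these restricted stops, $\E[q(Z_s)\ind\{\tau=s\}\mid\mathcal H_{s-1}]\ge(1-o(1))(d-1)!\,\Prb\{\tau=s\mid\mathcal H_{s-1}\}^2/[2(\log n)^{d-1}]$. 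Cauchy--Schwarz over $s$ with telescoping then contributes $4/\sum_{j=\ell}^{n-1}j^{-1}\sim4/[(d-1)\log\log n]$.

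Your rank-based rule also breaks as stated. With raw frequencies, an arrival exceeding every initial-sample value in some coordinate gets $\hat p_k=0$, hence $\prod_k\hat p_k=0\le\theta_{n-t}$. The first such arrival in a given coordinate is automatically currently maximal, so the rule accepts it unless it has already stopped. For coordinate $1$ this first time $T$ has $\Prb\{T=t\}=s/[t(t-1)]$, so with probability about $1/3$ it comes with at least $n/4$ observations left. At that point a schedule like $u_m$ has per-step hazard only $O(1/(n\log\log n))$ and has stopped with probability $o(1)$. Thus, with probability bounded below, the rule stops at a zero-estimate point with $\Theta(n)$ observations remaining. Its other coordinates are unconstrained, so its true $q$ is of order $1/n$ with probability bounded below. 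It is then later dominated with probability bounded below, and the failure probability does not even tend to zero.

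Your mass estimate controls points whose \emph{true} $q$ is small, not points whose \emph{estimate} is small. The paper instead uses the binomial upper confidence bound $\overline p_N$, which gives $q\le\widehat q$ everywhere outside an $O(n^{-1})$ event. The cost bound $\E[q\ind\{\widehat q\le u\}]\le\E[q\ind\{q\le u\}]$ is then immediate. Relative accuracy is needed only to lower-bound the acceptance probability, on directions away from the axes where all marginal tails exceed $n^{-1+\kappa}$.

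Two smaller points. First, $\theta_r$ cannot be the relaxed-optimal threshold, since that threshold is degenerate; you need an explicit schedule such as \eqref{eq:threshold-smooth-rule}. Second, $(1-q)^{t-1}$ is the probability of current maximality given $Z_t$ alone, whereas your survival bound multiplies hazards conditional on $\mathcal H_{t-1}$. That requires a bound, uniform over non-exceptional histories, on the qualifying mass already covered by earlier observations, as in Lemma~\ref{gauss:sample-exclusion}.
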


By \Cref{thm:cutoff-rate-lower-bound}, the ratio of the best fixed-cutoff failure probability to either optimal failure probability is asymptotic to $\tfrac12(d-1)^2(\log\log n)^2$.

The lower bound rests on the fact that points with very small $q$ are rare, so a rule that selects with high probability cannot always avoid the chance of later domination. For the upper bound, we compare $q(Z_s)$ with a threshold that depends on $m$. We first analyze this rule when the marginals are known, then make the same decision using coordinate ranks. For both rules, we bound the probability of reaching each decision and sum its contribution to failure.

\subsection{Lower Bound}
\label{sec:clairvoyant-benchmark}

Selecting a currently maximal observation at time $s$ has conditional failure probability $1-(1-q(Z_s))^{n-s}$, by \eqref{eq:conditional-success}. When this probability is small, it is close to $(n-s)q(Z_s)$. A rule that rarely fails must therefore select points with small $q$, while still selecting with high probability. The distribution
\begin{equation}\label{eq:product-risk-cdf}
H(v):=\Prb\left\{q(Z)\le v\right\}
=v\sum_{j=0}^{d-1}\frac{(\log(1/v))^j}{j!},\qquad 0<v\le1,
\end{equation}
limits how often such points arrive. The formula is the tail probability of the sum $-\log q(Z)$ of $d$ independent mean-one exponential variables. We use it to bound the expected $q$ at a selection, then sum the resulting contributions to failure.

\begin{theorem}\label{thm:general-d-fi-lower-bound}
For every fixed $d\ge2$, every full-information stopping rule satisfies
\[
1-p(\mathbf r)
\ge\frac{2(d-2)!-o_d(1)}{(\log n)^{d-1}\log\log n}.
\]
\end{theorem}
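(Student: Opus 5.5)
The plan is to reduce every full-information rule to a one-dimensional control problem on the \emph{hazard} of stopping, and then solve that problem asymptotically.

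\textbf{Step 1: per-time cost bound.} As explained in \Cref{sec:model}, we may assume the rule selects only currently maximal observations. By \eqref{eq:conditional-success} its failure probability is
\[
1-p(\mathbf r)=\Prb\{\tau=\infty\}+\E\Bigl[\sum_{t=1}^n\ind\{\tau=t\}\bigl(1-(1-q(Z_t))^{n-t}\bigr)\Bigr].
\]
I will use $1-(1-q)^m\ge mq-m^2q^2/2$ (or $\ge(1-\varepsilon)mq$ when $mq\le\varepsilon$) and discard contributions where $mq$ is not small, since those alone already cost a constant.

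Conditional on $\mathcal H_{t-1}$ and $\tau\ge t$, let $\pi_t$ be the conditional stopping probability at time $t$. The observation $Z_t$ is independent of $\mathcal H_{t-1}$, so $q(Z_t)$ has distribution $H$ from \eqref{eq:product-risk-cdf}. A bathtub (Neyman--Pearson) argument then gives
\[
\E\bigl[\ind\{\tau=t\}q(Z_t)\mid\mathcal H_{t-1},\tau\ge t\bigr]\ge\Psi(\pi_t),\qquad \Psi(p)=\int_0^{H^{-1}(p)}u\,dH(u).
\]
The point is that among all ways of stopping with probability $p$, stopping exactly when $q(Z_t)\le H^{-1}(p)$ minimizes the expected $q$. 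Restricting to currently maximal observations only shrinks the admissible set, so the bound still holds. From \eqref{eq:product-risk-cdf}, $\Psi$ is convex and increasing, with $\Psi(p)\sim\tfrac12 pH^{-1}(p)\sim\tfrac12(d-1)!\,p^2/(\log(1/p))^{d-1}$ as $p\downarrow0$.

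\textbf{Step 2: deterministic control problem.} Write $S_t=\Prb\{\tau\ge t\}$ and $\bar\pi_t=\Prb\{\tau=t\}/S_t$. Convexity and Jensen's inequality remove the randomness in $\pi_t$ and give
\[
1-p(\mathbf r)\ge(1-o(1))\Bigl[S_{n+1}+\sum_t S_t\,(n-t)\,\Psi(\bar\pi_t)\Bigr],\qquad S_{t+1}=S_t(1-\bar\pi_t).
\]
Thus it suffices to lower-bound the value of this deterministic problem. I will index by the remaining count $m=n-t$ and let $V(m)$ be the minimal normalized cost-to-go. It satisfies the Bellman recursion
\[
V(m+1)=\min_{p\in[0,1]}\bigl\{m\Psi(p)+(1-p)V(m)\bigr\},\qquad V(0)=1.
\]
The first-order condition is $m\Psi'(p)=V(m)$, and this leads to a differential inequality of the form $dV/dm\ge-c\,V\,(\Psi')^{-1}(V/m)$.

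\textbf{Step 3: solving the recursion (main obstacle).} I will integrate the differential inequality and extract the constant $2(d-2)!$ together with the factor $\log\log n$. The $\log\log n$ should come from the competition between two quantities. The first is the cumulative hazard $\sum_m\bar\pi$, which must reach about $\log(1/V)\approx(d-1)\log\log n$ for the probability of no selection to be negligible. The second is the logarithmic factor $\log(1/p)$ inside $\Psi$. The delicate point is tracking the slowly varying factor $(\log(1/p))^{d-1}$ uniformly across the whole horizon: it is roughly $(\log m)^{d-1}$ for most $m$, but the binding range is $m$ near $n/(\log n)^{d-1}$ up to logarithmic corrections.

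I would first try an explicit comparison function $W(m)$ of the form $\frac{2(d-2)!-\varepsilon}{(\log m)^{d-1}\log\log m}$ for large $m$. I would verify by induction that $W$ is a subsolution of the Bellman recursion, checking both the interior minimizer and the boundary choices $p\in\{0,1\}$. Separately, I need to control the small-$m$ regime where $m\Psi(p)$ is not small, using the crude bound $1-(1-q)^m\ge c\min(1,mq)$. If the direct subsolution check fails, I will fall back on splitting the horizon into blocks on which $\log m$ is essentially constant, and solve the recursion with a frozen $\Psi$ on each block.
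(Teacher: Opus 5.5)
\textbf{There is a genuine gap in Step 2.} The deterministic problem you reduce to has value zero, so it gives no lower bound. With $V(0)=1$, stopping when no observations remain costs $0\cdot\Psi(p)=0$, so $V(1)=\min_p(1-p)=0$. Choosing $p=0$ thereafter gives $V(m)=0$ for every $m\ge1$. In words, your relaxation lets the rule wait and select $Z_n$, which can never be dominated later.

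What forbids this is the constraint you set aside in Step 1, when you noted that restricting to currently maximal observations ``only shrinks the admissible set''. Such observations are scarce near the end:
\[
\Prb\{\tau=t\}\le\Prb\{Z_t\in\Max(Z_1,\ldots,Z_t)\}=\E\bigl[(1-q(Z))^{t-1}\bigr]=O\bigl((\log n)^{d-1}/n\bigr)\qquad(t\ge n/2).
\]
This caps the absolute stopping probability, not the hazard, so it does not fit your scale-invariant recursion for $V(m)$. Without it, nothing in your formulation produces the ``binding range'' $m\approx n/(\log n)^{d-1}$ you anticipate in Step 3.

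\textbf{How the paper uses this constraint.} Put $\ell=\lfloor n/[(\log n)^{d-1}\log\log n]\rfloor$. The expected number of currently maximal arrivals in the last $\ell$ steps is $O(\ell(\log n)^{d-1}/n)=o(1)$. The paper also discards two kinds of stops:
\begin{itemize}
\item stops with $q<1/(n(\log n)^{3(d-1)})$, of probability at most $nH(1/(n(\log n)^{3(d-1)}))=o(1)$;
\item stops with $(n-s)q>1/\log\log n$, of probability $O((1-p(\mathbf r))\log\log n)=o(1)$ for any rule within a constant factor of the target.
\end{itemize}
The truncated time $\tau'$ then stops with at least $\ell$ observations remaining with probability $1-o(1)$.

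On the retained region, $\log(1/q)=\log n+O(\log\log n)$. Your bathtub step then gives
\[
\E[q(Z_s)\ind\{\tau'=s\}\mid\mathcal H_{s-1}]\ge(1-o(1))\frac{(d-1)!}{2(\log n)^{d-1}}\pi_s^2
\]
in place of $\Psi(\pi_s)$. Next, Cauchy--Schwarz on $\{\tau'\ge s\}$ (your Jensen step) and then over $s$. The telescoping bound $\sum_s\Prb\{\tau'=s\}/\sqrt{\Prb\{\tau'\ge s\}}\ge2-o(1)$ holds because the hazard of $\tau'$ is $o(1)$ at every step. Together these give
\[
1-p(\mathbf r)\ge(1-o(1))\frac{(d-1)!}{2(\log n)^{d-1}}\cdot\frac{4}{\sum_{j=\ell}^{n-1}j^{-1}}.
\]

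So the $\log\log n$ is $\log(n/\ell)\sim(d-1)\log\log n$, the logarithmic length of the window of remaining counts in which selection is feasible. It is not a competition between the no-selection term and the $\log(1/p)$ factor in $\Psi$. No Bellman subsolution is needed, and the no-selection term enters only through $\Prb\{\tau'=\infty\}=o(1)$. Your Step 3, which is in any case only a plan, would have to be rebuilt around this supply constraint.
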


\begin{proof}
It suffices to consider rules with failure probability at most
$3(d-2)!/[(\log n)^{d-1}\log\log n]$.
By the reduction in Section~\ref{sec:model}, assume the rule selects
only currently maximal observations. Then
\begin{equation}\label{eq:failure-decomposition}
1-p(\mathbf r)=\Prb\left\{\tau=\infty\right\}
+\sum_{s=1}^n\E\left[
\ind\{\tau=s\}\bigl(1-(1-q(Z_s))^{n-s}\bigr)\right].
\end{equation}
\proofstep{Choices used in the bound.}
Put $\ell=\lfloor n/[(\log n)^{d-1}\log\log n]\rfloor$.
For $s\le n-\ell$, define
\begin{equation}\label{eq:sharp-lower-q-range}
\mathcal B_s=\left\{z\in\mathbb R^d:
\frac1{n(\log n)^{3(d-1)}}\le q(z)
\le\frac1{(n-s)\log\log n}\right\}.
\end{equation}
The lower boundary excludes points that almost never appear; the
upper boundary makes the probability of later domination asymptotic to $(n-s)q(z)$.
Set $\tau'=\tau$ if $\tau\le n-\ell$ and $Z_\tau\in\mathcal B_\tau$,
and $\tau'=\infty$ otherwise. Then
\begin{equation}\label{eq:sharp-lower-retained-mass}
\begin{aligned}
\Prb\left\{\tau'=\infty\right\}
&\le(1-p(\mathbf r))
 +\sum_{s=n-\ell+1}^n\E\left[(1-q(Z_s))^{s-1}\right]
 +nH\left(\frac1{n(\log n)^{3(d-1)}}\right)
 +\frac{1-p(\mathbf r)}{1-e^{-1/\log\log n}}\\
&=O((1-p(\mathbf r))\log\log n)+o(1)=o(1).
\end{aligned}
\end{equation}
The four terms cover $\tau=\infty$, a choice in the final $\ell$ observations, and choices below and above the two boundaries of $\mathcal B_s$.
The second term is $O(\ell(\log n)^{d-1}/n)=o(1)$ by
\eqref{gauss:laplace-estimates}. For the last term, each such choice
has conditional probability at least $1-e^{-1/\log\log n}$ of later
domination, so its probability times this factor cannot exceed
$1-p(\mathbf r)$.

\proofstep{Failure at a stop.}
Let $Z$ be an independent observation. The definition of $\mathcal B_s$
and \eqref{eq:product-risk-cdf} give, for every $x\ge0$ and $s\le n-\ell$,
\begin{equation}\label{eq:sharp-lower-cdf-envelope}
\Prb\left\{Z\in\mathcal B_s,\ q(Z)\le x\right\}
\le(1+o(1))\frac{(\log n)^{d-1}}{(d-1)!}x.
\end{equation}
Indeed, $\log(1/q(z))=\log n+O(\log\log n)$ on $\mathcal B_s$;
below its lower boundary the probability is zero, and above its upper
boundary it is bounded by $H$ at that boundary.
Conditioning on the information $\mathcal H_{s-1}$ from Section~\ref{sec:model}, independence gives
\[
\Prb\left\{\tau'=s,\ q(Z_s)\le x\mid\mathcal H_{s-1}\right\}
\le\Prb\left\{Z\in\mathcal B_s,\ q(Z)\le x\right\}.
\]
Also, since $\{\tau'\ge s\}$ is determined before $Z_s$ arrives and
$\mathcal B_s\subseteq\mathcal B_{n-\ell}$,
\[
\frac{\Prb\left\{\tau'=s\right\}}{\Prb\left\{\tau'\ge s\right\}}
\le\Prb\left\{Z\in\mathcal B_s\right\}
\le\Prb\left\{Z\in\mathcal B_{n-\ell}\right\}
=O\left(\frac{(\log n)^{2(d-1)}}n\right)=o(1).
\]
Integrating the preceding conditional bound and applying
Cauchy--Schwarz gives
\begin{equation}\label{eq:sharp-lower-energy}
\begin{aligned}
1-p(\mathbf r)
&\ge(1-o(1))\sum_{s=1}^{n-\ell}(n-s)
       \E\left[q(Z_s)\ind\{\tau'=s\}\right]\\
&\ge\frac{(1-o(1))(d-1)!}{2(\log n)^{d-1}}
\sum_{s=1}^{n-\ell}
\frac{(n-s)\Prb\left\{\tau'=s\right\}^2}{\Prb\left\{\tau'\ge s\right\}}\\
&\ge\frac{(1-o(1))(d-1)!}{2(\log n)^{d-1}}
\frac{4\left(1-\sqrt{\Prb\left\{\tau'=\infty\right\}}\right)^2}
{\sum_{j=\ell}^{n-1}j^{-1}}\\
&=\frac{(2-o(1))(d-2)!}{(\log n)^{d-1}\log\log n}.
\end{aligned}
\end{equation}
The first line uses \eqref{eq:failure-decomposition} and
$mq-m^2q^2/2\le1-(1-q)^m\le mq$, since
$m q(Z_s)\le1/\log\log n$ on $\mathcal B_s$.
For the second, integrate the positive part of
$\Prb\left\{\tau'=s\mid\mathcal H_{s-1}\right\}-(1+o(1))(\log n)^{d-1}x/(d-1)!$
over $x$, then apply Cauchy--Schwarz on $\{\tau'\ge s\}$, with $0/0=0$.
The third uses Cauchy--Schwarz over $s$ and telescoping, since the
relative decrease above is $o(1)$. Finally, use
\eqref{eq:sharp-lower-retained-mass} and
$\sum_{j=\ell}^{n-1}j^{-1}\sim(d-1)\log\log n$.
All errors are uniform over rules. The marginal transformation in Section~\ref{sec:model} extends the result to every continuous product distribution.
\end{proof}

\subsection{Time-Dependent Threshold Rule}
\label{subsec:product-time-threshold}

In one-dimensional full-information best choice, the optimal value
threshold falls as time runs out \cite{gilbert1966}. We use the same
idea by allowing a larger probability $q(Z_s)$ when fewer observations
remain. First suppose that the continuous marginals are known.

For $n\ge3$ and $0\le m<n$, set
\begin{equation}\label{eq:threshold-smooth-rule}
\begin{aligned}
h=\frac{n\exp(\sqrt{\log\log n})}{(\log n)^{d-1}},
\qquad u_{m}=
\frac{2(d-1)!}
{(m+1)(\log n)^{d-1}
\log\left(1+\frac{m+1}{h}\right)}
\end{aligned}
\end{equation}
The full-information rule stops at
\[
\tau=\inf\{s\in[n]:Z_s\in\Max(Z_1,\ldots,Z_s),\
q(Z_s)\le u_{n-s}\},\qquad\inf\varnothing=\infty,
\]
and selects $Z_\tau$ when $\tau\le n$.

For $0<u<1$, in one dimension $q(z)\le u$ is the value threshold
$z\ge\Phi^{-1}(1-u)$. In two dimensions, uniform coordinates give
the boundary $(1-x)(1-y)=u_{n-s}$ when $0<u_{n-s}<1$.
For $u_{n-s}\ge1$, every point satisfies the threshold condition.
As the threshold increases, more points are allowed, but points
dominated by earlier observations are still excluded
(Figure~\ref{fig:product-threshold}). The increase near the end
provides enough chances to select. We prove that this rule matches
the leading constant in the lower bound.

\begin{figure}[tbp]
\centering

\centering


\smallskip\noindent\makebox[\textwidth][c]{\pgfplotslegendfromname{product-threshold-legend}}

\caption{Time-dependent thresholds for $n=1000$ in two dimensions, drawn in coordinates $x=\Phi(z^{(1)})$, $y=\Phi(z^{(2)})$. The left panel shows $u_{n-s}$, capped at one for display. In the right panel, the five marked thresholds give boundaries $(1-x)(1-y)=u_{n-s}$; points above and to the right qualify. The gray rectangle is excluded by an earlier observation at $(0.8,0.7)$.}
\label{fig:product-threshold}
\end{figure}

When the marginals are unknown, we estimate their upper-tail probabilities from the first $\lfloor n/2\rfloor$ observations, which the rule rejects. For a sample of size $N$ and $0\le j<N$, define
\begin{equation*}
\Prb\left\{\operatorname{Bin}(N,\overline p_N(j))\le j\right\}=n^{-2},
\qquad \overline p_N(N)=1.
\end{equation*}
This table depends only on $n$. Given $j$ observations exceeding a coordinate value, it gives an upper estimate of the probability of exceeding that value. For a later point $z$, put
\begin{equation}\label{eq:threshold-rank-score}
\widehat q(z)=\prod_{k=1}^d
\overline p_{\lfloor n/2\rfloor}
\bigl(\#\{i\le\lfloor n/2\rfloor:Z_i^{(k)}>z^{(k)}\}\bigr).
\end{equation}
The rank-based rule uses the stopping time
\begin{equation}\label{eq:estimated-threshold-rule}
\tau=\inf\{s>\lfloor n/2\rfloor: s\le n,\
Z_s\in\Max(Z_1,\ldots,Z_s),\ \widehat q(Z_s)\le u_{n-s}\},
\qquad\inf\varnothing=\infty.
\end{equation}
The initial observations remain part of the maximality comparison. The counts use only coordinate ranks, and the estimates stay fixed after the initial sample.

With probability $1-O_d(n^{-1})$, $q(z)\le\widehat q(z)$ for every $z$,
so a selection has true $q(Z_s)\le u_{n-s}$. The appendix also shows
that estimating $q$ excludes only a vanishing fraction of the points
available for selection.

\Needspace{8\baselineskip}
\subsection{Upper Bound}

\begin{theorem}\label{thm:threshold-upper-bound}
For every fixed $d\ge2$, each of the two time-dependent rules in \Cref{subsec:product-time-threshold} satisfies
\[
1-p_F(\mathbf r)
\le\frac{2(d-2)!+o_d(1)}{(\log n)^{d-1}\log\log n}
\]
for every product distribution $F$ with continuous marginals. The error term is uniform over $F$.
\end{theorem}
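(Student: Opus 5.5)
Throughout, work with $F=\mathcal N_d(0,I_d)$; the marginal transformation of Section~\ref{sec:model} makes the failure probability of both rules identical across continuous product distributions. This gives uniformity in $F$ for free: the rank rule depends only on ranks, and the known-marginal rule is invariant under the transformation. Both rules select only currently maximal points, so I split failure into later domination and no selection. By \eqref{eq:conditional-success} and $1-(1-q)^m\le mq$, later domination is bounded by
\[
\sum_{s}\E\bigl[\ind\{\tau=s\}(n-s)q(Z_s)\bigr].
\]
The plan is to write the probability of stopping at time $s$ as $\Prb\{\tau\ge s\}$ times a conditional hazard $\pi_s$. Here $\pi_s$ is the probability that a fresh point is currently maximal and has $q\le u_{n-s}$. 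The conditional mean of $q$ at a stop is then controlled through \eqref{eq:product-risk-cdf}.

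\textbf{Step 1: hazard estimates.} For a threshold $u$ with $u\ll 1/s$, domination by earlier points is negligible, so
\[
\pi_s\approx H(u)\approx \frac{u(\log(1/u))^{d-1}}{(d-1)!}.
\]
Given a stop, $q$ is roughly uniform on $[0,u]$, so $\E[q\mid\text{stop}]\approx u/2$. For larger $u$ the hazard saturates at the probability of being currently maximal, about $(\log s)^{d-1}/((d-1)!\,s)$, by the Laplace estimates \eqref{gauss:laplace-estimates}. I would prove an upper bound on $\E[\ind\{\tau=s\}(n-s)q(Z_s)\mid\mathcal H_{s-1}]$ of order $m u_m^2(\log n)^{d-1}/(2(d-1)!)$, and a lower bound on the hazard, $\pi_s\ge(1-o(1))\min\{H(u_m)$, currently-maximal mass$\}$. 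The lower bound needs care: the region excluded by earlier points, $\{q\le u\}\setminus\mathcal U_{s-1}$, must be shown to have relative mass $o(1)$ whenever $u_m\lesssim h/(mn)$-scale thresholds are relevant. I would use the bound on $\E[A_s]$ from the appendix lemmas.

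\textbf{Step 2: summation.} Set $P_m=\Prb\{\tau>n-m\}$. With $\log(1/u_m)\sim\log n$ in the relevant range, the hazard is $\approx 2/((m+1)\log(1+(m+1)/h))$. The survival probability then satisfies
\[
P_m\approx \exp\Bigl(-\sum_{j\ge m}\frac{2}{j\log(1+j/h)}\Bigr)\approx\bigl(\log(1+m/h)/\log(n/h)\bigr)^{2}\text{-type decay}.
\]
This is the $1/\log$ analogue of the one-dimensional Gilbert--Mosteller calculation. Multiplying by the per-step later-domination cost $m\pi_m u_m/2\approx 2(d-1)!/((\log n)^{d-1}m\log^2(1+m/h))$ and summing over $m$ from $h$ to $n$ gives about
\[
\frac{2(d-1)!}{(\log n)^{d-1}}\cdot\frac{1}{\log(n/h)}\cdot\frac{1}{d-1}.
\]
Since $\log(n/h)\sim(d-1)\log\log n$, this is $2(d-2)!/((\log n)^{d-1}\log\log n)$. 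The region $m\le h$, where $u_m$ is huge, contributes only through $P_h$. All failure after time $n-h$, including no selection, is bounded by $P_h$. I expect $P_h$ to be exponentially small in $\sum 2/(j\log(1+j/h))$, hence $o(1/((\log n)^{d-1}\log\log n))$ because of the factor $\exp(\sqrt{\log\log n})$ in $h$. Early times $s\le n/2$ contribute negligibly for the same reason. The saturation regime (hazard capped by the currently-maximal mass) occurs only for $m\lesssim h$ and is handled the same way.

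\textbf{Step 3: rank rule.} On the event, of probability $1-O_d(n^{-1})$, that $q\le\widehat q$ everywhere, every selection satisfies $q(Z_s)\le u_{n-s}$, so the later-domination bound from Step~2 applies verbatim. For the hazard lower bound I need $\widehat q(z)\le(1+o(1))q(z)$ for all $z$ with $q(z)\ge 1/(n\,\mathrm{polylog})$ outside a negligible set. Binomial concentration with $\lfloor n/2\rfloor$ samples gives each coordinate factor relative accuracy $O(\sqrt{\log n/(np_k)})$. The difficulty is coordinates where $p_k$ is tiny (order $1/n$) while the product $q$ is still moderate, because there the estimate is off by a constant factor. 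I would show that points with some $1-\Phi(z^{(k)})\le(\log n)^{C}/n$ that still have $q\ge u$-scale carry $o(1)$ of $H(u)$. This uses the exponential representation behind \eqref{eq:product-risk-cdf}: the mass comes mostly from balanced coordinates. I regard this uniform relative-accuracy step, together with the Step~1 lower bound on the hazard near time $n/2$ to $n-h$, as the main obstacle.
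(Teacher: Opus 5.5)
Your overall architecture matches the paper's: a conditional hazard bound $\gtrsim(\log n)^{d-1}u_m/(d-1)!$, a conditional truncated-moment bound $\lesssim(\log n)^{d-1}u_m^2/(2(d-1)!)$, survival as a product, a sum split at $m=h$, and a rank rule using $q\le\widehat q$ plus relative accuracy away from the axes. The endgame, however, fails.

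\textbf{The endgame at $m=h$.} By your own survival estimate, the hazard sum over $h\le j\le n/2$ is only about $2\log\log(n/h)$. Hence $P_h\approx(\log 2/\log(n/h))^2\asymp(\log\log n)^{-2}$, which is far larger than the target $1/((\log n)^{d-1}\log\log n)$. Bounding all failure after time $n-h$ by $P_h$ therefore proves nothing.

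The premise behind that step is also false: $u_m$ is not huge or saturated for $m\le h$. Indeed $u_h\asymp e^{-\sqrt{\log\log n}}/n$. For $m<h$, $u_m\approx 2(d-1)!\,h/(m^2(\log n)^{d-1})$, so $nu_m\approx 2(d-1)!\,(h/m)^2e^{-\sqrt{\log\log n}}$. This reaches order one only near $m\approx he^{-\frac12\sqrt{\log\log n}}$.

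The missing step is to continue the same conditional analysis into $m<h$, down to $\ell=\lceil he^{-\frac14\sqrt{\log\log n}}\rceil$. At that point $nu_\ell\asymp e^{-\frac12\sqrt{\log\log n}}$, so the threshold is still inside the admissible band. On $[\ell,h)$ the hazard sum from $m$ to $h$ is about $2h/m$, so $S_m\lesssim e^{-h/(2m)}S_h$. The later-domination terms there total $O(S_h(\log n)^{-(d-1)})$, since $\sum_m e^{-h/(2m)}h^2/m^3=O(1)$, which is of lower order. Finally $S_\ell\le\exp(-c\,e^{\frac14\sqrt{\log\log n}})=o(1/((\log n)^{d-1}\log\log n))$, and only this quantity may bound all failure after time $n-\ell$. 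This is the actual role of the factor $e^{\sqrt{\log\log n}}$ in $h$: it lets $h/\ell$ grow faster than $\log\log n$ while $u_\ell$ stays $o(1/n)$.

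\textbf{The hazard lower bound.} This step needs a sharper tool than bounds on $\E[A_s]$. Survival is a product of conditional continuation probabilities, so you need
$\Prb\{\tau=s\mid\mathcal H_{s-1}\}\ge(1-\varepsilon)(\log n)^{d-1}u_m/(d-1)!$
at every history outside a set of probability $o(1/((\log n)^{d-1}\log\log n))$. Averages do not suffice. The mass of $\{q\le u\}$ dominated by earlier points has mean at most $s\E[q\ind\{q\le u\}]\approx\tfrac12 su\,H(u)$, a relative error of order $e^{-c\sqrt{\log\log n}}$. Applying Markov's inequality, or letting this error accumulate additively in the survival recursion, leaves errors of that order in $S_\ell$, far above the target.

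The paper's bound is pathwise, and rests on three facts:
\begin{itemize}
\item A point with $q\le u$ can be dominated only by an earlier point with $q(Z_i)\le u$.
\item Outside negligible histories, there are at most $(\log n)^{d-1}e^{-\frac14\sqrt{\log\log n}}$ such earlier points. All earlier points also satisfy a norm bound, which gives $\log(u/q(Z_i))=O(\log\log n)$.
\item Each such point dominates at most $u\,(\log(u/q(Z_i)))^{d-1}/(d-1)!$ of $\{q\le u\}$. In the product case this is a short computation in the exponential coordinates $-\log(1-\Phi(z^{(k)}))$; in general it is the covered-probability lemma.
\end{itemize}
Summing gives excluded mass $o(u(\log n)^{d-1})$ uniformly over non-exceptional histories, which is what the survival product requires. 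Your plan for the rank rule's relative accuracy is otherwise in line with the paper's.
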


\begin{proof}
We work with $F=\mathcal N_d(0,I_d)$ and sum the contributions to
failure over time. Each contribution is weighted by the probability
that the rule has not yet stopped; this probability also bounds all
failure from the final few observations.

\proofstep{Selection and continuation.}
Fix $0<\varepsilon<1/4$ and put
\[
\ell=\left\lceil h\exp\left(-\tfrac14\sqrt{\log\log n}\right)\right\rceil.
\]
We analyze decisions with $\ell\le m\le\lceil n/2\rceil-1$;
all failure after these decisions will be bounded by the probability
of reaching time $n-\ell$.
Substituting the endpoints into \eqref{eq:threshold-smooth-rule}, the smallest threshold has order
$1/[n(\log n)^{d-1}\log\log n]$, and the largest has order
$e^{-\sqrt{\log\log n}/2}/n$. Thus they lie strictly within the range
of Corollary~\ref{lem:threshold-common-inputs}, which gives
\begin{equation}\label{eq:threshold-conditional-inputs}
\begin{aligned}
\Prb\left\{\tau=s\mid\mathcal H_{s-1}\right\}
&\ge(1-\varepsilon)\frac{(\log n)^{d-1}}{(d-1)!}u_m,\\
\E\left[q(Z_s)\ind\{\tau=s\}\mid\mathcal H_{s-1}\right]
&\le(1+\varepsilon)\frac{(\log n)^{d-1}}{2(d-1)!}u_m^2.
\end{aligned}
\end{equation}
for both rules on $\{\tau\ge s\}$, except at histories whose total
probability is $o(1/[(\log n)^{d-1}\log\log n])$.
Let $S_m$ be the probability of reaching time $n-m$ without selecting
or encountering such a history. Each exception is determined by the observations already seen,
so these continuation probabilities use no future information.
We include the probability of any exception in the final error term. Multiplying the continuation bounds gives
\begin{equation}\label{eq:threshold-reach-bounds}
\begin{aligned}
S_m
&\le\exp\left\{-2(1-\varepsilon)
\sum_{j=m+1}^{\lceil n/2\rceil-1}
\frac1{(j+1)\log(1+(j+1)/h)}\right\}\\
&\le\begin{cases}
(1+o(1))\left(\dfrac{\log(1+m/h)}{\log(n/h)}\right)^{2-2\varepsilon},
 &m\ge h,\\[6pt]
\dfrac{c_\varepsilon e^{-h/(2m)}}{[\log(n/h)]^{2-2\varepsilon}},
 &\ell\le m<h.
\end{cases}
\end{aligned}
\end{equation}
The integral comparison for $m\ge h$ uses $1/[x\log(1+x/h)]\ge(d/dx)\log\log(1+x/h)$.
Below $h$, use $\log(1+x/h)\le x/h$ and integrate $h/x^2$.
Endpoint errors are $o(1)$ since $\ell\to\infty$ and $h/\ell^2\to0$.
In particular, $h/\ell\sim e^{\sqrt{\log\log n}/4}$ implies
$S_\ell=o(1/[(\log n)^{d-1}\log\log n])$.

\proofstep{Summing the failure probabilities.}
The failure decomposition, $1-(1-q)^m\le mq$, and
\eqref{eq:threshold-conditional-inputs} give
\begin{equation}\label{eq:threshold-time-conclusion}
\begin{aligned}
1-p_F(\mathbf r)
&\le S_\ell+
 \sum_{s\le\lfloor n/2\rfloor}n\E\left[q(Z_s)\ind\{q(Z_s)\le u_{n-s}\}\right]\\
&\quad+\frac{(1+\varepsilon)(\log n)^{d-1}}{2(d-1)!}
 \sum_{m=\ell}^{\lceil n/2\rceil-1}S_m m u_m^2
 +o\!\left(\frac1{(\log n)^{d-1}\log\log n}\right)\\
&\le\frac{2(1+\varepsilon)(d-1)!}{(\log n)^{d-1}}
 \sum_{m=\ell}^{\lceil n/2\rceil-1}
 \frac{S_m}{m[\log(1+m/h)]^2}
 +o\!\left(\frac1{(\log n)^{d-1}\log\log n}\right)\\
&\le\frac{(2(1+\varepsilon)+o(1))(d-2)!}
 {(1-2\varepsilon)(\log n)^{d-1}\log\log n}.
\end{aligned}
\end{equation}
The first bound uses \eqref{eq:failure-decomposition} to separate
failure from time $n-\ell$ onward, initial selections, and intermediate
decisions; its error covers exceptional histories. The initial sum is absent for the rank-based rule and is
$O_d(1/[(\log n)^{d-1}(\log\log n)^2])$ for the full-information rule
by \eqref{gauss:patch-sharp-moments}. This and the bound on $S_\ell$
justify the second inequality after substituting $u_m$.
For the last, apply \eqref{eq:threshold-reach-bounds} and integral
comparison, splitting at $m=h$. The terms with $m<h$ are of smaller
order, and $\log(n/h)\sim(d-1)\log\log n$.
Let $\varepsilon\downarrow0$. Increasing marginal transformations
preserve both rules' decisions, so the bounds hold uniformly over all
continuous product distributions.
\end{proof}

\begin{proof}[Proof of \Cref{opt:thm:optimal-rate}]
\Cref{thm:general-d-fi-lower-bound} gives the lower bound on the failure probability of every full-information rule, and hence of every rank-based rule. \Cref{thm:threshold-upper-bound} gives the matching upper bound for a rule of each type. Together they imply \eqref{opt:eq:optimal-rate}, and both time-dependent rules achieve it.
\end{proof}

\section{Arbitrary Dependence}\label{sec:arbitrary-dependence}

Independent coordinates allow the probability of selecting a Pareto-maximal point to approach one. A joint density alone does not imply this result, even when every marginal is uniform. We construct a fixed dependent distribution for which the optimal probability of selecting a maximal observation with full information approaches the one-dimensional optimal probability.

Let $v_n$ denote the optimal probability of selecting the maximum of $n$ i.i.d.\ uniform observations on $(0,1)$ when their values are observed. The same value applies to every known continuous one-dimensional distribution. This is the full-information problem of Gilbert and Mosteller~\cite{gilbert1966}; its limiting value is
\begin{equation}\label{dep:eq:classical-limit}
 v_n\longrightarrow v_\infty=0.580164\ldots;
\end{equation}
see Gnedin, Kozie\l{}, and Sulkowska~\cite[Section~3]{gnedin2023} for a modern derivation. We use the full-information value $p_n^\star(F)$ from Section~\ref{sec:model}, allowing private randomization in the stopping rule.

\begin{theorem}\label{dep:thm:main}
Fix $d\ge2$. For every distribution $F$ having a joint density on $\mathbb R^d$ and every $n\ge1$,
\begin{equation}\label{dep:eq:universal-lower}
 p_n^\star(F)\ge v_n.
\end{equation}
There exists a fixed distribution $F_d$ on $(0,1)^d$, with a joint density and uniform marginals, such that, for every $n\ge2$,
\begin{equation}\label{dep:eq:sandwich}
 v_n\le p_n^\star(F_d)\le v_n+\frac4{n+1}.
\end{equation}
Thus, $p_n^\star(F_d)\to v_\infty$.
\end{theorem}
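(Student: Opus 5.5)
The lower bound \eqref{dep:eq:universal-lower} reduces to one dimension. Given $F$ with a joint density, the first coordinate $Z^{(1)}$ has a continuous distribution. Run the optimal Gilbert--Mosteller rule on the first coordinates, which selects the maximum of $Z_1^{(1)},\ldots,Z_n^{(1)}$ with probability $v_n$. A density makes ties in the first coordinate occur with probability zero. So the first-coordinate maximum $Z_i$ cannot be dominated, because any dominating point would need $Z_j^{(1)}\ge Z_i^{(1)}$, and hence equality. Thus $Z_i$ is Pareto maximal, which gives $p_n^\star(F)\ge v_n$.

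For the upper bound, I would build $F_d$ so that, with high probability, the only maximal point is the first-coordinate maximum. I would concentrate mass near the diagonal with a dyadic-block construction. Partition $(0,1)$ into intervals $I_k=[1-2^{-k+1},1-2^{-k})$. Put the density uniformly on the product block $I_k^d$, scaled so that $I_k^d$ has mass $|I_k|=2^{-k}$. Each marginal is then uniform, and the distribution has a joint density. Within a block the coordinates are independent uniforms, so several points can be incomparable only when they share the top occupied block.

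\emph{Case analysis.} A maximal point must lie in the highest occupied block $K$, since every point there dominates every point in lower blocks. Let $N_K$ be the number of observations in block $K$. If $N_K=1$, the frontier is a single point. This point is also the first-coordinate maximum and is recognizable online as ``the current first-coordinate record that will remain the maximum.''

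I would bound the success of any rule by $v_n+\Prb\{N_K\ge2\}$. The coupling argument goes as follows. Replace each observation by its first coordinate, and note that on $\{N_K=1\}$ success means selecting the unique first-coordinate maximum. The extra coordinates give additional information about future observations only through the block index, which is already a function of the first coordinate. Within a block, the other coordinates are independent of everything else. A rule for $F_d$ can therefore be simulated by a one-dimensional rule with private randomization that generates the other coordinates itself. This simulated rule then has success at most $v_n$ at selecting the first-coordinate maximum. This step uses the paper's allowance of randomized rules.

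It remains to show $\Prb\{N_K\ge2\}\le 4/(n+1)$. The top block behaves like the maximum of geometric-type variables, and the probability that the top block holds at least two points stays bounded away from zero, not of order $1/n$. That is the \emph{main obstacle}: dyadic blocks are too coarse. I would instead make the blocks scale with the quantile so that ties in the block index are rare. The aim is a construction in which, for $U$ the first-coordinate uniform, the other coordinates are a deterministic increasing function of $U$ plus a small independent perturbation. The perturbation must be small enough that a non-maximal-first-coordinate point dominates nobody above it except with probability $O(1/n)$.

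A concrete choice is $Z^{(k)}=U$ perturbed within a vanishing window around $U$. One option is a mixture over scales $2^{-j}$ with weights chosen so that, for every $n$, the chance that the top two order statistics of $U$ are within the local window is at most about $4/(n+1)$. The expected gap between the top two uniforms is $1/(n+1)$. A window of width $w(u)=(1-u)^2$ gives $\Prb\le\E[n\,w]$ of order $1/n$. Tuning the constants to reach $4/(n+1)$ is the delicate calculation. Off this event the frontier is the singleton first-coordinate maximum, and the simulation argument above completes \eqref{dep:eq:sandwich}.
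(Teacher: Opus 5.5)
Your lower bound, your reduction of the upper bound to $v_n+\Prb\{N_K\ge2\}$, and your simulation of the remaining coordinates with private randomness all match the paper's argument. You also correctly diagnose why dyadic blocks fail: the index of the top occupied block behaves like a maximum of geometric variables, and its tie probability does not vanish.

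The gap is that the proposal never exhibits $F_d$ and never proves $\Prb\{N_K\ge2\}\le4/(n+1)$.
\begin{itemize}
\item The final construction (``a mixture over scales $2^{-j}$ with weights chosen so that\ldots'') is not specified.
\item For a sliding-window perturbation, you check neither uniform marginals nor the joint density.
\item You do not show that, outside the collision event, the first-coordinate maximum dominates every other observation.
\item The bound itself is deferred as ``the delicate calculation.''
\end{itemize}
This is the only step of the upper bound that depends on the choice of distribution, so the upper bound in the theorem is not established.

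The missing idea is to keep your block structure and only shrink the blocks. Each block's length should be of order the square of the probability above it, rather than comparable to it. Take $I_k=(1-1/k,1-1/(k+1))$ and put mass $|I_k|=1/[k(k+1)]$ uniformly on $I_k^d$. Then the marginals are uniform, there is a joint density, every point of a higher cube dominates every point of a lower one, and $\Prb\{J\ge k\}=1/k$. For $u\in I_k$ with lower endpoint $a(u)$, we have $u-a(u)\le1/[k(k+1)]\le2(1-u)^2$. Conditional on the first-coordinate maximum being $u$, the other $n-1$ first coordinates are i.i.d.\ uniform on $(0,u)$. The top cube holds two points exactly when one of them falls in $[a(u),u)$, so
\[
\begin{aligned}
\Prb\{N_K\ge2\}
&=n\int_0^1\Bigl[1-\bigl(a(u)/u\bigr)^{n-1}\Bigr]u^{n-1}\,du
\le n(n-1)\int_0^1(u-a(u))u^{n-2}\,du\\
&\le2n(n-1)\int_0^1(1-u)^2u^{n-2}\,du=\frac4{n+1}.
\end{aligned}
\]
This is your window heuristic with $w(u)=2(1-u)^2$, made rigorous. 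The block structure gives uniform marginals and the clean dominance property for free, which a sliding window would not.
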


The theorem identifies $v_\infty$ as the sharp asymptotic worst-case constant over fixed distributions having a joint density, including those with uniform marginals. The distribution in this example does not depend on the horizon. Its support consists of cubes ordered so that every point in a higher cube dominates every point in a lower cube. With probability tending to one, the highest cube containing any observations contains exactly one. Selecting a maximal point then means selecting the largest first coordinate. The proof bounds the probability that this cube contains more than one observation and shows that the other coordinates do not help identify the largest first coordinate.

\subsection{A Dependent Distribution}
\label{dep:sec:construction}

We partition $(0,1)$ into successive intervals approaching $1$, apart from their endpoints. Each interval has length equal to the probability assigned to the corresponding cube; this choice makes every marginal uniform.

\begin{construction}\label{dep:construction}
For each integer $k\ge1$, set
\[
 I_k=\left(1-\frac1k,\,1-\frac1{k+1}\right),\qquad
 Q_k=I_k^d.
\]
Draw an index $J$ with $\Prb\left\{J=k\right\}=|I_k|=1/[k(k+1)]$ and, conditional on $J=k$, draw the $d$ coordinates independently and uniformly from $I_k$. Denote the resulting distribution by $F_d$.
\end{construction}

Conditional on $J=k$, the observation is uniform on $Q_k$, so $F_d$ has density
\begin{equation*}
 g_d(z)=\sum_{k=1}^\infty |I_k|^{1-d}\ind\{z\in Q_k\}.
\end{equation*}
Integrating over all but any one coordinate gives marginal density
$|I_k|^{1-d}|I_k|^{d-1}=1$ on $I_k$. Hence every marginal is uniform.

If $\ell>k$, every coordinate of every point in $Q_\ell$ exceeds the corresponding coordinate of every point in $Q_k$. Thus the cubes are ordered so that every point in a higher cube dominates every point in a lower cube, although points within a cube can be incomparable. Figure~\ref{dep:fig:cubes} illustrates the construction in dimension two. For example, $Q_1=(0,1/2)^2$ has probability $1/2$ and $Q_2=(1/2,2/3)^2$ has probability $1/6$; every observation in $Q_2$ dominates every observation in $Q_1$. The same ordering holds in every dimension.

\begin{figure}[tbp]
\centering
\begin{tikzpicture}[x=4.1cm,y=4.1cm,font=\small,>=Stealth]
\draw[->,ParetoGray] (0,0)--(1.09,0) node[right] {$x$};
\draw[->,ParetoGray] (0,0)--(0,1.09) node[above] {$y$};
\draw[ParetoGray!45] (1,0)--(1,1)--(0,1);
\foreach \k in {1,...,9} {
 \pgfmathsetmacro{\lo}{1-1/\k}
 \pgfmathsetmacro{\hi}{1-1/(\k+1)}
 \filldraw[fill=ParetoBlue!15,draw=ParetoBlue] (\lo,\lo) rectangle (\hi,\hi);
}
\node at (.25,.25) {$Q_1$};
\draw[ParetoOrange,dashed,thick] (.5,.5) rectangle (1,1);
\node[below left] at (0,0) {$0$};
\node[below] at (1,0) {$1$};
\node[left] at (0,1) {$1$};
\node at (.5,-.23) {(a) The ordered support};
\draw[->,ParetoOrange] (1.02,.74)--(1.34,.74) node[midway,above,font=\footnotesize] {zoom};
\begin{scope}[xshift=6.1cm]
 \draw[->,ParetoGray] (0,0)--(1.09,0) node[right] {$x$};
 \draw[->,ParetoGray] (0,0)--(0,1.09) node[above] {$y$};
 \draw[ParetoGray!45] (1,0)--(1,1)--(0,1);
 \foreach \k in {2,...,10} {
  \pgfmathsetmacro{\lo}{2*(1-1/\k-.5)}
  \pgfmathsetmacro{\hi}{2*(1-1/(\k+1)-.5)}
  \filldraw[fill=ParetoBlue!15,draw=ParetoBlue] (\lo,\lo) rectangle (\hi,\hi);
 }
 \node at (1/6,1/6) {$Q_2$};
 \node[font=\footnotesize] at (5/12,5/12) {$Q_3$};
 \draw[->,ParetoBlue,thick] (.84,.84)--(.99,.99);
 \node[below left] at (0,0) {$\tfrac12$};
 \node[below] at (1,0) {$1$};
 \node[left] at (0,1) {$1$};
 \node at (.5,-.23) {(b) Successively smaller upper cubes};
\end{scope}
\end{tikzpicture}
\caption{The two-dimensional support, with a magnified view. Every point in a higher square dominates every point in a lower square. The squares continue toward $(1,1)$.}
\label{dep:fig:cubes}
\end{figure}

The probabilities satisfy $\Prb\left\{J\ge k\right\}=1/k$, whereas an individual cube has mass $|I_k|=1/[k(k+1)]$. Far along the chain, the probability of a particular cube is small compared with the mass above it. This makes two observations in the highest occupied cube unlikely.

\subsection{One-Dimensional Comparison}
\label{dep:sec:comparison}

Let $Z_1,\ldots,Z_n$ be i.i.d.\ from $F_d$, and write $X_i=Z_i^{(1)}$ for the first coordinate. These first coordinates are i.i.d.\ uniform on $(0,1)$. Let $T_n$ be the event that the highest occupied cube contains at least two observations. On $T_n^c$, its only observation dominates all the others and is also the unique first-coordinate maximum. Thus selection of a maximal vector on $T_n^c$ must select the one-dimensional maximum.

\begin{lemma}\label{dep:lem:collision}
For every $n\ge2$, the ordered-cube sample satisfies
\[
 \Prb\left\{T_n\right\}\le\frac4{n+1}.
\]
\end{lemma}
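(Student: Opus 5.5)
Since the cube indices $J_1,\ldots,J_n$ of the observations are i.i.d.\ with $\Prb\{J\le k\}=a_k:=k/(k+1)$, the event $T_n$ depends only on these indices. I would compute $\Prb\{T_n\}$ exactly by summing over the value $k$ of the highest occupied index. Writing $p_k=a_k-a_{k-1}=1/[k(k+1)]$ (with $a_0=0$), the maximum equals $k$ with probability $a_k^n-a_{k-1}^n$. The maximum equals $k$ and is attained exactly once with probability $np_ka_{k-1}^{n-1}$. Hence
\[
\Prb\{T_n\}=\sum_{k\ge1}\Bigl(a_k^n-a_{k-1}^n-n(a_k-a_{k-1})a_{k-1}^{n-1}\Bigr).
\]

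Each summand is the second-order Taylor remainder of $x\mapsto x^n$ at $y=a_{k-1}$, evaluated at $x=a_k$. It is therefore at most $\binom n2(x-y)^2x^{n-2}$, which gives
\[
\Prb\{T_n\}\le\binom n2\sum_{k\ge1}p_k^2\,a_k^{n-2}.
\]
The key structural identity is $p_k=(1-a_k)(1-a_{k-1})$, since $1/[k(k+1)]=\tfrac1{k+1}\cdot\tfrac1k$. Using it, each term $p_k\cdot(1-a_k)(1-a_{k-1})a_k^{n-2}$ looks like a Riemann-sum contribution to
\[
\int_0^1(1-a)^2a^{n-2}\,da=\frac{2}{(n-1)n(n+1)}.
\]
This integral gives exactly the order $\binom n2\cdot\frac{2}{(n-1)n(n+1)}=\frac1{n+1}$. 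So the target $4/(n+1)$ leaves a constant factor of $4$ as slack for the discretization.

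The main obstacle is the comparison between the sum and the integral, because $(1-a)^2a^{n-2}$ is not monotone. Comparing on the interval $[a_{k-1},a_k]$ fails, since $a_k^{n-2}$ exceeds $a^{n-2}$ there. I would instead compare term $k$ with the next interval $[a_k,a_{k+1}]$, on which $a^{n-2}\ge a_k^{n-2}$ and $1-a\ge1/(k+2)$. The length ratio is $p_k/p_{k+1}=(k+2)/k$. The factor ratio is $(1-a_k)(1-a_{k-1})/(1-a)^2\le (k+2)^2/[k(k+1)]$. Both ratios tend to $1$ but are large for small $k$. I would therefore treat the first term or two ($k=1,2$) separately by direct computation: the $k=1$ term vanishes because $a_0=0$, and the $k=2$ term is $O(n^2(2/3)^{n})$. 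For $k\ge3$ the combined constant is close to $1$. If the resulting constant does not quite reach $4$ for small $n$, I would check those finitely many $n$ (for instance $n\le 10$) directly, using $\Prb\{T_n\}\le1$, which already gives the bound whenever $4/(n+1)\ge1$, that is, for $n\le3$.
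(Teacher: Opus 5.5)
Your route differs from the paper's. It is sound in outline, but as written it is not finished. The exact sum over the highest occupied index is correct. Your identity $p_k=(1-a_k)(1-a_{k-1})$ is the discrete form of the paper's key pointwise inequality.

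The paper instead conditions on the largest first coordinate $U=u$ and writes $\Prb\{T_n\}=n\int_0^1[1-(a(u)/u)^{n-1}]u^{n-1}\,du$. It linearizes with $1-x^{n-1}\le(n-1)(1-x)$ and then uses $u-a(u)\le 1/[k(k+1)]\le 2(1-u)^2$ on $I_k$. So $2n(n-1)\int_0^1(1-u)^2u^{n-2}\,du=4/(n+1)$ appears at once, for every $n\ge2$, with no discretization.

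You lose this because the Lagrange bound $\xi^{n-2}\le a_k^{n-2}$ moves the weight to the right endpoint. That is the only source of your Riemann-sum difficulty. Keep the remainder in integral form:
\[
a_k^n-a_{k-1}^n-np_ka_{k-1}^{n-1}=n(n-1)\int_{a_{k-1}}^{a_k}(a_k-t)\,t^{n-2}\,dt .
\]
For $t\in[a_{k-1},a_k]$ you have $a_k-t\le p_k=(1-a_k)(1-a_{k-1})\le 2(1-t)^2$. This holds because $1-a_k\le 1-t$ and $1-a_{k-1}=1/k\le 2/(k+1)=2(1-a_k)\le 2(1-t)$. Summing over $k$ gives $\Prb\{T_n\}\le 2n(n-1)\int_0^1(1-t)^2t^{n-2}\,dt=4/(n+1)$ for all $n\ge2$ simultaneously. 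This is the mirror image of the paper's computation, using the distance to the upper rather than the lower endpoint of the interval.

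Your shifted comparison also contains two inaccuracies that leave the argument incomplete.

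\begin{itemize}
\item The $k=1$ term does not vanish. Only the subtracted pieces vanish, and its contribution is $2^{-n}$ exactly, or $\binom n2 2^{-n}$ after your Taylor bound.
\item For $k\ge3$ the combined constant $(k+2)^3/[k^2(k+1)]$ is $125/36\approx3.47$ at $k=3$, not close to $1$. It is below $4$, but the leftover slack $\tfrac{19}{36(n+1)}$ must then absorb the $k=1,2$ terms. With the full integral $\int_0^1$ it does not for small $n$: at $n=10$ your three pieces sum to about $0.3654>4/11\approx0.3636$.
\end{itemize}

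You could repair this by keeping the truncated integral over $[a_3,1)$ and checking $4\le n\le 10$ explicitly. The integral-remainder argument above makes that case analysis unnecessary.
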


\begin{proof}
The event $T_n$ occurs exactly when the two largest first coordinates lie in the same interval. Let $U=\max_{i\le n}X_i$, and for $u\in I_k$ let $a(u)=1-1/k$ be the lower endpoint of that interval. Conditional on $U=u$ and on which observation has this maximum, the other $n-1$ observations are independent and uniform on $(0,u)$. Since $u-a(u)\le1/[k(k+1)]\le2(1-u)^2$ for $u\in I_k$,
\[
\begin{aligned}
\Prb\left\{T_n\right\}
&=n\int_0^1\left[1-\left(\frac{a(u)}u\right)^{n-1}\right]u^{n-1}\,du\\
&\le n(n-1)\int_0^1(u-a(u))u^{n-2}\,du\\
&\le2n(n-1)\int_0^1(1-u)^2u^{n-2}\,du
=\frac4{n+1}.
\end{aligned}
\]
The equality conditions on the maximum; the first inequality uses
$1-x^{n-1}\le(n-1)(1-x)$.
\end{proof}

A vector rule observes more than the first coordinate. For the distribution $F_d$, however, the remaining coordinates can be generated from the first coordinate as each observation arrives. This gives a one-dimensional rule that makes exactly the same decisions, and lets us use the one-dimensional upper bound. A related comparison appears in the one-dimensional problem with ties. Gnedin, Kozie\l{}, and Sulkowska~\cite[Proposition~2]{gnedin2023} bound the gain over continuous best choice by the probability of a tie for first place. Here the event to exclude is that the highest occupied cube contains at least two observations. The simulation also reproduces the coordinates within that cube.

\begin{lemma}\label{dep:lem:simulation}
For every $n\ge1$ and every possibly randomized stopping rule $\tau$ under $F_d$,
\[
 \Prb\left\{\tau\le n,\ X_\tau=\max_{i\le n}X_i\right\}\le v_n.
\]
\end{lemma}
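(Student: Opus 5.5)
The plan is to turn any stopping rule under $F_d$ into a one-dimensional rule that observes only the first coordinates $X_1,\ldots,X_n$ and makes exactly the same decisions in distribution. Under $F_d$, given $X_i=x$ with $x\in I_k$, the event $\{J_i=k\}$ is determined by $x$. The remaining coordinates $Z_i^{(2)},\ldots,Z_i^{(d)}$ are then i.i.d.\ uniform on $I_k$ and independent of $x$'s position inside $I_k$. They are also independent of all other observations.

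The one-dimensional rule works as follows. When $X_i=x$ arrives, it computes the index $k$ with $x\in I_k$. It draws $d-1$ fresh independent uniforms on $I_k$ using private randomization, and sets $\tilde Z_i=(x,\tilde Z_i^{(2)},\ldots,\tilde Z_i^{(d)})$. It then applies the original rule $\tau$ (with its own randomization $U_t$) to $\tilde Z_1,\ldots,\tilde Z_t$ and stops whenever $\tau$ would stop.

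The key step is checking that $(X_i,\tilde Z_i)_{i\le n}$ has the same joint law as $(X_i,Z_i)_{i\le n}$ under $F_d^{\otimes n}$. To see this, verify that the first coordinate of $F_d$ is uniform on $(0,1)$, as already shown. Then verify that the conditional law of the other coordinates given $Z^{(1)}=x$ is uniform on $I_{k(x)}^{d-1}$. This follows from the product density $|I_k|^{-d}$ on $Q_k$ after dividing by the uniform marginal density $1$. Independence across $i$ and from the $U_t$ is by construction.

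Because the rule is built observation by observation, the simulated rule is a legitimate, possibly randomized, online stopping rule for i.i.d.\ uniform $X_i$. The event $\{\tau\le n,\ X_\tau=\max_i X_i\}$ depends only on the joint law of $(X,\tilde Z,U)$. Its probability therefore equals the corresponding probability under $F_d$.

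It remains to bound this probability by $v_n$. Here $v_n$ is defined as the optimal probability of selecting the maximum of $n$ i.i.d.\ uniforms with full information. The main subtle point is that the simulated rule is randomized, while the optimum $v_n$ may be stated over deterministic rules. I would handle this by conditioning on the private randomness, meaning the auxiliary uniforms and the $U_t$. For each fixed realization, the rule is a deterministic rule of $X_1,\ldots,X_t$, and its success probability is at most $v_n$ by the backward-induction optimality of Gilbert--Mosteller. Averaging over the randomness gives the bound. This step is routine via Fubini, since the private randomness is independent of $X$.
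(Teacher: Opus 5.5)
Your proposal is correct and follows the paper's proof: simulate the other $d-1$ coordinates uniformly on $I_{k(x)}$ from fresh private randomness, feed the simulated vectors and an independent copy of the rule's randomization to $\tau$, and conclude that randomization cannot beat the one-dimensional optimum $v_n$; your Fubini argument just makes that last step explicit. One small slip: the joint density of $F_d$ on $Q_k$ is $|I_k|^{1-d}$, not $|I_k|^{-d}$ (that is the density given $J=k$), but the uniform conditional law you deduce is right and also follows from your first paragraph.
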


\begin{proof}
Given an arriving one-dimensional observation $X_i=x\in I_k$, generate $d-1$ independent uniform values in $I_k$ using fresh private randomness. Together with $x$, these values have exactly the conditional distribution of $Z_i$ given its first coordinate. Feed the simulated vector to $\tau$, using an independent random seed to reproduce any private randomization in that rule.

This is a valid one-dimensional stopping rule with the same joint distribution of observations and decisions as the vector rule. Its probability of selecting the largest first coordinate is therefore at most $v_n$, since independent randomization cannot improve the one-dimensional optimum.
\end{proof}

\begin{proof}[Proof of Theorem~\ref{dep:thm:main}]
For a distribution $F$ having a joint density, the first marginal is continuous. A stopping rule can ignore the other coordinates and apply the optimal one-dimensional rule for this known marginal. With probability $v_n$, it selects the unique first-coordinate maximum. That observation is necessarily Pareto maximal, proving~\eqref{dep:eq:universal-lower}.

Construction~\ref{dep:construction} gives a single distribution $F_d$ with a joint density and uniform marginals, independently of $n$. Outside $T_n$, a Pareto-maximal selection must be the first-coordinate maximum. Thus every vector stopping rule $\tau$ satisfies
\begin{align*}
 &\Prb\left\{\tau\le n,\ Z_\tau\in\Max(Z_1,\ldots,Z_n)\right\}\\
 &\qquad\le
   \Prb\left\{\tau\le n,\ X_\tau=\max_{i\le n}X_i\right\}
   +\Prb\left\{T_n\right\}
 \le v_n+\frac4{n+1},
\end{align*}
by Lemmas~\ref{dep:lem:collision} and~\ref{dep:lem:simulation}. Taking the supremum over all rules proves the upper bound in~\eqref{dep:eq:sandwich}; its lower bound follows from~\eqref{dep:eq:universal-lower}. Finally, \eqref{dep:eq:classical-limit} gives $p_n^\star(F_d)\to v_\infty$.
\end{proof}

Within each cube, the distribution has a $d$-dimensional density. The cubes are nevertheless ordered so that every point in a higher cube dominates every point in a lower one. With probability $1-O(1/n)$, the highest occupied cube contains exactly one observation, so there is only one maximal observation. A positive definite covariance matrix does not prevent this behavior. Indeed, if the variance in a nonzero direction were zero, the distribution would lie on a hyperplane, which is impossible for a distribution with a joint density. Since $F_d$ has bounded support, its covariance matrix exists and is positive definite.

\section{Gaussian Dependence}\label{sec:gaussian}
\label{gauss:cone-stopping-section}

Arbitrary dependence can keep the optimal selection probability below
one, as Section~\ref{sec:arbitrary-dependence} shows. We now extend
the results under independence to Gaussian dependence. We first
transform the observations into independent standard Gaussian variables,
so that the region dominating each point becomes a cone with its vertex
at that point (Section~\ref{gauss:cone-representation}). We then extend
the fixed-cutoff and optimal failure results from independent coordinates,
with the rank-based rule also estimating correlations
(Sections~\ref{gauss:fixed-cutoffs} and~\ref{gauss:optimal-stopping}).
We next determine the optimal failure asymptotics for singular Gaussian
distributions, including cases where every observation is maximal or
the problem reduces to one-dimensional selection
(Section~\ref{gauss:singular-gaussian}). Finally, we relate the failure
probabilities to the expected frontier size
(Section~\ref{gauss:comparison-independent}).

Throughout the first three subsections, let $F=\mathcal N_d(0,\Sigma)$,
where $d\ge2$ and the positive definite correlation matrix $\Sigma$
are fixed. Subtracting each coordinate's mean and dividing by its
standard deviation preserves all coordinate comparisons, so this
includes every nonsingular Gaussian distribution.
We continue to use $q$, $\mathcal U_s$, $A_s=F(\mathcal U_s)$, and
$m=n-s$ as before. We also use $p_n(s)$ and $p(\mathbf r)$ for the
probabilities of selecting a maximal observation under the present
distribution $F$.

\subsection{Cone Representation}
\label{gauss:cone-representation}

In two dimensions, a linear change of coordinates turns the elliptical Gaussian density contours into circles. The same change turns the upper-right region above each point into a wedge. Thus the distribution becomes standard Gaussian, but the comparison between points changes. In dimension $d$, write
\[
X_s=\Sigma^{-1/2}Z_s\sim \mathcal N_d(0,I_d),\qquad
C=\{\Sigma^{-1/2}y:y\in\mathbb R_+^d\}.
\]
Here $\mathbb R_+^d=[0,\infty)^d$, $\Sigma^{1/2}$ is the symmetric
positive definite square root of $\Sigma$, and $\Sigma^{-1/2}$
is its inverse. The cone $C$ contains all transformed displacements that do not decrease any original coordinate. Following Section~\ref{sec:model}, define
\[
x\preceq_C x'\quad\Longleftrightarrow\quad x'-x\in C,
\qquad
x\prec_C x'\quad\Longleftrightarrow\quad
x\preceq_C x'\ \text{and }x\ne x'.
\]
Then $Z_s\prec Z_i$ if and only if $X_s\prec_C X_i$. For $x=\Sigma^{-1/2}z$ and an independent observation $Z=\Sigma^{1/2}X\sim F$, define
\[
\begin{aligned}
q_C(x)&:=\Prb\left\{x\prec_C X\right\}=\Prb\left\{X\in x+C\right\}\\
&=(2\pi)^{-d/2}\int_C e^{-\|x+y\|^2/2}\,dy=q_F(z).
\end{aligned}
\]
Here $\|\cdot\|$ is the Euclidean norm, $x+C=\{x+y:y\in C\}$, and including $x$ does not change the probability because $\Prb\left\{X=x\right\}=0$. We use $q_C$ for estimates in these coordinates and keep $q=q_F$ for the original observations. This change preserves maximality under $\prec_C$, but not individual coordinate ranks. It simplifies the probability estimates because the standard Gaussian density depends only on distance from the origin.

To use the stopping arguments, we need to know how often an observation has a small domination risk. The coefficient $a_\Sigma$ measures the leading supply of these points; it also determines the expected frontier size, as described in the introduction. The estimates in Appendix~\ref{app:stopping-estimates}, used with $\Sigma=I_d$ in Sections~\ref{sec:fixed-cutoffs} and~\ref{sec:optimal-stopping}, give
\begin{equation}\label{gauss:gaussian-tail}
H(v):=\Prb\left\{q(Z)\le v\right\}\sim a_\Sigma v(\log(1/v))^{d-1},
\qquad v\downarrow0,
\end{equation}
with the following integral expression for the coefficient:
\begin{equation}\label{gauss:sigma-orthant-coefficient}
a_\Sigma=\frac{\det\Sigma}{(d-1)!}
\int_{(0,\infty)^d}\left(\prod_{k=1}^d z^{(k)}\right)
 e^{-z^{\mathsf T}\Sigma z/2}\,dz.
\end{equation}
For $\Sigma=I_d$, this gives $a_I=1/(d-1)!$, the coefficient in \eqref{eq:product-risk-cdf}.

To estimate how often $q_C(X)$ is small, we first ask which points can only be dominated by points farther from the origin, where Gaussian density is lower. A point $x$ has this property exactly when $x\cdot y\ge0$ for every $y\in C$, since $\|x+y\|^2-\|x\|^2=2x\cdot y+\|y\|^2$. These points form the \emph{dual cone}
\[
C^*=\{x:x\cdot y\ge0\text{ for every }y\in C\}.
\]
Along any fixed direction outside $C^*$, the region $x+C$ contains points closer to the origin by a fixed fraction of $\|x\|$, together with a neighborhood of such points. Their Gaussian density is exponentially higher than at $x$ as $\|x\|$ grows. Thus making $q_C(x)\le v$ requires $x$ to lie farther from the origin, where observations are correspondingly rarer. The proof of Theorem~\ref{gauss:tail-asymptotics} makes this comparison precise and controls directions near the boundary. It shows that these outside directions contribute only a smaller-order term to $\Prb\left\{q_C(X)\le v\right\}$ as $v\downarrow0$. The leading term comes from directions inside $C^*$; integrating their contributions gives the coefficient $a_\Sigma$.

In two dimensions, write $\rho=\Sigma_{12}$ for the correlation.
Then $C$ has opening $\pi-\arccos\rho$ and $C^*$ has opening $\arccos\rho$. Increasing correlation widens the region of dominating points and narrows the directions contributing to the leading term. Figure~\ref{gauss:fig-cones} shows both cones.

\begin{figure}[htb]
\centering
\begin{tikzpicture}[font=\small,>=Stealth]

\foreach \shift/\halfcone/\halfdual/\corr in {
  0/30/60/{-1/2},
  5.8/60/30/{1/2}
} {
  \begin{scope}[xshift=\shift cm]

    \node[font=\normalsize] at (.8,2.75)
      {$\rho=\corr$};

    \draw[black!35]
      (0,0) circle[radius=1.2];

    \node[
      black!60,
      align=center,
      font=\footnotesize,
      inner sep=0pt
    ] at (-.52,.38)
      {higher\\density};

    \begin{scope}[xshift=1.2cm]

      \fill[black!12]
        (0,0) -- (-\halfcone:2.2)
        arc[
          start angle=-\halfcone,
          end angle=\halfcone,
          radius=2.2
        ] -- cycle;

      \draw[black!65,thick]
        (-\halfcone:2.2) -- (0,0)
        -- (\halfcone:2.2);

      \node[
        align=center,
        font=\footnotesize,
        inner sep=0pt
      ] at (1.35,0)
        {$x+C$\\lower\\density};

    \end{scope}

    \draw[black!65,thick,->]
      (0,0) -- (\halfcone:.8);

    \draw[black!65,thick,->]
      (0,0) -- (-\halfcone:.8);

    \node[
      black!65,
      fill=white,
      inner sep=1pt
    ] at (.48,-.12)
      {$C$};

    \draw[ParetoOrange,dashed,thick]
      (-\halfdual:2.35) -- (0,0)
      -- (\halfdual:2.35);

    \node[
      ParetoOrange,
      anchor=south west,
      inner sep=2pt
    ] at (\halfdual:2.35)
      {$C^*$};

    \draw[->,thick]
      (0,0) -- (1.2,0);

    \fill (0,0) circle[radius=1.4pt]
      node[below left] {$0$};

    \fill (1.2,0) circle[radius=1.7pt]
      node[below left] {$x$};

  \end{scope}
}

\end{tikzpicture}
\caption{Gray arrows bound $C$, and orange dashed rays bound $C^*$. For $x$ inside $C^*$, all dominating points in the shaded region lie outside the circle through $x$ and have lower Gaussian density. These directions give the leading term of $\Prb\left\{q_C(X)\le v\right\}$ as $v\downarrow0$.}
\label{gauss:fig-cones}\label{gauss:geometry}
\end{figure}

In dimension two, the coefficient has the simpler form
\begin{equation}\label{gauss:rho-tail-coefficient}
a_\rho=1-\frac{\rho\arccos\rho}{\sqrt{1-\rho^2}},\qquad -1<\rho<1.
\end{equation}
It equals one at independence. Its derivation is in the proof of Corollary~\ref{gauss:bivariate-coefficient}; Section~\ref{gauss:comparison-independent} relates the same coefficient to the expected number of maximal observations.

The distribution in \eqref{gauss:gaussian-tail} also fixes the scale of both stopping arguments. Write
\[
M_n=\E\left[|\Max(Z_1,\ldots,Z_n)|\right].
\]
Conditioning on one observation and using the Gaussian mean estimate in Lemma~\ref{gauss:threshold-probability-estimates} gives
\begin{equation}\label{gauss:frontier-size}
M_n=n\E\left[(1-q(Z))^{n-1}\right]\sim a_\Sigma(\log n)^{d-1}.
\end{equation}

In two dimensions, the coefficient $a_\rho$ in \eqref{gauss:rho-tail-coefficient} decreases as correlation increases. Figure~\ref{gauss:fig-frontier-correlation} shows that fewer maximal points correspond to a larger optimal failure probability.

\begin{figure}[H]
\centering
\begin{tikzpicture}
\begin{axis}[name=leftplot,width=.46\textwidth,height=4.8cm,axis lines=left,
 xmin=-.92,xmax=.94,ymin=0,ymax=7.1,
 xlabel={Correlation $\rho$},ylabel={$a_\rho$},
 title={(a) Frontier coefficient},title style={font=\small},
 xtick={-.9,-.5,0,.5,.9},ytick={0,1,2,4,6},
 tick label style={font=\footnotesize},label style={font=\small},
 grid=major,grid style={ParetoGray!15}]
\addplot[very thick,ParetoBlue,domain=-.9:.9,samples=120] {1-x*acos(x)*pi/180/sqrt(1-x*x)};
\addplot[only marks,mark=*,mark size=2pt,ParetoOrange] coordinates {(0,1)};
\end{axis}
\begin{axis}[at={(leftplot.east)},anchor=west,xshift=1.05cm,width=.46\textwidth,height=4.8cm,axis lines=left,
 xmin=-.92,xmax=.94,ymin=0,ymax=32,
 xlabel={Correlation $\rho$},ylabel={$2/a_\rho$},
 title={(b) Sharp failure constant},title style={font=\small},
 xtick={-.9,-.5,0,.5,.9},ytick={0,2,10,20,30},
 tick label style={font=\footnotesize},label style={font=\small},
 grid=major,grid style={ParetoGray!15}]
\addplot[very thick,ParetoBlue,domain=-.9:.9,samples=120] {2/(1-x*acos(x)*pi/180/sqrt(1-x*x))};
\addplot[only marks,mark=*,mark size=2pt,ParetoOrange] coordinates {(0,2)};
\end{axis}
\end{tikzpicture}
\caption{For two Gaussian coordinates with fixed correlation $\rho$, the expected number of maximal points is asymptotic to $a_\rho\log n$, and the optimal failure probability to $2/(a_\rho\log n\log\log n)$. Orange points mark independence.}
\label{gauss:fig-frontier-correlation}
\end{figure}

For $u$ in the interval \eqref{gauss:smooth-threshold-band}, \eqref{gauss:gaussian-tail} and \eqref{gauss:patch-sharp-moments} give
\begin{equation}\label{eq:frontier-threshold-estimates}
\Prb\left\{q(Z)\le u\right\}\sim M_nu,\qquad
\E\left[q(Z)\ind\{q(Z)\le u\}\right]\sim\frac{M_nu^2}{2}.
\end{equation}
The first quantity counts points that meet a threshold; the second measures their chance of being dominated later. The proof for fixed-cutoff rules also needs estimates for the random probability $A_s$ of finding a point outside the region dominated by earlier observations. These are also provided by the same appendix. Below we identify the inputs to each stopping proof and explain how their coefficients change with $\Sigma$.

\subsection{Fixed-Cutoff Rules}
\label{gauss:fixed-cutoffs}

\begin{theorem}\label{gauss:cutoff-consistency}
For the fixed-cutoff rule of Section~\ref{sec:fixed-cutoffs}, as $n$ tends to infinity, we have that $
p_n(s_n)\to1$ if and only if
\[
m_n/s_n\to0\quad\text{and}\quad
m_n(\log s_n)^{d-1}/s_n\to\infty,
\]
where $m_n=n-s_n$. Moreover,
\begin{equation*}
\min_{1\le s<n}(1-p_n(s))
\sim\frac{d-1}{a_\Sigma}\frac{\log\log n}{(\log n)^{d-1}},
\end{equation*}
and every minimizing cutoff $s_n^\star$ satisfies
\begin{equation*}
n-s_n^\star\sim\frac{d-1}{a_\Sigma}
\frac{n\log\log n}{(\log n)^{d-1}}.
\end{equation*}
\end{theorem}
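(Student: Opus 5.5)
The plan is to rerun the proofs of \Cref{thm:cutoff-quantitative}, \Cref{thm:fixed-d-successful-cutoff-families} and \Cref{thm:cutoff-rate-lower-bound} with $\Sigma$ in place of $I_d$. Those arguments used only three ingredients: the exact failure identity \eqref{eq:cutoff-failure-identity}, the replacement and concentration estimates for $A_s$ from the appendix lemmas, and the mean estimate $\E[A_s]\sim a_s$.

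The identity \eqref{eq:cutoff-failure-identity} holds for any continuous $F$. Its derivation only uses that observations after the cutoff which are dominated by the initial sample cannot dominate points of $\mathcal U_s$, and that the selected point is $F$ conditioned on $\mathcal U_s$. So the first step is to verify, via the cone representation, that the appendix estimates hold for general positive definite $\Sigma$ with the new scale
\[
a_s^\Sigma=\frac{a_\Sigma(\log s)^{d-1}}{s}.
\]
Concretely, I would show:
\begin{itemize}
\item $\E[A_s]\sim a_\Sigma(\log s)^{d-1}/s$;
\item $\E[A_{s+m}]/\E[A_s]\to 1$ precisely when $m/s\to0$, with limit $1/(1+c)$ when $m\sim cs$;
\item $A_s/\E[A_s]\to1$ in probability, together with the exponential-moment bound giving $\Prb\{\tau=\infty\}\le e^{-(1-o(1))m\E A_s}+O((\log s)^{1-d})$;
\item the $O((\log s)^{1-d})$ replacement error for $A_{s+m-j}$ by $A_{s+m}$.
\end{itemize}
Given $\E[A_s]=\E[(1-q(Z))^s]$, the mean estimate should follow from the tail asymptotic \eqref{gauss:gaussian-tail} by a Laplace-type computation, exactly as $M_n$ in \eqref{gauss:frontier-size}. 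The ratio statement then follows because the mean is regularly varying of index $-1$ up to logarithms.

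With these inputs, the Gaussian analogue of \Cref{thm:cutoff-quantitative} holds verbatim. The consistency criterion follows as in \Cref{thm:fixed-d-successful-cutoff-families}, since $m_n a^\Sigma_{s_n}\to\infty$ is equivalent to $m_n(\log s_n)^{d-1}/s_n\to\infty$ when $a_\Sigma$ is a fixed positive constant.

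For the optimal cutoff I would redo the sharpening \eqref{eq:cutoff-later-equivalent}, showing that later domination is $\sim m/s$. The upper bound uses $1-(1-q)^m\le mq$ on the event where $A_s$ is near its mean. The lower bound uses $mq-m^2q^2/2$ on $\{q\le T/s\}$, together with the fact that the conditional law of $s\,q(Z)$ given $Z\in\mathcal U_s$ is asymptotically $\mathrm{Gamma}(2)$. This limit law follows from \eqref{gauss:gaussian-tail}: $\Prb\{sq\in dt,\ Z\in\mathcal U_s\}\approx a_\Sigma(\log s)^{d-1}e^{-t}dt/s$.

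The constants then come out as follows.
\begin{itemize}
\item \textbf{Upper bound.} Choose $m=\lceil (d-1)n\log\log n/(a_\Sigma(\log n)^{d-1})\rceil$. Then $m a^\Sigma_s=(d-1)\log\log s+o(1)$, so no-selection is $O((\log s)^{1-d})=o(m/s)$, and the failure is $\sim m/n$, which is the claimed constant.
\item \textbf{Lower bound.} Jensen gives $(1-\E A_s)^m\le 1-p_n(s)=O(\log\log n/(\log n)^{d-1})$. Hence $m\E A_s\ge (d-1-o(1))\log\log n$, which forces $m/s\ge(1-o(1))(d-1)\log\log n/(a_\Sigma(\log n)^{d-1})$. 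This pins both the minimum and every minimizer.
\end{itemize}

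The main obstacle is the first step, the $\Sigma$-versions of the $A_s$ estimates, especially concentration of $A_s$ and the $\mathrm{Gamma}(2)$ local law. In the product case these come from the explicit formula \eqref{eq:product-risk-cdf}. Here one must control the contribution of directions outside $C^*$, near $\partial C^*$, uniformly. I would first try to deduce everything from the tail asymptotic of $H$ alone, by writing $A_s=\int(1-q)^{\#\text{dominators}}$-type expressions through the identity $\E[A_s]=\E[(1-q(Z))^s]$. For the variance I would bound $\E[A_s^2]$ via pairs of points and the probability that neither is dominated, using that points with comparable $q$ in different directions are nearly independently covered. This is exactly what Appendix~\ref{app:stopping-estimates} is stated to provide for general $\Sigma$, so in the proof I would invoke that appendix rather than rederive it.
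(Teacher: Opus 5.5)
Your proposal is essentially the paper's proof. The paper likewise reruns the proofs of Theorems~\ref{thm:fixed-d-successful-cutoff-families} and~\ref{thm:cutoff-rate-lower-bound} with $a_s=a_\Sigma(\log s)^{d-1}/s$, takes the mean, concentration, no-selection and replacement bounds for this $\Sigma$ from Lemmas~\ref{gauss:cutoff-volume-estimate} and~\ref{lem:cutoff-common-replacements}, and then uses the sharpened later-domination estimate, the explicit cutoff and Jensen exactly as you describe.

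There are three small slips, each covered by invoking the appendix as you intend. First, the law $e^{-t}\,dt$ you display for $s\,q(Z)$ on $\mathcal U_s$ is exponential with mean one, not $\mathrm{Gamma}(2)$; mean one is what makes later domination $\sim m/s$, and $te^{-t}$ is only the first-moment integrand. Second, the sharpening needs $\Prb\{A_s<(1-\varepsilon)\E[A_s]\}=O((\log s)^{1-d})=o(m/s)$, not just convergence in probability. Third, a bare $(1-o(1))$ factor in the exponent does not give ``no-selection is $O((\log s)^{1-d})$''; you need \eqref{gauss:cutoff-no-selection} itself.
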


\begin{proof}
Apply the proofs of Theorems~\ref{thm:fixed-d-successful-cutoff-families}
and~\ref{thm:cutoff-rate-lower-bound} with
$a_s=a_\Sigma(\log s)^{d-1}/s$.
The failure identity \eqref{eq:cutoff-failure-identity} uses only
independent arrivals and transitivity of domination.
Lemmas~\ref{gauss:cutoff-volume-estimate} and~\ref{lem:cutoff-common-replacements}
supply the same mean, concentration, and replacement bounds for this
fixed $\Sigma$. In particular, $a_\Sigma$ cancels from the mean ratio
in \eqref{eq:cutoff-general-failure}. Also $q(z)>0$ for every finite
$z$, so the argument excluding bounded cutoffs is unchanged.
This proves the criterion for convergence to one.

For $m/s\to0$ and $ma_s\to\infty$, the same concentration and
truncated-ratio bounds justify \eqref{eq:cutoff-later-equivalent},
with later-domination probability asymptotic to $m/s$.
Choose
\begin{equation*}
m_n=\left\lceil\frac{d-1}{a_\Sigma}
\frac{n\log\log n}{(\log n)^{d-1}}\right\rceil.
\end{equation*}
Then $ma_s=(d-1)\log\log s+o(1)$, so
\eqref{gauss:cutoff-no-selection} makes $\Prb\left\{\tau=\infty\right\}=o(m/s)$.
The upper bound follows. Jensen's inequality still forces
$ma_s\ge(d-1-o(1))\log\log n$ at any minimizing cutoff,
as in \eqref{eq:cutoff-no-choice-bounds}. This gives both the lower
bound and the cutoff location.
\end{proof}

\subsection{Optimal Stopping}
\label{gauss:optimal-stopping}

We use the time-dependent threshold from Section~\ref{subsec:product-time-threshold}, with $a_\Sigma$ in place of $1/(d-1)!$. The following theorem shows that it has the optimal leading failure probability. Let $G$ be the distribution of the rank array under $F$.

\begin{theorem}\label{gauss:headline}\label{gauss:cone-stopping}\label{gauss:unknown-correlation}
For fixed $d\ge2$ and positive definite $\Sigma$,
\[
1-p_n^\star(F)\sim1-p^\star_{n,\mathrm{rank}}(G)
\sim\frac{2}{a_\Sigma(d-1)(\log n)^{d-1}\log\log n}.
\]
The full-information rule uses \eqref{gauss:smooth-threshold} below. The rank-based rule constructed below knows that the observations are Gaussian but is not given their means, variances, or correlations. Both have the displayed leading failure probability.
\end{theorem}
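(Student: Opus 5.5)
The argument has three parts. A lower bound holds for every full-information rule and hence for every rank-based rule. A threshold rule with known $\Sigma$ matches it. A rank-based rule then estimates everything it needs from the first half of the sample.

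\emph{Lower bound.} I would rerun the proof of \Cref{thm:general-d-fi-lower-bound}, with the product formula \eqref{eq:product-risk-cdf} replaced by the Gaussian tail \eqref{gauss:gaussian-tail}. On the band $\mathcal B_s$ we have $\log(1/q)=\log n+O(\log\log n)$, so \eqref{gauss:gaussian-tail} gives the envelope
\[
\Prb\{Z\in\mathcal B_s,\ q(Z)\le x\}\le(1+o(1))\,a_\Sigma(\log n)^{d-1}x,
\]
which replaces \eqref{eq:sharp-lower-cdf-envelope}. The estimate of the retained mass \eqref{eq:sharp-lower-retained-mass} needs $\E[(1-q(Z))^{s-1}]=O((\log n)^{d-1}/n)$ over the final $\ell$ steps. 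This is the Gaussian mean estimate already used for $M_n$ in \eqref{gauss:frontier-size}. The two Cauchy--Schwarz steps are distribution-free. Together they give
\[
1-p(\mathbf r)\ge\frac{(1-o(1))}{2a_\Sigma(\log n)^{d-1}}\cdot\frac{4}{(d-1)\log\log n},
\]
which is the claimed constant, because $\sum_{j=\ell}^{n-1}j^{-1}\sim(d-1)\log\log n$.

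\emph{Full-information upper bound.} I would define $u_m$ as in \eqref{eq:threshold-smooth-rule}, with $(d-1)!$ replaced by $1/a_\Sigma$, and keep the same $h$. This is the rule referred to as \eqref{gauss:smooth-threshold}. The proof of \Cref{thm:threshold-upper-bound} then goes through verbatim, provided the conditional inputs \eqref{eq:threshold-conditional-inputs} hold with $a_\Sigma$ in place of $1/(d-1)!$. These are exactly \eqref{eq:frontier-threshold-estimates}, applied to currently maximal points in the band \eqref{gauss:smooth-threshold-band}. I would take them from Corollary~\ref{lem:threshold-common-inputs}, which the appendix states for general $\Sigma$. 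The main point is that for $q\le u\ll 1/s$, the probability of being dominated by an earlier point is $o(1)$. The reaching-probability recursion \eqref{eq:threshold-reach-bounds} and the final summation are unchanged, and they produce $2/[a_\Sigma(d-1)(\log n)^{d-1}\log\log n]$.

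\emph{Rank-based rule.} This is the step I expect to be the main obstacle. The rule rejects the first $\lfloor n/2\rfloor$ observations and uses them to estimate $q$. Under independence the estimate $\widehat q$ was a product of binomial confidence bounds. Here $q(z)$ is a joint orthant probability, so a product no longer works, and it is too small (order $1/n$) to estimate by counting sample points in the orthant. My plan is to use a plug-in estimator:
\begin{enumerate}
\item Estimate the correlation matrix by $\widehat\Sigma$, via Kendall's tau or normal scores of the ranks. Its error is $O_p(n^{-1/2}\sqrt{\log n})$.
\item Estimate each transformed coordinate $\widehat z^{(k)}=\Phi^{-1}(\text{upper-tail rank estimate})$, using the table $\overline p$. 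Uniformly over the relevant range this is accurate to relative error $O(\sqrt{\log n/(n q_k)})$.
\item Set $\widehat q(z)=q_{\widehat\Sigma}(\widehat z)$, inflated by a factor $1+\delta_n$ with $\delta_n\to0$ slowly, so that $\widehat q\ge q$ with probability $1-o(1/[(\log n)^{d-1}\log\log n])$.
\end{enumerate}
The delicate part is showing that $q_\Sigma(z)$ is stable under these perturbations, uniformly for $q(z)\in[n^{-1}(\log n)^{-3d},\,n^{-1}]$. Each $z^{(k)}$ has size $O(\sqrt{\log n})$, and an additive coordinate error $\eta$ changes $\log q$ by about $\|z\|\eta$. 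The per-coordinate tail probabilities that matter are at least about $n^{-1}$, so I expect $\eta\sqrt{\log n}=o(1)$ to hold only at the polylog level, and this has to be checked. Likewise, a perturbation of $\Sigma$ of size $n^{-1/2+o(1)}$ changes $\log q$ by at most $\|z\|^2n^{-1/2+o(1)}=o(1)$.

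Given this stability, the estimated threshold event is sandwiched between $\{q\le(1-o(1))u\}$ and $\{q\le u\}$. The conditional inputs \eqref{eq:threshold-conditional-inputs} then hold with an extra $1\pm o(1)$, and the upper-bound argument closes as before. Because every quantity is computed from coordinate ranks, the rule needs only the Gaussian assumption, not the means, variances or correlations.
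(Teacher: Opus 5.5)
Your lower bound and your full-information upper bound follow the paper's argument: the same $\ell$, $\mathcal B_s$ and $\tau'$, the envelope with $a_\Sigma$, and Corollary~\ref{lem:threshold-common-inputs} with $\widehat q=q$. The gap is in the rank-based part, at the uniform sandwich $\{q\le(1-o(1))u\}\subseteq\{\widehat q\le u\}\subseteq\{q\le u\}$. The first inclusion needs $\widehat q\le(1+o(1))q$ at every point with $q\le u$, and this is false. Consider a point with $q(z)\le u$ one of whose marginal tails is itself of order $u$. For $\Sigma=I_2$, take $p_1=2u$ and $p_2=1/2$. In general these are the points $t\theta$ with $\theta$ near a row direction $a_k$ of $\Sigma^{1/2}$. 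For such a point the initial sample has no observation beyond $z^{(1)}$ with high probability, so $\overline p_N(0)\approx4\log n/n$ and $\widehat q\approx2\log n/n\gg u$. Your own relative-error formula $\sqrt{\log n/(np_k)}$ is not small at $p_k\lesssim1/n$, and your $\eta\sqrt{\log n}$ is of order $\log\log n$, not $o(1)$. These points are never accepted, so your derivation of the selection lower bound in \eqref{eq:threshold-conditional-inputs} breaks.

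The missing idea is a directional restriction. Prove the selection lower bound only over a compact set $P\subset\operatorname{int}(C^*)\cap S^{d-1}$ that avoids neighborhoods of $\pm a_k$. Under the norm bound in \eqref{gauss:sample-bounds}, the condition $|a_k\cdot\theta|\le b<1$ forces every marginal tail to be at least $n^{-1+\kappa}$. That is exactly where Lemma~\ref{gauss:conservative-estimates} gives $\widehat q/q=1+o(1)$. The excluded directions have small surface measure and $1/J$ is bounded on $C^*\cap S^{d-1}$, so $a_P$ can be taken within $\varepsilon$ of $a_\Sigma$. Off $P$ you need only the one-sided bound $q\le\widehat q$, because the later-domination input uses only $\E[q\ind\{q\le u\}]$.

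Three further points need attention.
\begin{itemize}
\item The threshold \eqref{gauss:smooth-threshold} contains $a_\Sigma$, which your rank-based rule does not know, so as written the rule is not parameter-free. You must use $a_{\widehat\Sigma}$ and justify $a_{\widehat\Sigma}/a_\Sigma=1+o(1)$ by continuity of \eqref{gauss:sigma-orthant-coefficient}. The threshold then depends on the initial sample, so the conditional inputs must hold simultaneously for all $u$ in \eqref{gauss:smooth-threshold-band}, outside a single exceptional event. This is how Corollary~\ref{lem:threshold-common-inputs} is stated.
\item Your plug-in-plus-inflation device for $q\le\widehat q$ is a legitimate alternative to the paper's route. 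The paper uses one-sided confidence bounds giving $\overline\Sigma\ge\Sigma$ entrywise, together with the Slepian-type monotonicity from Price's formula, and gets $q\le\widehat q$ exactly. Your device instead needs relative stability in $\Sigma$ at every attainable $\widehat z$, not only on the band of $q$-values you mention. This does hold, because each finite $\widehat z^{(k)}$ is $O(\sqrt{\log n})$ by construction of the table, but you should say so.
\item The correlation error must be controlled with probability $1-o(1/[(\log n)^{d-1}\log\log n])$, not merely $O_p$.
\end{itemize}
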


\proofstep{Lower bound.}
Apply Theorem~\ref{thm:general-d-fi-lower-bound}'s proof with the same
$\ell$, $\mathcal B_s$, and $\tau'$, using the present $q$ in
\eqref{eq:sharp-lower-q-range}. Its distribution inputs are
\eqref{gauss:gaussian-tail} and \eqref{gauss:laplace-estimates}, with
$a_\Sigma$ replacing $1/(d-1)!$.
For rules with failure probability at most a fixed multiple of the
claimed bound, the four terms in \eqref{eq:sharp-lower-retained-mass}
are still $o(1)$. The final-$\ell$ term uses the mean estimate, the
lower boundary uses $H$, and the upper boundary uses only
$1-(1-q)^m\ge1-e^{-mq}$. Thus $\Prb\left\{\tau'=\infty\right\}=o(1)$.
On $\mathcal B_s$, $\log(1/q)=\log n+O(\log\log n)$, so
\eqref{eq:sharp-lower-cdf-envelope} holds with $a_\Sigma$ replacing
$1/(d-1)!$. The bound $\Prb\left\{Z\in\mathcal B_{n-\ell}\right\}=o(1)$ also persists,
so the integration, Cauchy--Schwarz, and telescoping in
\eqref{eq:sharp-lower-energy} give
\[
1-p(\mathbf r)\ge
\frac{1-o(1)}{2a_\Sigma(\log n)^{d-1}}
\frac{4-o(1)}{(d-1)\log\log n}.
\]
This gives the claimed lower bound.

\proofstep{Full-information upper bound.}
Use the full-information rule of Section~\ref{subsec:product-time-threshold}, with the same $h=n\exp(\sqrt{\log\log n})/(\log n)^{d-1}$ and threshold
\begin{equation}\label{gauss:smooth-threshold}
u_m=\frac{2}{a_\Sigma(m+1)(\log n)^{d-1}\log(1+(m+1)/h)}.
\end{equation}
Figure~\ref{gauss:fig-thresholds} compares the acceptance boundaries after applying $\Phi$ to each coordinate. This drawing preserves coordinate comparisons; it is different from the linear transformation used above to analyze cone comparisons.

\input{figures/gaussian-thresholds}

Corollary~\ref{lem:threshold-common-inputs}, used with $\Sigma=I_d$ in Section~\ref{sec:optimal-stopping}, now applies with the present $\Sigma$ and $\widehat q=q$. At histories reached without a selection and outside its exceptional histories, it gives
\begin{equation}\label{gauss:conditional-threshold-inputs}
\begin{aligned}
\Prb\left\{\tau=s\mid\mathcal H_{s-1}\right\}
&\ge(1-\varepsilon)a_\Sigma(\log n)^{d-1}u_m,\\
\E\left[q(Z_s)\ind\{\tau=s\}\mid\mathcal H_{s-1}\right]
&\le(1+\varepsilon)\frac{a_\Sigma}{2}(\log n)^{d-1}u_m^2.
\end{aligned}
\end{equation}
These are the two inputs in \eqref{eq:threshold-conditional-inputs},
with $a_\Sigma$ replacing $1/(d-1)!$, for the same interval
$\ell\le m\le\lceil n/2\rceil-1$ and with the same exceptional probability. Since $a_\Sigma>0$ is fixed,
the threshold differs from the independent case by a fixed factor,
so the endpoint estimates in that proof still place it within the
corollary's interval. Substituting the threshold in \eqref{gauss:smooth-threshold}
into the first bound cancels $a_\Sigma$.
Thus the probability of reaching each decision has exactly the bounds
\eqref{eq:threshold-reach-bounds}.
In the second bound, the contribution is multiplied by
$1/[a_\Sigma(d-1)!]$. The final contribution is still bounded by
$S_\ell$, while \eqref{gauss:patch-sharp-moments} bounds the initial
contribution by $O_\Sigma(1/[(\log n)^{d-1}(\log\log n)^2])$.
Thus \eqref{eq:threshold-time-conclusion} gives
\[
\limsup_{n\to\infty}(\log n)^{d-1}\log\log n\,(1-p(\mathbf r))
\le\frac{2(1+\varepsilon)}{(1-2\varepsilon)a_\Sigma(d-1)}.
\]
Let $\varepsilon\downarrow0$.

\proofstep{Rank-based rule.}
\label{gauss:gaussian-rank-transfer}\label{gauss:unknown-gaussian-correlation}
Use the initial sample and the marginal probability estimates from Section~\ref{subsec:product-time-threshold}. The product in \eqref{eq:threshold-rank-score} is replaced by a Gaussian rectangle probability. Its marginal inputs are estimated in exactly the same way; the additional unknown quantity is the correlation matrix.

\proofstep{Estimating correlations.} Use the $\lfloor n/4\rfloor$ disjoint pairs among the initial observations. Let $N_{k\ell}$ count pairs for which the two coordinate comparisons agree, so
\[
N_{k\ell}=\sum_{i=1}^{\lfloor n/4\rfloor}
\ind\{(Z_{2i}^{(k)}-Z_{2i-1}^{(k)})
       (Z_{2i}^{(\ell)}-Z_{2i-1}^{(\ell)})>0\}.
\]
The probability that the comparisons agree is $1-\arccos(\Sigma_{k\ell})/\pi$, as shown in the proof of Lemma~\ref{gauss:conservative-estimates}. Thus set
\begin{equation}\label{gauss:conservative-correlation}
\overline{\Sigma}_{k\ell}
=-\cos\left(\pi\overline p_{\lfloor n/4\rfloor}(N_{k\ell})\right)
\quad(k\ne\ell),\qquad \overline{\Sigma}_{kk}=1.
\end{equation}
Only comparisons within each pair are needed.

\proofstep{Estimating marginal positions.} When $\overline\Sigma$ is positive definite, for a later observation $z$ define
\begin{equation}\label{gauss:conservative-rank-score}
\begin{aligned}
\widehat z^{(k)}
&=\Phi^{-1}\left(1-\overline p_{\lfloor n/2\rfloor}
 \bigl(\#\{i\le\lfloor n/2\rfloor:Z_i^{(k)}>z^{(k)}\}\bigr)\right),\\
\widehat q(z)
&=\Prb_{W\sim \mathcal N_d(0,\overline{\Sigma})}
       \left\{W^{(k)}\ge\widehat z^{(k)}\text{ for every }k\in[d]\right\}.
\end{aligned}
\end{equation}
The count estimates the marginal probability of exceeding $z^{(k)}$. Taking its upper confidence bound lowers the standardized coordinate $\widehat z^{(k)}$ and increases the estimated probability of domination. A threshold of $-\infty$ means that this coordinate is unrestricted.

Use the stopping time \eqref{eq:estimated-threshold-rule} with this $\widehat q$ and with $a_{\overline\Sigma}$ replacing $a_\Sigma$ in \eqref{gauss:smooth-threshold}.
\label{gauss:conservative-rank-rule}
If there are no pairs or $\overline\Sigma$ is not positive definite,
wait until time $n$ and select $Z_n$ if it is currently maximal;
otherwise make no selection. Every step uses only comparisons and $n$.

\proofstep{Rank-based upper bound.}
Use Corollary~\ref{lem:threshold-common-inputs} with a smaller error
parameter, say $\varepsilon/4$. Its bounds hold simultaneously for all
thresholds in the stated interval, outside one set of exceptional
histories. Thus they also apply to the threshold chosen from the initial sample.

Lemma~\ref{gauss:conservative-estimates} gives
$a_{\overline\Sigma}/a_\Sigma=1+o(1)$ outside an initial-sample event
of probability $O_d(n^{-1})$. Conditional on that sample, the rule's
threshold is a fixed $(1+o(1))$ multiple of \eqref{gauss:smooth-threshold}.
The endpoint estimates in the proof of Theorem~\ref{thm:threshold-upper-bound}
leave room for this relative change within the permitted interval.
For large $n$, absorbing it into $\varepsilon$ gives
\eqref{gauss:conditional-threshold-inputs} with $u_m$ denoting the
deterministic threshold \eqref{gauss:smooth-threshold}.
The continuation and summation bounds therefore apply unchanged.
The additional exceptional probability is negligible at the claimed
rate, and there are no early selections.

The full-information argument also applies to any fixed closed convex cone $C\subset\mathbb R^d$ with nonempty interior and $C\cap(-C)=\{0\}$. If $H(v)=\Prb\left\{q_C(X)\le v\right\}$, its conclusion is
\begin{equation}\label{gauss:general-cone-bound}
\inf_{\mathbf r}(1-p(\mathbf r))
\sim\frac{2}{(d-1)nH(1/n)\log\log n}.
\end{equation}
Theorem~\ref{gauss:tail-asymptotics} supplies $H$ for these cones,
and Corollary~\ref{lem:threshold-common-inputs} explicitly supplies
the full-information conditional bounds. Since
$nH(1/n)$ is asymptotic to the tail coefficient times $(\log n)^{d-1}$,
using it in the denominator of \eqref{gauss:smooth-threshold} changes
the threshold by $1+o(1)$, handled as above.
This cone formulation also covers the singular models below.

\subsection{Singular Gaussian Distributions}
\label{gauss:singular-gaussian}

We now allow $\Sigma$ to be a singular covariance matrix, including
constant coordinates. Let $r=\operatorname{rank}\Sigma$ and write
$Z=\mu+AX$, where $A\in\mathbb R^{d\times r}$ has column rank $r$,
$AA^{\mathsf T}=\Sigma$, and $X\sim\mathcal N_r(0,I_r)$.
In this subsection, the correlation matrix refers to that of the
nonconstant coordinates. The comparisons are determined by
\[
C=\{z\in\mathbb R^r:Az\ge0\}.
\]
Because $A$ is injective, $C\cap(-C)=\{0\}$. For $r\ge1$, $q_C(x)=q_F(\mu+Ax)$ as above, with standard Gaussian probability now taken in $\mathbb R^r$. All interiors in this subsection are taken in $\mathbb R^r$. For $r\ge2$ and $C$ with nonempty interior, the appendix gives
\[
H(v)=\Prb\left\{q_F(Z)\le v\right\}
\sim a_\Sigma v(\log(1/v))^{r-1},\qquad v\downarrow0,
\]
with $a_\Sigma>0$. We use $a_\Sigma$ for this coefficient also when $\Sigma$ is singular; its integral formula is \eqref{gauss:cone-tail-exact}, with dimension $r$. The proof for fixed-cutoff rules and \eqref{gauss:general-cone-bound} then apply in dimension $r$. For ranks, the only additional issue is estimating $q$ on the support. We resolve it below when the correlation matrix is given.

\begin{proposition}\label{gauss:singular-classification}\label{gauss:known-correlation}
For a fixed positive semidefinite Gaussian covariance $\Sigma$, the following cases exhaust the possibilities.
\begin{enumerate}
\item If $r=0$, all observations coincide and every observation is maximal.
\item If $r\ge1$ and $C$ has empty interior, the observations are pairwise incomparable with probability one, so every observation is maximal.
\item If $r=1$ and $C$ has nonempty interior, the observations form a total order. The rank-based optimum converges to $1/e$, and the full-information optimum converges to $v_\infty=0.580164\ldots$, as defined in Section~\ref{sec:arbitrary-dependence}.
\item If $r\ge2$ and $C$ has nonempty interior, both full information and ranks with known correlation matrix have optimal failure probability asymptotic to
\[
\frac{2}{a_\Sigma(r-1)(\log n)^{r-1}\log\log n}.
\]
\end{enumerate}
\end{proposition}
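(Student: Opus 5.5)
We treat the four cases separately. The first three are structural reductions; only the fourth needs the stopping machinery.

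\emph{Case $r=0$.} Here $Z=\mu$ almost surely. By the convention in Section~\ref{sec:model}, identical observations do not dominate each other, so every observation is maximal.

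\emph{Case $r\ge1$, $C$ with empty interior.} The set $C=\{x\in\mathbb R^r:Ax\ge0\}$ is a closed convex cone. If it has empty interior in $\mathbb R^r$, it lies in a proper linear subspace $L$. For $i\ne j$, the event $Z_i\prec Z_j$ equals $\{X_j-X_i\in C\setminus\{0\}\}\subseteq\{X_j-X_i\in L\}$. Since $X_j-X_i\sim\mathcal N_r(0,2I_r)$ has a density, this event has probability zero. A union bound over the finitely many pairs then shows that the observations are pairwise incomparable almost surely.

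\emph{Case $r=1$, $C$ with nonempty interior.} Now $A$ is a nonzero column. Because $C\cap(-C)=\{0\}$ and $C$ has nonempty interior, $C$ is exactly one of $[0,\infty)$ or $(-\infty,0]$. Replacing $X$ by $-X$ if needed, we may take $C=[0,\infty)$, so every nonzero entry of $A$ is positive. Each nonconstant coordinate is then a strictly increasing affine function of the scalar $X$, and the constant coordinates carry no information. Domination is therefore the total order of the continuous variable $X$, and Pareto maximality means selecting the maximum of $X$.
\begin{enumerate}
\item With full information, observing $Z$ is equivalent to observing $X$, whose distribution is known. Gilbert--Mosteller and \eqref{dep:eq:classical-limit} then give $v_n\to v_\infty$.
\item With ranks, any nonconstant coordinate reveals exactly the relative ranks of $X$. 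The rank problem is therefore the classical secretary problem, with limit $1/e$.
\end{enumerate}

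\emph{Case $r\ge2$, $C$ with nonempty interior.} The full-information statement is \eqref{gauss:general-cone-bound} in dimension $r$. Its conclusion uses $nH(1/n)\sim a_\Sigma(\log n)^{r-1}$, which follows from the appendix tail asymptotics for singular $\Sigma$. The rank lower bound follows because rank-based rules are a subset of full-information rules. For the rank upper bound, we use the rule \eqref{eq:estimated-threshold-rule}, with threshold \eqref{gauss:smooth-threshold} in dimension $r$ and with the known $a_\Sigma$; since the correlation matrix is given, no correlation estimation is needed. We then verify the hypotheses of Corollary~\ref{lem:threshold-common-inputs} as in the rank-based upper bound of Theorem~\ref{gauss:headline}.

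The score is built as follows.
\begin{enumerate}
\item For each nonconstant coordinate $k$, estimate the standardized value $\widehat w^{(k)}$ through the conservative marginal table, as in \eqref{gauss:conservative-rank-score}.
\item Constant coordinates impose no condition: they are equal for all observations and never decide a comparison.
\item Set $\widehat q(z)=\Prb\{W\ge\widehat w\}$, where $W$ is the known, possibly singular, standardized Gaussian vector on the nonconstant coordinates.
\end{enumerate}

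The main obstacle is that $\widehat w$ need not lie on the support of $W$. We must still show that $\widehat q$ is conservative and accurate to a relative error of $1+o(1)$, uniformly over points with $q\approx1/(n\,\mathrm{polylog}\,n)$.

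\emph{Conservativeness.} Outside an event of probability $O(n^{-1})$, we have $\widehat w\le w$ coordinatewise, so $\widehat q(z)\ge q(z)$.

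\emph{Accuracy.} The binomial confidence bounds give coordinate shifts $0\le w^{(k)}-\widehat w^{(k)}\le\delta_n$, with $\delta_n=o(1/\sqrt{\log n})$ on the relevant range. Choose an interior vector $e$ of $C$ with $(Ae)^{(k)}\ge\sigma_k$, where $\sigma_k$ is the standard deviation of the nonconstant coordinate $k$. Then $\{W\ge w-\delta_n\}$ is contained in $x-\delta_ne+C$, where $x$ is the preimage of the point $z$. Translating the Gaussian measure by $\delta_ne$ changes the density by a factor of $\exp(O(\|x\|\delta_n+\delta_n^2))=1+o(1)$, because $\|x\|=O(\sqrt{\log n})$ on the relevant range; larger $\|x\|$ is negligible by Gaussian tails. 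Hence $q(z)\le\widehat q(z)\le(1+o(1))\,q(z)$ there.

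With these two facts, Corollary~\ref{lem:threshold-common-inputs} supplies the conditional bounds \eqref{gauss:conditional-threshold-inputs}, with relative errors absorbed into $\varepsilon$. The continuation and summation steps of Theorem~\ref{thm:threshold-upper-bound} then carry over with $d$ replaced by $r$, giving the stated constant $2/[a_\Sigma(r-1)]$.
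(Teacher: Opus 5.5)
Your proposal is correct and follows the paper's proof case by case: the density argument for an empty-interior cone, the half-line reduction for $r=1$, and, for $r\ge2$, the lower bound from \eqref{gauss:general-cone-bound} together with the known-matrix rank rule analyzed through Corollary~\ref{lem:threshold-common-inputs}, where your shift-by-an-interior-vector accuracy check is exactly Lemma~\ref{gauss:support-estimates}. One correction: $\widehat q=(1+o(1))q$ cannot hold uniformly over all points with $q\approx1/(n\,\mathrm{polylog}\,n)$; for example, for $\Sigma=I_2$ at $z=(t,0)$ the first marginal tail is $2q$, while the table gives $\overline p_N(0)\approx4\log n/n$. Claim accuracy only on a compact set of directions in $\operatorname{int}(C^*)$ that avoids $\pm a_k$, as the paper does; this suffices because only the acceptance lower bound needs accuracy, and the moment bound uses just $q\le\widehat q$.
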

\begin{proof}
Identical observations do not dominate one another under our convention. A convex cone with empty interior is contained in a proper linear subspace. Since $X_i-X_j\sim \mathcal N_r(0,2I_r)$ has a density, $\Prb\left\{X_i-X_j\in C\cup(-C)\right\}=0$ for each pair, proving the second case. In dimension one, a nontrivial cone containing no line is a half-line, so the problem is exactly one-dimensional best choice. For the final case, delete constant coordinates and standardize the others, writing $\Sigma$ for their correlation matrix.
The rule can identify constant coordinates by equality of their
first two observed values, almost surely. Evaluate the Gaussian rectangle probability in \eqref{gauss:conservative-rank-score} with the known $\Sigma$, which is valid also when $\Sigma$ is singular, and use the thresholds \eqref{gauss:smooth-threshold} with $r$ in place of $d$ in both $h$ and the threshold, after rejecting the first $\lfloor n/2\rfloor$ observations. Corollary~\ref{lem:threshold-common-inputs} applies to this known-matrix estimate in dimension $r$, and gives both bounds in \eqref{gauss:conditional-threshold-inputs} with $r$ in place of $d$. The same summation as in Theorem~\ref{thm:threshold-upper-bound} proves the upper bound, while \eqref{gauss:general-cone-bound} gives the lower bound.
\end{proof}

The geometric condition can be checked directly. After deleting zero rows of $A$, $C$ has nonempty interior if and only if $Az>0$ has a solution. Thus covariance rank alone does not determine the answer. For example, $Z=(X^{(1)},X^{(2)},-X^{(2)})$ with two independent standard normal coordinates has rank two, but $C=\{(x,0):x\ge0\}$ has empty interior. Every observation is maximal with probability one. In two dimensions, correlation $+1$ gives a total order and correlation $-1$ gives pairwise incomparable points. A single nonconstant coordinate also gives a total order.

The fixed-covariance bounds need not hold uniformly near singularity.
For example, let $F_n$ be a bivariate Gaussian distribution with
mean zero, unit marginal variances, and correlation
$\rho_n\in(-1,1)$ satisfying $1-\rho_n=o(n^{-4})$. A pair's two coordinate rankings disagree with probability $\arccos(\rho_n)/\pi$. Thus
\[
\Prb\left\{\exists i<j:
 (Z_i^{(1)}-Z_j^{(1)})(Z_i^{(2)}-Z_j^{(2)})<0\right\}
\le\binom n2\frac{\arccos(\rho_n)}\pi=o(1).
\]
Outside this event, the maximal observation is the first-coordinate maximum. Given the arriving first coordinates, we can generate the second coordinates as $Z_i^{(2)}=\rho_nZ_i^{(1)}+\sqrt{1-\rho_n^2}\,W_i$ with fresh independent standard normal $W_i$. The simulation argument of Section~\ref{sec:arbitrary-dependence} therefore gives $p_n^\star(F_n)\le v_n+o(1)$. For ranks, couple the observations to two identical copies of the first-coordinate ranking; the inputs differ only on the displayed event. Their optimal probabilities therefore differ by $o(1)$. Applying the one-dimensional rules to the first coordinate gives the matching lower bounds, so the limiting optima are $v_\infty$ and $1/e$, respectively.

\subsection{Expected Frontier Size}
\label{gauss:comparison-independent}
We return to a fixed positive definite correlation matrix $\Sigma$.
Equation~\eqref{gauss:frontier-size} gives
$\log M_n\sim(d-1)\log\log n$. Thus Theorems~\ref{gauss:cutoff-consistency} and~\ref{gauss:headline} become
\begin{equation}\label{eq:frontier-cutoff-optimum}
1-\max_{1\le s<n}p_n(s)\sim\frac{\log M_n}{M_n},
\qquad
1-p_n^\star\sim\frac{2}{M_n\log M_n}.
\end{equation}
Here $p_n^\star$ denotes the optimum in either the full-information or the rank-based setting for the distribution in question. The improvement over the best fixed-cutoff rule is consequently asymptotic to $\tfrac12(\log M_n)^2$ for every fixed positive definite $\Sigma$, including $\Sigma=I_d$.

The two estimates in \eqref{eq:frontier-threshold-estimates} explain the shared calculation. At a threshold $u$, an observation meets it with probability about $M_nu$ and then has mean $q$ about $u/2$. With $m$ observations left and $mu\to0$, its chance of later domination is about $mu/2$.

In \eqref{eq:sharp-lower-energy}, the constant $2$ comes from multiplying the factor $(1-o(1))/(2M_n)$ from the distribution of $q$ by the factor $(4-o(1))/\log M_n$ from summing over stopping times. The upper bound uses the matching one-step estimates in \eqref{eq:threshold-conditional-inputs} and sums their contributions over time.

There is one further requirement for that upper bound. A point that meets the threshold may already be dominated by an earlier observation. The bound \eqref{gauss:negligible-covered-mass} controls how many such points are excluded, including when $\Sigma=I_d$. This part uses the geometry of the Gaussian comparison cone. Thus the common distribution estimates explain the shared stopping calculation, but do not by themselves prove the same conclusion for arbitrary dependent coordinates.

\section{Discussion}
\label{sec:discussion}

In our problem, $n$ independent observations from the same distribution
arrive one at a time. Each must be accepted or rejected immediately,
with no recall, and at most one may be selected. Success means selecting
a Pareto-maximal point in the complete sample: no other observation is
at least as good in every coordinate and strictly better in one. Such
a choice can represent a tradeoff among the criteria without being best
in any single coordinate.

The expected number of acceptable observations makes the benefit of
this flexibility precise. Let $M_n$ be the expected size of the Pareto
frontier and $p_n^\star$ the optimal success probability. For every
fixed dimension $d\ge2$, independent continuous coordinates and fixed
nonsingular Gaussian distributions satisfy
\[
1-p_n^\star\sim\frac{2}{M_n\log M_n}.
\]
The relation holds both with full information, where the distribution
is known and numerical values are observed, and with coordinate ranks
alone. Within these families, the growing expected frontier determines
how quickly the failure probability tends to zero. Gaussian correlation
changes the frontier size and, through it, the leading failure constant.

A simple cutoff rule already exploits these opportunities. After
rejecting an initial sample, it selects the first currently maximal
observation, meaning one that no earlier point dominates. Waiting longer
leaves fewer future observations that might dominate the selected point,
but also fewer opportunities to select. We characterize all cutoff
sequences whose success probability tends to one and determine the best
leading failure probability among them.

The time-dependent rule improves on this performance by considering
each candidate's domination risk, the probability that a new independent
observation dominates it. A currently maximal point is accepted only
when this risk falls below a threshold that rises as time runs out,
balancing later domination against the chance of selecting nothing.
The scarcity of observations with very small domination risk yields
a matching lower bound on failure. To attain this bound, the proof shows
that enough points meeting the threshold remain available after
excluding those dominated by earlier observations.

These risk estimates can be obtained from coordinate comparisons.
Under independence, counts from an initial sample estimate the
probability that a fresh observation exceeds a candidate in each
coordinate; their product estimates its domination risk. For nonsingular
Gaussian observations, the rule also estimates the correlations.
These estimates preserve the optimal leading failure probability
without requiring prior knowledge of the marginal distributions
or the Gaussian parameters.

Other forms of dependence can remove this abundance of acceptable
choices. Our counterexample has a joint density and uniform marginals,
with observations lying in cubes arranged so that every point in a higher
cube dominates every point in a lower one. With probability tending
to one, the highest occupied cube contains a single observation, which
is both the unique Pareto maximum and the first-coordinate maximum.
Given each first coordinate, the remaining coordinates can be generated
using fresh independent randomness, so they provide no additional
information for identifying this maximum. Even with full information,
the optimal success probability converges to the one-dimensional limit,
approximately $0.58$.

The two-dimensional Gaussian endpoints show how widely dependence can
change the difficulty of selection. Perfect positive correlation gives
the same ranking in both coordinates and recovers the one-dimensional
problem. Perfect negative correlation makes every observation maximal.

Pareto maximality treats different undominated tradeoffs as equally
successful, although specific preferences may distinguish sharply
between them. If these preferences become known only after the
observations have arrived, a decision maker may want to retain several
candidates as they arrive. For a specified class of preferences,
the question is how closely the best retained candidate can approximate
the best point in the complete sample under the eventual preferences,
and how the guarantee improves with the number of selections. Another
objective keeps a single selection but distinguishes degrees of failure:
require a choice by the end and minimize the expected number of
observations that dominate it.

Dependence across time raises a different issue. Our analysis allows
dependence between coordinates while successive observations are
independent. With temporal dependence, past observations can change
the conditional distribution of future candidates, so domination risk
must be assessed using both the current point and the history already
seen. Understanding how to use this information, and how it changes
the best achievable success probability, provides a further direction
for research.

\section*{Formalization}
A companion formalization in Lean~4 with mathlib is available in the
\href{https://github.com/GO-EPFL/pareto-secretary-formalization}{GitHub repository}.
The repository contains the formal proofs, instructions for checking them,
and a guide connecting the results in this paper to their Lean declarations.

\Needspace{15\baselineskip}
\section*{Declaration of AI usage}
We used OpenAI's ChatGPT and Codex to assist with mathematical proof
development and checking, including the refinement of estimates and
stopping rules. This paper builds on a preliminary version developed by
the authors. In preparing the current version, we determined the extensions
to pursue, guided their mathematical development, and critically evaluated
and revised the AI-assisted arguments.

These tools also assisted with improving the exposition and notation,
checking references, preparing LaTeX source and figures, and developing
and debugging the accompanying Lean~4 formalization. The authors reviewed
the resulting material and take full responsibility for the mathematical
results, proofs, citations, and final manuscript.

\bibliographystyle{alpha}

\phantomsection
\addcontentsline{toc}{section}{References}
\bibliography{references}

\appendix
\crefalias{section}{appendix}

\section{No Numerical Gain with a Known Rank Distribution}\label{app:rank-sufficiency}

When the rank distribution is known but the sampling distribution
is not, numerical values cannot improve the best worst-case guarantee.
We prove Theorem~\ref{thm:rank-sufficiency} and its extension to
arbitrary Borel probability distributions, including distributions
with atoms. In this extension, $\mathcal F(G)$ includes all such
distributions with rank-array distribution $G$.
Equal coordinate values receive equal ranks, so the rank array
records both ties and arrival order.
The reduction uses finite Ramsey theory, as in Correa et al.~\cite{correa2021},
and preserves the given dependence by changing only the numerical
coordinate scales. We use the following finite Ramsey theorem~\cite{ramsey1930}.

\begin{theorem}\label{thm:finite-ramsey}
For positive integers $t,M,K$, there is an integer $R_t(M;K)$ such that every $K$-coloring of the $t$-element subsets of a set of size at least $R_t(M;K)$ is constant on the $t$-element subsets of some $M$-element subset.
\end{theorem}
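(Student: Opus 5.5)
The statement to prove is the finite Ramsey theorem, Theorem~\ref{thm:finite-ramsey}. I would prove it by induction on the subset size $t$, following the classical argument of Ramsey~\cite{ramsey1930} and Erd\H{o}s--Szekeres. I will use the stronger ``off-diagonal'' form for the induction. For targets $M_1,\ldots,M_K$ there is $R_t(M_1,\ldots,M_K)$ such that every $K$-coloring of the $t$-subsets of a sufficiently large set has some color $i$ and an $M_i$-element subset all of whose $t$-subsets receive color $i$. The stated theorem is the case $M_1=\cdots=M_K=M$.

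\textbf{Base case and setup.} For $t=1$, the pigeonhole principle gives $R_1(M_1,\ldots,M_K)=\sum_i(M_i-1)+1$. If some $M_i<t$, the statement is vacuous: take any $M_i$-element set, which has no $t$-subsets to violate color $i$. Such cases serve as the base of an inner induction on $\sum_i M_i$.

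\textbf{Inductive step.} Assume the result for $t-1$ with every finite target list. For $t\ge2$ and targets $M_1,\ldots,M_K$, write $N_i=R_t(M_1,\ldots,M_i-1,\ldots,M_K)$; these exist by the inner induction on $\sum_i M_i$. I claim that
\[
R_t(M_1,\ldots,M_K)\le R_{t-1}(N_1,\ldots,N_K)+1.
\]
Given a set $S$ of that size and a coloring $\chi$, fix $x\in S$ and color each $(t-1)$-subset $B$ of $S\setminus\{x\}$ by $\chi(B\cup\{x\})$. By the hypothesis for $t-1$, some color $i$ and some $N_i$-element $Y\subseteq S\setminus\{x\}$ satisfy $\chi(B\cup\{x\})=i$ for all $(t-1)$-subsets $B\subseteq Y$. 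Apply the definition of $N_i$ to $\chi$ restricted to $Y$. This yields one of two outcomes:
\begin{itemize}
\item a color $j\ne i$ with an $M_j$-set monochromatic in $j$, which finishes the proof; or
\item an $(M_i-1)$-set $Y'\subseteq Y$ monochromatic in color $i$.
\end{itemize}
In the second case, $Y'\cup\{x\}$ is an $M_i$-set. Its $t$-subsets either lie inside $Y'$ or contain $x$ together with a $(t-1)$-subset of $Y$, so all of them have color $i$.

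\textbf{Main point of care.} The only delicate part is organizing the double induction correctly: the outer induction is on $t$ and the inner one is on $\sum_i M_i$. I must also check that the vacuous base cases ($M_i<t$, including $M_i-1<t$) are handled, so that the recursion terminates. Since the statement only asserts existence, no explicit bound is needed.
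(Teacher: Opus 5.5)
Your proof is correct. It is the standard Erd\H{o}s--Szekeres double induction. The off-diagonal strengthening is what lets the inner induction on $\sum_i M_i$ close, and the vacuous cases $M_i<t$ (including $M_i-1=t-1$) terminate it. The ``size at least'' clause follows by restricting the coloring to a subset of size exactly $R$.

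The paper gives no proof of this theorem to compare against; it quotes the result from Ramsey~\cite{ramsey1930} as a black box. The only Ramsey-type argument it writes out is the product version, \Cref{lem:product-ramsey}. That lemma is derived from this theorem by induction on the number of coordinates, using a profile-coloring step similar in spirit to your recoloring $B\mapsto\chi(B\cup\{x\})$.
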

For a finite set $X$, write
$[X]^t=\{A\subseteq X:|A|=t\}$ for its $t$-element subsets.
\begin{lemma}\label{lem:product-ramsey}
Fix positive integers $t_1,\ldots,t_d,M,K$, with $M\ge\max_k t_k$. There is a finite size such that, for ordered coordinate sets $X_k$ at least that large, every coloring $\prod_k[X_k]^{t_k}\to[K]$ is constant on $\prod_k[Y_k]^{t_k}$ for some $M$-element subsets $Y_k\subseteq X_k$.
\end{lemma}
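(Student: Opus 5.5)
We prove the lemma by induction on $d$, using Theorem~\ref{thm:finite-ramsey} once per coordinate. For $d=1$ it is exactly Theorem~\ref{thm:finite-ramsey}, with size $R_{t_1}(M;K)$. For the inductive step we treat the first coordinate as the one to which we apply Ramsey's theorem directly. The remaining $d-1$ coordinates get their homogeneous sets from the inductive hypothesis, but with an enlarged number of colors. The point is that a colouring of the product is determined coordinatewise once the other factors have been made homogeneous.

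The construction goes as follows. By the inductive hypothesis applied to coordinates $2,\ldots,d$ with the same $M$ and some palette size $K'$, there is a size $N'(K')$ that works for those coordinates. Set $N_1=R_{t_1}(M;K)$ and $K'=K^{\binom{N_1}{t_1}}$. Require $|X_1|\ge N_1$ and $|X_k|\ge N'(K')$ for $k\ge2$. First restrict $X_1$ to an arbitrary $N_1$-element subset $X_1'$.

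Next, define a colouring of $\prod_{k\ge2}[X_k]^{t_k}$ by sending $(A_2,\ldots,A_d)$ to the function $A_1\mapsto\chi(A_1,A_2,\ldots,A_d)$ on $[X_1']^{t_1}$. This uses at most $K'$ colours. The inductive hypothesis gives $M$-element sets $Y_k\subseteq X_k$, $k\ge2$, on which this function-valued colour is constant, say equal to $f$. Then $\chi(A_1,\ldots,A_d)=f(A_1)$ whenever $A_1\in[X_1']^{t_1}$ and $A_k\in[Y_k]^{t_k}$ for $k\ge2$. Finally, apply Theorem~\ref{thm:finite-ramsey} to the $K$-colouring $f$ of $[X_1']^{t_1}$. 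Since $|X_1'|=N_1=R_{t_1}(M;K)$, it gives an $M$-element $Y_1\subseteq X_1'$ on which $f$ is constant, so $\chi$ is constant on $\prod_k[Y_k]^{t_k}$.

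The condition $M\ge\max_k t_k$ only guarantees that the product sets are nonempty, so constancy is meaningful. Ordering of the coordinate sets plays no role beyond identifying the subsets. The main care point is the order of quantifiers. $X_1'$ must be fixed before colouring the other coordinates, so the palette size $K'$ is finite and depends only on $t_1,M,K$. This makes the required size a finite function of $(t_1,\ldots,t_d,M,K)$, defined recursively.
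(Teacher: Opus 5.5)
Your proof is correct and follows the paper's argument: induction on $d$, with a ``profile'' colouring that combines one application of the finite Ramsey theorem with the inductive hypothesis. The only difference is the order. You freeze the first coordinate at size $R_{t_1}(M;K)$ and give the enlarged palette $K^{\binom{N_1}{t_1}}$ to the inductive hypothesis. The paper instead freezes coordinates $2,\ldots,d$ at size $Q$ and gives the enlarged palette $K^{\prod_{k\ge2}\binom{Q}{t_k}}$ to the one-dimensional Ramsey theorem. Both orders work because the lemma is quantified over all $K$.
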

\begin{proof}
Induct on $d$, the case $d=1$ being the finite Ramsey theorem. Choose $Q$ large enough for the induction hypothesis on coordinates $2,\ldots,d$, and restrict those coordinate sets to size $Q$. Color a $t_1$-subset of $X_1$ by the profile of all its original colors over the remaining coordinate subsets. There are at most $K^{\prod_{k=2}^d\binom Q{t_k}}$ profiles. The finite Ramsey theorem gives an $M$-element subset of $X_1$ on which that profile is constant. Apply the induction hypothesis to its common profile to select the other $M$-element subsets.
\end{proof}

\begin{corollary}\label{cor:simultaneous-product-ramsey}
For finitely many such colorings, possibly with different subset sizes in each coordinate, sufficiently large coordinate sets contain $M$-element subsets on which all the colorings are constant.
\end{corollary}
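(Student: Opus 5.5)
The plan is to apply Lemma~\ref{lem:product-ramsey} once for each coloring, passing to nested subsets each time. Suppose there are finitely many colorings $\chi_1,\ldots,\chi_L$. Coloring $\chi_i$ maps $\prod_k[X_k]^{t_{i,k}}\to[K_i]$, with its own subset sizes $t_{i,k}$. Set $M_L=M$ and define the required sizes backwards. Let $M_{i-1}$ be a size large enough for Lemma~\ref{lem:product-ramsey} to apply to coloring $\chi_i$ with target size $M_i$, subset sizes $t_{i,k}$, and $K_i$ colors. If needed, we also take $M_{i-1}\ge\max_k t_{i,k}$ and $M_{i-1}\ge M_i$, so that the hypothesis $M\ge\max_k t_k$ holds at every stage. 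We take $M\ge\max_{i,k}t_{i,k}$ as well, which is implicit in the statement. The threshold for the corollary is then $M_0$: every coordinate set $X_k$ of size at least $M_0$ works.

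Given such $X_k$, we proceed inductively. Let $Y^{(0)}_k=X_k$. At step $i$, the sets $Y^{(i-1)}_k$ have size at least $M_{i-1}$. We restrict $\chi_i$ to $\prod_k[Y^{(i-1)}_k]^{t_{i,k}}$ and apply Lemma~\ref{lem:product-ramsey}. This gives $M_i$-element subsets $Y^{(i)}_k\subseteq Y^{(i-1)}_k$ on which $\chi_i$ is constant. Constancy on $\prod_k[Y^{(i)}_k]^{t_{i,k}}$ is inherited by any further subsets $Y'_k\subseteq Y^{(i)}_k$, because $[Y'_k]^{t}\subseteq[Y^{(i)}_k]^{t}$. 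So after step $L$, every $\chi_i$ is constant on the product built from the final sets $Y^{(L)}_k$, each of which has $M$ elements.

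An equivalent route is a single application of the lemma to a product coloring, but different subset sizes across colorings make it awkward to use one $t_k$ for all of them. The iteration avoids this issue. The only points needing care are the backwards choice of the $M_i$ and the inheritance of constancy to subsets, and both are routine.
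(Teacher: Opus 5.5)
Your proof is correct and is essentially the paper's argument: the paper also chooses the required sizes backward through the finite list of colorings, applies Lemma~\ref{lem:product-ramsey} successively to nested subsets, and notes that restriction preserves the constancy of the earlier colorings. One small slip is that the lemma's hypothesis at step $i$ concerns the target size, $M_i\ge\max_k t_{i,k}$, not $M_{i-1}$; this already follows from your choices $M_{i-1}\ge M_i$ and $M\ge\max_{i,k}t_{i,k}$.
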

\begin{proof}
Choose the required sizes backward through the finite list of colorings, and apply Lemma~\ref{lem:product-ramsey} successively. Restriction preserves the constancy of the earlier colorings.
\end{proof}

\begin{proof}[Proof of Theorem~\ref{thm:rank-sufficiency}]
Fix $F_0$, a numerical rule $\mathbf r$, and $\varepsilon>0$. All transformations below are chosen before the observations are drawn.

\proofstep{Cell partition.}
For any $h>0$, we can partition each coordinate line into finitely many ordered intervals and individual points, with probability at most $h$ in each interval. Take marginal quantiles at a mesh of size at most $h$, make each distinct endpoint a separate cell, and use the intervening open intervals as the other cells. The two unbounded end intervals also have mass at most $h$. There are finitely many cells. Let $M$ be at least the number of cells in every coordinate and at least $n$.

For an i.i.d. sample $Z_1,\ldots,Z_n\sim F_0$, let $D$ be the event that distinct observed values in each coordinate lie in different cells. Equal values are allowed and remain equal. For a pair of observations, the probability of distinct values in the same interval cell is at most the sum of the squared cell probabilities, at most $h$. Hence
\begin{equation*}
 \Prb\left\{D^c\right\}\le d\binom n2 h.
\end{equation*}
This uses independence across observations, not across coordinates.

\proofstep{Ramsey reduction.}
For a time-$t$ rank array $g$, including its ties and arrival order, let $m_k(g)$ be the number of distinct values in coordinate $k$. The ordered lists of distinct coordinate values, together with $g$, determine all the numerical observations through time $t$. Thus $r_t$ becomes a bounded measurable function $f_{t,g}$ of $\sum_k m_k(g)$ real variables on the open domain where each coordinate list is strictly increasing.

Fix $\delta>0$. Choose sufficiently large finite ordered grids $X_k\subset(0,1)$, and color each tuple in $\prod_k[X_k]^{m_k(g)}$ by a subinterval of $[0,1]$ of length at most $\delta$ containing $f_{t,g}$ at that tuple. Do this for every $t\le n$ and every rank array $g$. By the Lebesgue differentiation theorem, almost every tuple is a
Lebesgue point of the corresponding bounded measurable function.
Choose the grids by independent sampling from continuous densities.
Every tuple then has a density, so, with probability one, all the
finitely many tuples used are Lebesgue points.

Corollary~\ref{cor:simultaneous-product-ramsey} gives $M$-element subgrids $S_k$ on which all these colorings are constant. Choose a number $r'_t(g)$ in the common color interval. Then
\begin{equation}\label{eq:rank-general-grid}
 |f_{t,g}(x)-r'_t(g)|\le\delta
\end{equation}
at every grid tuple with rank array $g$. The numbers $r'_t(g)$ define a rank-based stopping rule. Assign the ordered cells of coordinate $k$ to distinct, ordered points of $S_k$; unused grid points do not matter.

\proofstep{Monotone transformations.}
In an interval cell assigned to grid point $a$, take
$v(x)=2\arctan(x)/\pi$; for a cell containing one point take $v=0$.
Define the coordinate transformation on that cell by
\[
 T(x)=a+U+bv(x),\qquad U\sim\operatorname{Unif}(-b,b).
\]
Use independent offsets for every coordinate and cell. Choose $b>0$ so small that the intervals $(a-2b,a+2b)$ are disjoint and ordered. For every fixed choice of the offsets, each $T_k$ is strictly increasing on all of $\mathbb R$. It preserves comparisons and ties exactly, so the transformed i.i.d. distribution $F_T$ belongs to $\mathcal F(G)$. On each product of interval cells, the transformation has a
continuously differentiable inverse. Thus, whenever $F_0$ is
absolutely continuous, so is $F_T$ for every fixed choice of the
offsets; the cell boundaries have probability zero.
The construction therefore also stays within the absolutely
continuous class used in Theorem~\ref{thm:rank-sufficiency}.

Fix an original sample in $D$. Distinct coordinate values of any prefix use different cell offsets. Thus, the transformed lists of distinct values have a joint density, with independent coordinates in this list, bounded by $(2b)^{-\sum_k m_k(g)}$ on the cube of radius $2b$ around their grid tuple. This independence concerns the random offsets only; no independence of the original sample's coordinates is assumed. Repeated values use the same offset and are represented by a single variable in $f_{t,g}$.

There are finitely many grid tuples, all Lebesgue points. We can therefore decrease $b$ so that the average of $|f_{t,g}(y)-f_{t,g}(a)|$ over every such cube is at most $2^{-dn}\delta$. The density bound, whose ratio to the uniform cube density is at most $2^{dn}$, then gives, uniformly over original samples in $D$,
\[
 \E_T\left[|r_t(T(Z_1),\ldots,T(Z_t))-f_{t,g}(a)|\right]\le\delta.
\]
Together with \eqref{eq:rank-general-grid}, this bounds the expected difference from $r'_t(g)$ by $2\delta$.

Run both stopping rules using the same extra uniform random variables. The probability of a first disagreement is at most the sum of the differences of their stopping probabilities. Therefore
\[
 \E_T\left[|p_{F_T}(\mathbf r)-p_{F_T}(\mathbf r')|\right]
 \le d\binom n2 h+2n\delta.
\]
Choose $h,\delta>0$ so that this is at most $\varepsilon/2$. Every $F_T$ has the same rank distribution, hence the same value for $\mathbf r'$. Some fixed choice of the transformations satisfies the stated inequality with error $\varepsilon$. The transformed observations are i.i.d. for each such fixed choice; the mixture over transformations is used only to prove existence.

Thus every numerical rule has worst-case value at most $p^\star_{n,\mathrm{rank}}(G)+\varepsilon$. Letting $\varepsilon\downarrow0$ gives one minimax inequality. A rank-based optimal rule has the same value under every $F\in\mathcal F(G)$, giving the reverse inequality. Such an optimal rule exists because there are finitely many rank histories and decisions. The same argument applies to the smaller family consisting only of increasing coordinate transformations of any fixed $F_0$, which proves the stronger assertion of the theorem.
\end{proof}

\section{Gaussian Probability Estimates}
\label{app:stopping-estimates}

The estimates below apply to Sections~\ref{sec:fixed-cutoffs}, \ref{sec:optimal-stopping}, and \ref{sec:gaussian}. Independent coordinates correspond to $\Sigma=I_d$, or $C=\mathbb R_+^d$, with $a_I=1/(d-1)!$. We first estimate the distribution of $q$ and the probability left undominated, then bound the errors from estimating probabilities using coordinate ranks.

\Needspace{4\baselineskip}
\bigskip
\noindent\textbf{Moment identities}

\nopagebreak

We first record identities valid for any distribution of the observations. For any observation $Z$, let $H(v)=\Prb\left\{q(Z)\le v\right\}$. Integration by parts gives, for $0<u\le1$, $s\ge2$, and $\lambda>0$,
\begin{equation*}
\begin{aligned}
\E\left[q(Z)\ind\{q(Z)\le u\}\right]
 &=uH(u)-\int_0^u H(v)\,dv,\\
\E\left[(1-q(Z))^{s-1}\right]
 &=(s-1)\int_0^1(1-v)^{s-2}H(v)\,dv,\\
\E\left[e^{-\lambda q(Z)}\right]
 &=e^{-\lambda}+\lambda\int_0^1e^{-\lambda v}H(v)\,dv.
\end{aligned}
\end{equation*}
If $H(v)/(v(\log(1/v))^{d-1})$ tends to a positive finite constant as $v\downarrow0$, these identities imply
\begin{equation}\label{eq:distribution-moment-asymptotics}
\begin{aligned}
\E\left[q(Z)\ind\{q(Z)\le u\}\right]&\sim\tfrac12uH(u),\\
\E\left[(1-q(Z))^{s-1}\right]&\sim H(1/s),\\
\E\left[e^{-\lambda q(Z)}\right]&\sim H(1/\lambda),\\
\E\left[q(Z)e^{-\lambda q(Z)}\right]&=O\bigl(H(1/\lambda)/\lambda\bigr).
\end{aligned}
\end{equation}
Here $u\downarrow0$, $s\to\infty$, and $\lambda\to\infty$, respectively. The first three estimates follow by dominated convergence after the substitutions $v=ut$, $v=t/s$, and $v=t/\lambda$. Integrable bounds are constant multiples of $t(1+|\log t|)^{d-1}$ on $(0,1)$ and $te^{-t/2}(1+|\log t|)^{d-1}$ on $(0,\infty)$. The last uses $xe^{-\lambda x}\le2(e\lambda)^{-1}e^{-\lambda x/2}$.

The same substitution gives the moments used to analyze fixed-cutoff rules. For $k=0,1$ and fixed $T\in(0,\infty]$,
\begin{equation}\label{eq:distribution-cutoff-moments}
\int_0^{\min\{T/s,1\}}v^k(1-v)^s\,dH(v)
\sim\frac{H(1/s)}{s^k}\int_0^T t^ke^{-t}\,dt.
\end{equation}
Here $\min\{T/s,1\}=1$ when $T=\infty$. Integration by parts and the same integrable bounds justify the limit; the boundary term at $T/s$ is included when $T<\infty$.

The following bound also applies to other distributions and transitive comparisons. Write $A_s=F(\mathcal U_s)$ and assume $A_s>0$ almost surely. If $z\in \mathcal U_s$, every point dominating $z$ also belongs to $\mathcal U_s$, so $q(z)\le A_s$. Consequently,
\begin{equation}\label{eq:undominated-denominator-bound}
\begin{aligned}
0\le A_s^{-2}\int_{\mathcal U_s}q\,dF&\le1,\\
\E\left[A_s^{-2}\int_{\mathcal U_s}q\,dF\right]
&\le\frac4{(\E\left[A_s\right])^2}\E\left[\int_{\mathcal U_s}q\,dF\right]
   +\Prb\left\{A_s<\E\left[A_s\right]/2\right\}.
\end{aligned}
\end{equation}
The second line splits according to $A_s\ge\E\left[A_s\right]/2$.

\Needspace{4\baselineskip}
\bigskip
\noindent\textbf{Gaussian tail asymptotics}

\nopagebreak

In this appendix, $\|x\|$ is the Euclidean norm and matrix norms are its induced operator norms. We write $\operatorname{int}(C)$ for the interior of $C$, $B(0,t)=\{x:\|x\|<t\}$, and $\vol$ for Lebesgue volume. Let $\gamma$ be standard Gaussian probability on $\mathbb R^d$, with density $\phi(x)=(2\pi)^{-d/2}e^{-\|x\|^2/2}$. Fix a closed convex cone $C$ with nonempty interior and $C\cap(-C)=\{0\}$. As in Section~\ref{sec:gaussian},
\[
q_C(x)=\gamma(x+C),\qquad H(v)=\Prb\left\{q_C(X)\le v\right\},\qquad X\sim\gamma.
\]
We use the dual cone $C^*$ defined in Section~\ref{gauss:cone-stopping-section}. For $x\in\operatorname{int}(C^*)$, define
\[
J(x)=\int_Ce^{-x\cdot z}\,dz.
\]
It is finite since $x\cdot z$ is bounded below by a positive multiple of $\|z\|$ on $C$. Scaling gives $J(tx)=t^{-d}J(x)$.

\begin{theorem}\label{gauss:tail-asymptotics}
For a fixed cone $C$ as above, in any dimension $d\ge1$,
\begin{equation}\label{gauss:cone-tail-exact}
H(v)\sim v(\log(1/v))^{d-1}
 \left(2^{d-1}\int_{S^{d-1}\cap\operatorname{int}(C^*)}
                 \frac{d\sigma(\theta)}{J(\theta)}\right).
\end{equation}
Here $S^{d-1}$ is the unit sphere and $\sigma$ is its surface measure; for $d=1$, it is counting measure on $\{-1,1\}$. The integral is finite and strictly positive.
\end{theorem}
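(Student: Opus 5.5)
Work in polar coordinates $x=t\theta$ with $\theta\in S^{d-1}$ and $t>0$. Everything reduces to the asymptotics of $q_C(t\theta)$ as $t\to\infty$, uniformly enough in $\theta$.

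\textbf{Directions inside $C^*$.} For $\theta\in\operatorname{int}(C^*)$ write
\[
q_C(t\theta)=\phi(t\theta)\int_C e^{-t\theta\cdot z-\|z\|^2/2}\,dz .
\]
Substitute $z=w/t$. Since $\theta\cdot w\ge c_\theta\|w\|$ on $C$, dominated convergence gives
\[
q_C(t\theta)=\phi(t\theta)\,t^{-d}J(\theta)(1+o(1)),
\]
locally uniformly on compact subsets of $\operatorname{int}(C^*)$. So $q_C(t\theta)\le v$ holds exactly when $t\ge t_v(\theta)$, where
\[
t_v(\theta)^2/2+d\log t_v(\theta)=\log(1/v)-\log J(\theta)-\tfrac d2\log(2\pi)+o(1).
\]
In particular $t_v\sim\sqrt{2\log(1/v)}$.

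The Gaussian mass of $\{t\ge t_v(\theta)\}$ along the direction $\theta$ is
\[
\int_{t_v}^\infty\phi(t\theta)t^{d-1}\,dt\sim\phi(t_v\theta)\,t_v^{d-2}=v\,t_v^{2d-2}/J(\theta)\,(1+o(1)).
\]
With $t_v^{2}\sim2\log(1/v)$ this equals $v\,2^{d-1}(\log(1/v))^{d-1}/J(\theta)$. Integrating over $\theta$ on a compact subset $K\subset\operatorname{int}(C^*)$ yields the claimed constant restricted to $K$.

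\textbf{Finiteness and positivity.} Positivity of the integral is immediate: $C\cap(-C)=\{0\}$ makes $C^*$ have nonempty interior, and $J$ is finite and positive there. Finiteness needs a bound on $1/J(\theta)$ near $\partial C^*$. Here I would use that $J(\theta)\to\infty$ there, because $\theta\cdot z$ degenerates along some boundary ray of $C$. So $1/J$ is bounded, and the integral is finite.

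\textbf{Remaining directions (main obstacle).} The hard part is showing that the remaining directions contribute $o(v(\log 1/v)^{d-1})$. These are directions in a thin neighborhood of $\partial C^*$, and directions outside $C^*$.

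For $\theta$ outside $C^*$ at angular distance $\ge\eta$, pick a unit $y\in C$ with $\theta\cdot y\le-c_\eta$. A ball of radius $\asymp t$ around $t\theta+ty\,c'$ lies in $t\theta+C$ and has norm $\le(1-\kappa)t$. Hence $q_C(t\theta)\ge e^{-(1-\kappa)^2t^2/2}$, up to a polynomial factor. Then $q_C\le v$ forces $t^2\ge(1+\kappa')2\log(1/v)$, so this region has mass $v^{1+\kappa''}=o(v)$.

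For the $\eta$-band around $\partial C^*$, I would show a uniform lower bound $q_C(t\theta)\ge c\,\phi(t\theta)\,t^{-d}/\varepsilon(\eta)$ with $\varepsilon(\eta)\to0$ as $\eta\to0$. The idea is to integrate only over a fixed subcone where $\theta\cdot z\ge -\eta\|z\|$; this is essentially $J$ blowing up near the boundary. The band then contributes at most $\varepsilon(\eta)\,v(\log 1/v)^{d-1}$, and letting $\eta\to0$ finishes.

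Getting this uniform bound near $\partial C^*$ is the delicate step. The polynomial factors in $t$ must not beat the $\eta$-dependence, so I would carry explicit constants in the subcone comparison. The case $d=1$ fits the same scheme: $C$ is a half-line, $C^*$ is the same half-line, $J(\pm1)=1$, and the sphere is $\{\pm1\}$.
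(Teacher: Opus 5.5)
Your plan works and your interior computation matches the paper's, but you handle the remaining directions by a harder route than needed. The paper never isolates a band around $\partial C^*$. Restricting the integral in the translation identity to $(1+r)^{-1}(C\cap B(0,1))$ gives $q_C(r\theta)\ge c_Ce^{-r^2/2}(1+r)^{-d}$ for every $\theta$ and every $r\ge0$, hence $A_v(\theta)=O_C(v(\log(1/v))^{d-1})$ uniformly on the sphere. The pointwise limits are $2^{d-1}/J(\theta)$ on $\operatorname{int}(C^*)$ and $0$ off $C^*$; the latter needs only a pointwise ball argument with $z_0\in\operatorname{int}(C)$, with no uniformity in $\eta$. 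Since $\partial C^*\cap S^{d-1}$ is $\sigma$-null, dominated convergence finishes, and the same uniform bound gives finiteness of the coefficient.

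So the step you call delicate disappears: a crude $O(1)$ constant per direction suffices on the band, because the band's measure tends to zero. Your route avoids the null-boundary fact and gives an explicit $O(\eta)$ rate on the band, at the price of a quantitative blow-up of $J$. For finiteness, $J(\theta)\ge e^{-1}\vol(C\cap B(0,1))$ for unit $\theta\in C^*$ is more direct than arguing through $J\to\infty$ at the boundary.

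As written, though, your band step does not work. The condition $\theta\cdot z\ge-\eta\|z\|$ bounds $e^{-t\theta\cdot z}$ from above, so it cannot give a lower bound. You need a large part of $C$ where $\theta\cdot z$ is small, and it must depend on $\theta$. Here is a fix.

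Take $\theta_0\in\partial C^*\cap S^{d-1}$ with $\|\theta-\theta_0\|<\eta$, a unit $z_0\in C$ with $\theta_0\cdot z_0=0$, and a fixed ball $B(w,\rho)\subset C$. The cylinder $\{sz_0+w+b:0\le s\le1/\eta,\ \|b\|<\rho\}$ lies in $C$, and on it $\theta\cdot y\le\eta s+\|w\|+\rho$. Hence $\int_Ce^{-\theta\cdot y-\|y\|^2/(2t^2)}\,dy\ge c_C/\eta$ once $t\ge1/\eta$. This gives your bound with $\varepsilon(\eta)=O_C(\eta)$, but only for $t\ge1/\eta$. Smaller $t$ must be excluded separately, for instance by the crude uniform bound above.

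Two minor slips remain. In the outside step, $y$ must lie in $\operatorname{int}(C)$ for a ball around $tc'y$ to fit inside $C$. For $d=1$ with $C=[0,\infty)$, only $\theta=1$ lies in $\operatorname{int}(C^*)$ and $J(-1)=\infty$. The coefficient is therefore $1$, consistent with $H(v)=v$.
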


\begin{proof}
We compute the radial contribution in each direction, then use a common bound to integrate over directions. Translating the Gaussian density gives
\begin{equation}\label{gauss:cone-tail-identity}
 q_C(r\theta)=\phi(0)e^{-r^2/2}
       \int_C\exp\{-r\theta\cdot z-\|z\|^2/2\}\,dz.
\end{equation}
Substituting $y=rz$ and applying dominated convergence gives, uniformly on each compact subset of $\operatorname{int}(C^*)\cap S^{d-1}$,
\begin{equation}\label{gauss:cone-corner-laplace}
 q_C(r\theta)\sim \phi(0)J(\theta)e^{-r^2/2}r^{-d}.
\end{equation}
Here domination follows from $\theta\cdot y\ge c\|y\|$ on each such compact set. Write the radial contribution to $H(v)$ as
\[
 A_v(\theta)=\phi(0)\int_0^\infty
  \ind\{q_C(r\theta)\le v\}e^{-r^2/2}r^{d-1}\,dr.
\]
If $t_v$ solves $\phi(0)J(\theta)e^{-t_v^2/2}t_v^{-d}=v$, then $t_v^2\sim2\log(1/v)$ and the Gaussian radial tail satisfies
\[
 \phi(0)\int_{t_v}^\infty e^{-r^2/2}r^{d-1}\,dr
 \sim \phi(0)e^{-t_v^2/2}t_v^{d-2}
 =\frac{v\,t_v^{2d-2}}{J(\theta)}.
\]
Sandwiching \eqref{gauss:cone-corner-laplace} between its $(1\pm\varepsilon)$ multiples gives
\begin{equation}\label{gauss:cone-angular-limit}
 \frac{A_v(\theta)}{v(\log(1/v))^{d-1}}
 \longrightarrow\frac{2^{d-1}}{J(\theta)}
 \quad\text{for }\theta\in\operatorname{int}(C^*).
\end{equation}
For these upper and lower bounds, any bounded interval of radii contributes nothing once $v$ is small enough, because $q_C$ is continuous and strictly positive.

Next fix $\theta\notin C^*$. Choose $z_0\in\operatorname{int}(C)$ with $\theta\cdot z_0<0$, and then a small $\lambda>0$ such that $\|\theta+\lambda z_0\|<1$. For large $r$, the ball $r\lambda z_0+B(0,1)$ lies in $C$. Choose $\beta\in(\|\theta+\lambda z_0\|^2,1)$. Integrating the Gaussian density over the corresponding dominating ball gives, for large $r$,
\[
q_C(r\theta)\ge
\int_{r(\theta+\lambda z_0)+B(0,1)}\phi(0)e^{-\|z\|^2/2}\,dz
\ge c_{C,\theta} e^{-\beta r^2/2}
\]
for some $c_{C,\theta}>0$. Thus $q_C(r\theta)\le v$ requires $r^2\ge(2/\beta)\log(c_{C,\theta}/v)$ once $v$ is small. The resulting radial contribution is $O_{C,\theta}(v^{1/\beta}(\log(1/v))^{(d-2)/2}) =o(v(\log(1/v))^{d-1})$.

To integrate these limits across all directions, restrict the integral in \eqref{gauss:cone-tail-identity} to $(1+r)^{-1}(C\cap B(0,1))$. For every direction and every $r\ge0$,
\begin{equation}\label{gauss:cone-uniform-lower}
q_C(r\theta)\ge c_C\frac{e^{-r^2/2}}{(1+r)^d},\qquad
c_C=\phi(0)e^{-3/2}\vol(C\cap B(0,1))>0.
\end{equation}
With this fixed $c_C$, let $r_v$ solve $c_Ce^{-r_v^2/2}(1+r_v)^{-d}=v$. The right side of \eqref{gauss:cone-uniform-lower} is decreasing in $r$; therefore $q_C(r\theta)\le v$ implies $r\ge r_v$, for every $\theta$. Moreover,
\[
 r_v^2=2\log(1/v)-d\log\log(1/v)+O_C(1).
\]
Thus, uniformly over the sphere,
\[
 A_v(\theta)\le \phi(0)\int_{r_v}^\infty e^{-r^2/2}r^{d-1}\,dr
 \le O_C\bigl(v(1+r_v)^dr_v^{d-2}\bigr)
 =O_C\bigl(v(\log(1/v))^{d-1}\bigr).
\]
This also holds for $d=1$, using the usual one-dimensional Mills bound. The boundary of the full-dimensional dual cone has zero sphere measure. Dominated convergence applied to $H(v)=\int_{S^{d-1}}A_v(\theta)\,d\sigma(\theta)$ proves \eqref{gauss:cone-tail-exact}.  The coefficient is finite by this same uniform bound and strictly positive because the dual cone has nonempty interior.
\end{proof}

For a Gaussian distribution, write $Z=AX$ and $\Sigma=AA^{\mathsf T}$, where $A$ has $d$ linearly independent columns. If $C=\{x:Ax\ge0\}$ has nonempty interior, the coefficient in \eqref{gauss:cone-tail-exact} is $a_\Sigma$, since $q_C(X)=q_F(Z)$. Thus $d$ here is the rank of $\Sigma$ and can be smaller than the number of coordinates of $Z$. We use this dimension convention through Lemma~\ref{lem:cutoff-common-replacements}. For a general cone, the same proofs use the coefficient displayed in \eqref{gauss:cone-tail-exact} instead.

For a compact $P\subset\operatorname{int}(C^*)\cap S^{d-1}$, put
\begin{equation*}
a_P=2^{d-1}\int_P\frac{d\sigma(\theta)}{J(\theta)}.
\end{equation*}
Equation~\eqref{gauss:cone-angular-limit} gives the corresponding asymptotic restricted to directions in $P$. Enlarging $P$ within the interior of $C^*$ makes $a_P$ approach $a_\Sigma$.

\begin{corollary}\label{gauss:pd-tail-exact}\label{gauss:bivariate-coefficient}
For $Z\sim \mathcal N_d(\mu,\Sigma)$ with positive definite $\Sigma$, \eqref{gauss:gaussian-tail} holds with the coefficient in \eqref{gauss:sigma-orthant-coefficient}. Equivalently,
\begin{equation}\label{gauss:sigma-angular-coefficient}
a_\Sigma=2^{d-1}(\det\Sigma)^{1/2}
  \int_{\substack{\theta\in S^{d-1}\\\Sigma^{-1/2}\theta>0}}
       \prod_{k=1}^d(\Sigma^{-1/2}\theta)^{(k)}\,d\sigma(\theta).
\end{equation}
Positive diagonal rescaling leaves this coefficient unchanged, and $a_I=1/(d-1)!$. In dimension two, \eqref{gauss:rho-tail-coefficient} gives the coefficient as a function of the correlation.
\end{corollary}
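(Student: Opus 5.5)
We apply Theorem~\ref{gauss:tail-asymptotics} to the cone $C=\Sigma^{-1/2}\mathbb R_+^d$ of Section~\ref{gauss:cone-representation} and evaluate its coefficient explicitly.

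\proofstep{Reduction to the cone.} Comparisons are unchanged by subtracting $\mu$, so $q_F(z)=q_C(\Sigma^{-1/2}(z-\mu))$ with $X=\Sigma^{-1/2}(Z-\mu)\sim\mathcal N_d(0,I_d)$. Here $\Sigma$ is a positive definite covariance, not necessarily a correlation matrix. The cone $C$ is closed and convex, has nonempty interior, and satisfies $C\cap(-C)=\{0\}$, since $\Sigma^{-1/2}$ is invertible. Theorem~\ref{gauss:tail-asymptotics} therefore gives \eqref{gauss:gaussian-tail} with coefficient
\[
2^{d-1}\int_{S^{d-1}\cap\operatorname{int}(C^*)}J(\theta)^{-1}\,d\sigma(\theta).
\]
It remains to identify this coefficient with the two displayed formulas. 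By symmetry of $\Sigma^{-1/2}$, we have $x\cdot\Sigma^{-1/2}y=(\Sigma^{-1/2}x)\cdot y$. Hence $C^*=\{x:\Sigma^{-1/2}x\ge0\}$ and $\operatorname{int}(C^*)=\{\Sigma^{-1/2}x>0\}$.

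\proofstep{Computing $J$ and the two formulas.} For $\theta\in\operatorname{int}(C^*)$, substitute $z=\Sigma^{-1/2}y$ with $y\in\mathbb R_+^d$, and write $w=\Sigma^{-1/2}\theta>0$. Then
\[
J(\theta)=(\det\Sigma)^{-1/2}\int_{\mathbb R_+^d}e^{-w\cdot y}\,dy=(\det\Sigma)^{-1/2}\prod_k (w^{(k)})^{-1}.
\]
Substituting this into the coefficient gives exactly \eqref{gauss:sigma-angular-coefficient}.

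To pass to \eqref{gauss:sigma-orthant-coefficient}, note that the integrand $\prod_k(\Sigma^{-1/2}x)^{(k)}$ is homogeneous of degree $d$. Integrating it against $e^{-\|x\|^2/2}$ over $\{\Sigma^{-1/2}x>0\}$ in polar coordinates gives the angular integral times
\[
\int_0^\infty r^{2d-1}e^{-r^2/2}\,dr=2^{d-1}(d-1)!.
\]
Hence
\[
a_\Sigma=\frac{(\det\Sigma)^{1/2}}{(d-1)!}\int_{\Sigma^{-1/2}x>0}\prod_k(\Sigma^{-1/2}x)^{(k)}e^{-\|x\|^2/2}\,dx.
\]
Now substitute $x=\Sigma^{1/2}z$. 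This gives $dx=(\det\Sigma)^{1/2}dz$ and $\|x\|^2=z^{\mathsf T}\Sigma z$, which yields \eqref{gauss:sigma-orthant-coefficient}.

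\proofstep{Rescaling invariance and the identity case.} Let $D$ be positive diagonal and replace $\Sigma$ by $D\Sigma D$. Substituting $w=Dz$ in the orthant integral gives $\prod_k z^{(k)}=\prod_k w^{(k)}/\det D$ and $dz=dw/\det D$. This factor $(\det D)^{-2}$ cancels $\det(D\Sigma D)=(\det D)^2\det\Sigma$, so the coefficient is unchanged. For $\Sigma=I_d$, the orthant integral factorizes, and each factor is $\int_0^\infty ze^{-z^2/2}\,dz=1$; thus $a_I=1/(d-1)!$.

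\proofstep{Bivariate formula.} Take $\Sigma$ with unit diagonal and off-diagonal entry $\rho$, and use polar coordinates $z=(r\cos\varphi,r\sin\varphi)$ with $\varphi\in(0,\pi/2)$. The quadratic form is $z^{\mathsf T}\Sigma z=r^2(1+\rho\sin2\varphi)$, so the radial integral is
\[
\int_0^\infty r^3e^{-r^2(1+\rho\sin2\varphi)/2}\,dr=\frac{2}{(1+\rho\sin2\varphi)^2}.
\]
Substituting $\psi=2\varphi$ then gives
\[
a_\rho=\frac{1-\rho^2}{2}\int_0^\pi\frac{\sin\psi}{(1+\rho\sin\psi)^2}\,d\psi.
\]
This elementary integral is the main computational step. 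We handle it by differentiating in $\rho$ the standard integral
\[
K(\rho)=\int_0^\pi\frac{d\psi}{1+\rho\sin\psi}=\frac{2\arccos\rho}{\sqrt{1-\rho^2}}.
\]
Differentiation under the integral gives $a_\rho=-\tfrac12(1-\rho^2)K'(\rho)$. Here
\[
K'(\rho)=-\frac{2}{1-\rho^2}+\frac{2\rho\arccos\rho}{(1-\rho^2)^{3/2}},
\]
so $a_\rho=1-\rho\arccos\rho/\sqrt{1-\rho^2}$, which is \eqref{gauss:rho-tail-coefficient}. As a check, this equals $1$ at $\rho=0$.
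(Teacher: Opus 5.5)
Your proof is correct and follows essentially the same route as the paper. You apply Theorem~\ref{gauss:tail-asymptotics} to $C=\Sigma^{-1/2}\mathbb R_+^d$, compute $J$ by the linear substitution, pass to the orthant form via the radial integral $2^{d-1}(d-1)!$, and get the bivariate formula by differentiating an arctangent-type integral in $\rho$. The differences are cosmetic: you parametrize the quadrant by the angle $\psi=2\varphi$ where the paper uses the slope $t=z^{(2)}/z^{(1)}$ (the integrals correspond under $t=\tan(\psi/2)$), and you check diagonal-rescaling invariance by a direct change of variables rather than by noting that rescaling preserves comparisons and hence $H$.
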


\begin{proof}
After translation, write $Z=\Sigma^{1/2}X$ with $X\sim \mathcal N_d(0,I_d)$. Substitute $z=\Sigma^{-1/2}y$ in the integral defining $J$ to obtain
\[
J(\theta)=(\det\Sigma)^{-1/2}
\prod_{k=1}^d\bigl[(\Sigma^{-1/2}\theta)^{(k)}\bigr]^{-1}
\]
for directions inside the dual cone. This proves \eqref{gauss:sigma-angular-coefficient}. To obtain \eqref{gauss:sigma-orthant-coefficient}, substitute $z=\Sigma^{-1/2}t\theta$ and integrate in $t$ to obtain
\[
\begin{aligned}
&\int_{(0,\infty)^d}\left(\prod_{k=1}^d z^{(k)}\right)
                                  e^{-z^{\mathsf T}\Sigma z/2}\,dz\\
&\quad=(\det\Sigma)^{-1/2}2^{d-1}(d-1)!
\int_{\substack{\theta\in S^{d-1}\\\Sigma^{-1/2}\theta>0}}
                 \prod_{k=1}^d(\Sigma^{-1/2}\theta)^{(k)}\,d\sigma(\theta).
\end{aligned}
\]
The radial integral is $2^{d-1}(d-1)!$. Diagonal rescaling preserves comparisons, and the integral factors into $d$ unit integrals when $\Sigma=I$.

For the bivariate formula, the geometry in Figure~\ref{gauss:fig-cones} can be obtained by writing $Z=(a_1\cdot X,a_2\cdot X)$, where $a_1=(\cos\alpha,\sin\alpha)$, $a_2=(\cos\alpha,-\sin\alpha)$, and $\alpha=\tfrac12\arccos\rho$. The comparison cone $\{x:a_1\cdot x\ge0,\ a_2\cdot x\ge0\}$ has opening $\pi-\arccos\rho$, and its dual has opening $\arccos\rho$. In \eqref{gauss:sigma-orthant-coefficient}, substitute $z^{(2)}=t z^{(1)}$ and integrate over $z^{(1)}$. This gives
\[
a_\rho=2(1-\rho^2)\int_0^\infty
                 \frac{t\,dt}{(t^2+2\rho t+1)^2}.
\]
Since
\[
\int_0^\infty\frac{dt}{t^2+2\rho t+1}
=\frac{\arccos\rho}{\sqrt{1-\rho^2}},
\]
differentiating in $\rho$ proves \eqref{gauss:rho-tail-coefficient}. In particular, $a_0=1$, $a_\rho\sim\tfrac23(1-\rho)$ as $\rho\uparrow1$, and $a_\rho\sim\pi/\sqrt{2(1+\rho)}$ as $\rho\downarrow-1$.
\end{proof}

\Needspace{4\baselineskip}
\bigskip
\noindent\textbf{Domination by earlier observations}

\nopagebreak

The next bound is used both to control the points excluded by the sample and to bound the variance of the probability left undominated. For the explicit cutoff, its constant must remain bounded while the set of directions expands towards the boundary of $C^*$. For all sufficiently large $n$, put
\[
P_n=\left\{\theta\in S^{d-1}:\theta\cdot z\ge
       \frac{\|z\|}{\log\log n}\text{ for every }z\in C\right\}.
\]
Every fixed compact subset of $\operatorname{int}(C^*)\cap S^{d-1}$ is contained in $P_n$ for all sufficiently large $n$.

\begin{lemma}\label{gauss:covered-probability}
Fix $d\ge2$ and a cone $C$ as above. There is a constant $c_C$ such that, for all sufficiently large $n$, all $n^{-1}(\log n)^{-22d}\le u\le n^{-1}(\log n)^2$, and all $\|x_i\|\le\sqrt{2\log n+40d\log\log n}$,
\begin{equation}\label{gauss:covered-weighted}
\begin{aligned}
&\Prb\left\{q_C(X)\le u,\ X/\|X\|\in P_n,\ x_i-X\in C\right\}\\
&\qquad\le c_Cu\,\ind\{q_C(x_i)\le u\}
              \left(1+\log\frac{u}{q_C(x_i)}\right)^d.
\end{aligned}
\end{equation}
The right side is defined to be zero when $q_C(x_i)>u$.
\end{lemma}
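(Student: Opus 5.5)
The indicator comes first, from monotonicity of $q_C$. If $x_i-X\in C$, then $x_i+C\subseteq X+C$, so $q_C(x_i)\le q_C(X)$. The event in \eqref{gauss:covered-weighted} is therefore empty unless $q_C(x_i)\le u$, and we may assume this from now on. Write $X=x_i-y$ with $y\in C$ and $\Delta=1+\log(u/q_C(x_i))\ge1$. The plan is to compute the probability as an integral over $y\in C$,
\[
\Prb\{\cdots\}=\int_C\phi(x_i-y)\,\ind\{q_C(x_i-y)\le u,\ (x_i-y)/\|x_i-y\|\in P_n\}\,dy .
\]
We then confine $y$ to a set of volume $O_C(\Delta^d/r^d)$ on which $\phi(x_i-y)=O_C(ur^d)$, where $r\approx\sqrt{2\log n}$ is the common scale of $\|X\|$ and $\|x_i\|$.

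\emph{Radial constraint from $q_C(X)\le u$.} By the uniform lower bound \eqref{gauss:cone-uniform-lower}, $q_C(X)\le u$ forces $\|X\|\ge r_u$. Here $r_u^2=2\log(1/u)-d\log\log(1/u)+O_C(1)$, and by the range of $u$ this is $2\log n+O(\log\log n)$. Consequently $\phi(X)\le\phi(0)e^{-r_u^2/2}=O_C(u(1+r_u)^d)$ on the event.

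\emph{Upper size of $\|x_i\|$ from $q_C(x_i)$.} We need an upper bound on $\|x_i\|^2$ in terms of $\log(1/q_C(x_i))$. The same uniform lower bound \eqref{gauss:cone-uniform-lower}, applied at $x_i$, gives $\|x_i\|^2\le2\log(1/q_C(x_i))-d\log\log(1/q_C(x_i))+O_C(1)$, valid in every direction. Combining with the previous step yields
\[
\|x_i\|^2-\|X\|^2\le 2\log\frac{u}{q_C(x_i)}+O_C(1)=O_C(\Delta).
\]

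\emph{Lateral constraint from $P_n$.} The direction condition gives $X\cdot y\ge\|X\|\|y\|/\log\log n$. Hence $\|x_i\|^2=\|X\|^2+2X\cdot y+\|y\|^2$ shows that $X\cdot y=O_C(\Delta)$ and $\|y\|=O_C(\Delta\log\log n/r)$. The crude volume bound from this is $(\Delta\log\log n/r)^{d}$, which loses a factor $(\log\log n)^{d-1}$. That is unacceptable, since $c_C$ may not depend on $n$.

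\emph{Removing the loss.} To avoid it, bound the volume of $\{y\in C: \hat X\cdot y\le t\}$ directly, with $t=O_C(\Delta/r)$, in terms of $J(\hat X)$. Scaling gives $\vol\{y\in C:\theta\cdot y\le t\}=t^dJ(\theta)/d!$. Since $\hat X$ varies with $y$, we first replace $\hat X$ by $\hat x_i$, up to an angular error $O(\|y\|/r)=o(1)$, and then use this volume identity. The factor $J(\hat x_i)$ must then be cancelled against the density. For this we use the sharp asymptotic \eqref{gauss:cone-corner-laplace} instead of the uniform bound, writing $q_C(x_i)\approx\phi(0)J(\hat x_i)e^{-\|x_i\|^2/2}\|x_i\|^{-d}$. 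This makes the product of density and volume $O_C(u\Delta^d)$.

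\emph{Main obstacle.} The hardest step is making \eqref{gauss:cone-corner-laplace} uniform over all directions in the growing set $P_n$, on which $J$ can be as large as $(\log\log n)^d$. The margin available comes from the restrictions $\|x_i\|\le\sqrt{2\log n+40d\log\log n}$ and $u\ge n^{-1}(\log n)^{-22d}$. I would first try to prove two-sided bounds $q_C(r\theta)\asymp_C J(\theta)e^{-r^2/2}r^{-d}$ uniformly for $\theta\in P_n$ and $r\approx\sqrt{2\log n}$. The dominated-convergence step rests on $\theta\cdot y\ge\|y\|/\log\log n$, and the correction term $e^{-\|z\|^2/2}$ is harmless once $r\gg\log\log n$. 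With these two-sided bounds, $J$ cancels and the constant $c_C$ is independent of $n$.
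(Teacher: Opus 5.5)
There is a genuine gap in your step bounding $\|x_i\|$ from above. The estimate \eqref{gauss:cone-uniform-lower} is a \emph{lower} bound $q_C(x)\ge g(\|x\|)$ with $g(r)=c_Ce^{-r^2/2}(1+r)^{-d}$ decreasing. Applied at $x_i$, it therefore gives $\|x_i\|\ge g^{-1}(q_C(x_i))$, a lower bound on the norm, just as it does at $X$. It cannot give $\|x_i\|^2\le2\log(1/q_C(x_i))-d\log\log(1/q_C(x_i))+O_C(1)$.

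That inequality is in fact false in the lemma's regime. For $\theta\in P_n$ near $\partial C^*$ one has $q_C(r\theta)\approx\phi(0)J(\theta)e^{-r^2/2}r^{-d}$, and $J(\theta)$ can grow like a power of $\log\log n$. So $\|x_i\|^2$ exceeds your bound by about $2\log J(\hat x_i)$, which is unbounded.

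Your bound at $X$ has the same defect: it silently drops $2\log J(\hat X)$. The two $\log J$ terms cancel only if $J(\hat X)\approx J(\hat x_i)$, and that is exactly what needs proof. Without it, the key inequality $\|x_i\|^2-\|X\|^2\le2\log(u/q_C(x_i))+O_C(1)$ is not established. The only a priori information is then the cap on $\|x_i\|$ together with the radial lower bound on $\|X\|$. That gives just $\|x_i\|^2-\|X\|^2=O_C(\log\log n)$, and integrating $\phi(x_i-y)\le\phi(x_i)e^{x_i\cdot y}$ over the slices then loses a power of $\log n$. Your later cancellation of $J(\hat x_i)$ against the density is fine, but it presupposes this step.

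The missing idea is to compare $q_C(X)$ and $q_C(x_i)$ directly, using sharp asymptotics at \emph{both} points.

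\begin{enumerate}
\item Start from the crude bound $\|x_i\|^2-\|X\|^2=O_C(\log\log n)$ and combine it with $\|x_i\|^2-\|X\|^2=2X\cdot y+\|y\|^2\ge2\delta\|X\|\|y\|$, where $\delta=1/\log\log n$. This gives $\|y\|=O_C((\log\log n)^2/\sqrt{\log n})$. Hence $\hat x_i$ lies within $o(\delta)$ of $\hat X$ and satisfies the cone inequality with $\delta/2$.
\item Prove the quantitative Laplace bound $0\le1-q_C(t\theta)/(\phi(t\theta)J(t\theta))\le d(d+1)/(2t^2\delta^2)$, which is your ``main obstacle'', and apply it at both $X$ and $x_i$.
\item Show $J(X)/J(x_i)=1+o(1)$, using $|y\cdot z|\le\bigl(2\|y\|/(\|x_i\|\delta)\bigr)\,x_i\cdot z$ on $C$.
\end{enumerate}

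Together these give $q_C(X)/q_C(x_i)=e^{(\|x_i\|^2-\|X\|^2)/2}(1+o(1))$. Since $q_C(X)\le u$, it follows that $x_i\cdot y\le\log(u/q_C(x_i))+o(1)$. With this slice bound, your volume identity and the estimate $\phi(x_i)J(x_i)\le(1+o(1))q_C(x_i)$ finish the argument as you intended.
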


\begin{proof}
We first make \eqref{gauss:cone-corner-laplace} quantitative. If $\theta\cdot z\ge\delta\|z\|$ for all $z\in C$, then $1-e^{-\|z\|^2/(2t^2)}\le\|z\|^2/(2t^2)$ gives
\begin{equation}\label{gauss:cone-quantitative-laplace}
\begin{aligned}
0\le1-\frac{q_C(t\theta)}{\phi(t\theta)J(t\theta)}
&\le\frac{1}{2t^2\delta^2J(\theta)}
          \int_C(\theta\cdot z)^2e^{-\theta\cdot z}\,dz\\
&=\frac{d(d+1)}{2t^2\delta^2}.
\end{aligned}
\end{equation}
The equality follows by differentiating $J(a\theta)=a^{-d}J(\theta)$ twice in $a$ at $a=1$.

Now take $\delta=1/\log\log n$. A point counted on the left of \eqref{gauss:covered-weighted} has the form $x=t\theta$, with $\theta\in P_n$ and $y=x_i-x\in C$. Put $r=\|x_i\|$. The bound \eqref{gauss:cone-uniform-lower} and $q_C(x)\le u$ give $t^2\ge2\log n-O_C(\log\log n)$. Also
\begin{equation*}
r^2-t^2=2t\theta\cdot y+\|y\|^2\ge2\delta t\|y\|.
\end{equation*}
Consequently $t\le r$ and
\[
\|y\|=O_C\left(\frac{(\log\log n)^2}{\sqrt{\log n}}\right),
\qquad
\left\|\frac{x_i}{r}-\theta\right\|
=O_C\left(\frac{(\log\log n)^2}{\log n}\right)=o(\delta).
\]
Thus the direction $x_i/r$ satisfies the same inner-product bound with $\delta/2$, so \eqref{gauss:cone-quantitative-laplace} applies to both $x$ and $x_i$. For $z\in C$,
\[
|y\cdot z|\le\frac{2\|y\|}{r\delta}\,x_i\cdot z
=O_C\left(\frac{(\log\log n)^3}{\log n}\right)x_i\cdot z.
\]
Since $x=x_i-y$, bounding the two exponential integrands by these multiples of $x_i\cdot z$ and using the scaling identity for $J$ gives
\[
\frac{J(x)}{J(x_i)}
=1+O_C\left(\frac{(\log\log n)^3}{\log n}\right).
\]
Combining this with \eqref{gauss:cone-quantitative-laplace} yields
\[
\frac{q_C(x)}{q_C(x_i)}
=e^{(r^2-t^2)/2}
 \left[1+O_C\left(\frac{(\log\log n)^3}{\log n}\right)\right].
\]
The inclusion $x_i+C\subseteq x+C$ gives $q_C(x_i)\le q_C(x)\le u$. As $\|y\|^2=o(1)$, the preceding estimates also give
\[
x_i\cdot y=\frac{r^2-t^2+\|y\|^2}{2}
\le\log\frac{u}{q_C(x_i)}+o(1).
\]
All error bounds above hold for every point being counted, with constants independent of that point.

For all sufficiently large $n$, we may therefore integrate over the larger set of $y\in C$ with $x_i\cdot y\le1+\log(u/q_C(x_i))$. The scaling of the cone gives
\[
\vol\{y\in C:x_i\cdot y\le t\}=\frac{J(x_i)t^d}{d!},\qquad t\ge0.
\]
Using $\phi(x_i-y)\le\phi(x_i)e^{x_i\cdot y}$ and integrating over these slices gives
\[
\begin{aligned}
&\Prb\left\{q_C(X)\le u,\ X/\|X\|\in P_n,\ x_i-X\in C\right\}\\
&\quad\le\frac{\phi(x_i)J(x_i)}{(d-1)!}
       \int_0^{1+\log(u/q_C(x_i))}e^t t^{d-1}\,dt\\
&\quad\le c_Cu\left(1+\log\frac{u}{q_C(x_i)}\right)^d.
\end{aligned}
\]
The last inequality uses \eqref{gauss:cone-quantitative-laplace}; if there is no point to count, the bound is immediate.
\end{proof}

\Needspace{4\baselineskip}
\bigskip
\noindent\textbf{Moments and sample bounds}

\nopagebreak

\begin{lemma}\label{gauss:threshold-probability-estimates}
Let $C=\{x:Ax\ge0\}\subset\mathbb R^d$ be associated with $\Sigma=AA^{\mathsf T}$ as above, with $d\ge2$, and let $X\sim \mathcal N_d(0,I_d)$. For a fixed compact $P\subset\operatorname{int}(C^*)\cap S^{d-1}$, for all $n^{-1}(\log n)^{-2(d-1)}\le u\le n^{-1}$,
\begin{equation}\label{gauss:patch-sharp-moments}
\begin{aligned}
\Prb\left\{q_C(X)\le u,\ X/\|X\|\in P\right\}
  &=(a_P+o(1))u(\log n)^{d-1},\\
\E\left[q_C(X)\ind\{q_C(X)\le u\}\right]
  &=(a_\Sigma/2+o(1))u^2(\log n)^{d-1}.
\end{aligned}
\end{equation}
The errors tend to zero independently of $u$. The same estimates hold if either endpoint of the interval is multiplied by a fixed positive constant. Also, as $n\to\infty$,
\begin{equation}\label{gauss:laplace-estimates}
\E\left[(1-q_C(X))^{n-1}\right]\sim\frac{a_\Sigma(\log n)^{d-1}}n.
\end{equation}
The exponential and weighted exponential bounds follow from \eqref{eq:distribution-moment-asymptotics} and \eqref{gauss:F-order}. Among $n$ independent such observations, the probability that either of the following bounds fails is $o(((\log n)^{d-1}\log\log n)^{-1})$, where the bounds are
\begin{equation}\label{gauss:sample-bounds}
\begin{gathered}
\max_{i\le n}\|X_i\|\le\sqrt{2\log n+40d\log\log n},\\
\#\{i\le n:q_C(X_i)\le n^{-1}e^{-\sqrt{\log\log n}/3}\}
\le\left\lceil(\log n)^{d-1}e^{-\sqrt{\log\log n}/4}\right\rceil.
\end{gathered}
\end{equation}
\end{lemma}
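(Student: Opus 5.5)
The estimates \eqref{gauss:patch-sharp-moments} will follow from Theorem~\ref{gauss:tail-asymptotics} and its directional version \eqref{gauss:cone-angular-limit}, made uniform over the window of $u$ considered. For the first line, write the probability as $\int_P A_u(\theta)\,d\sigma(\theta)$. On the compact set $P$ the convergence \eqref{gauss:cone-corner-laplace} is uniform, so the sandwich argument giving \eqref{gauss:cone-angular-limit} is uniform in $\theta\in P$. This yields $\int_P A_u\,d\sigma=(a_P+o(1))u(\log(1/u))^{d-1}$ as $u\downarrow0$. On the stated window, $\log(1/u)=\log n+O(\log\log n)$, so $(\log(1/u))^{d-1}\sim(\log n)^{d-1}$ uniformly in $u$. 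Multiplying the endpoints by a fixed constant changes this only by $O(1)$ inside the logarithm, so the same holds for the enlarged window.

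For the second line, use the identity $\E[q\ind\{q\le u\}]=uH(u)-\int_0^uH(v)\,dv$. Substitute $v=ut$ and apply \eqref{gauss:cone-tail-exact} with the coefficient $a_\Sigma$. Uniformity in $u$ comes from the ratio $(\log(1/(ut)))^{d-1}/(\log(1/u))^{d-1}\to1$ for fixed $t$. The dominating bound $t(1+|\log t|)^{d-1}$ is used as in \eqref{eq:distribution-moment-asymptotics}. This gives $\tfrac12u^2a_\Sigma(\log n)^{d-1}(1+o(1))$. The estimate \eqref{gauss:laplace-estimates} is the second line of \eqref{eq:distribution-moment-asymptotics} with $s=n$, combined with $H(1/n)\sim a_\Sigma n^{-1}(\log n)^{d-1}$.

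For the sample bounds \eqref{gauss:sample-bounds}, handle the norm first with a union bound. The tail $\Prb\{\|X\|^2\ge 2\log n+40d\log\log n\}$ is a chi-square tail of order $e^{-x/2}x^{d/2-1}$. This gives $n\cdot n^{-1}(\log n)^{-20d}(\log n)^{d/2}$, which is far below $((\log n)^{d-1}\log\log n)^{-1}$.

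For the count, put $u_0=n^{-1}e^{-\sqrt{\log\log n}/3}$. Then $\log(1/u_0)\sim\log n$ and $H(u_0)\sim a_\Sigma u_0(\log n)^{d-1}$, so the count is binomial with mean
\[
\mu\sim a_\Sigma(\log n)^{d-1}e^{-\sqrt{\log\log n}/3}.
\]
The target $k=\lceil(\log n)^{d-1}e^{-\sqrt{\log\log n}/4}\rceil$ exceeds $\mu$ by the factor $\kappa=e^{\sqrt{\log\log n}/12}/a_\Sigma\to\infty$. A Chernoff bound of the form $\Prb\{\mathrm{Bin}\ge k\}\le(e\mu/k)^k$ gives $\exp\{-k(\log\kappa-1)\}$. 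Since $k\ge(\log n)^{d-1-o(1)}$ and $d\ge2$, we have $k\log\kappa\gg\log\log n$. The probability is therefore $\exp(-(\log n)^{1-o(1)})$, which is negligible.

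The main obstacle is the uniformity in $u$ of the patch estimate. The directional limit \eqref{gauss:cone-angular-limit} is stated pointwise, so I would redo the sandwich using the $(1\pm\varepsilon)$ bounds on $q_C(r\theta)$, which hold uniformly on $P\times[r_0,\infty)$. I would then compute the radial Gaussian tails beyond $t_v(1\pm\varepsilon')$ explicitly via Mills' ratio. The relative errors depend only on $v$ through $\log(1/v)\to\infty$, and this is uniform over the window.
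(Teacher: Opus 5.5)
Your proposal is correct and follows the paper's proof essentially step for step. The moments come from the tail asymptotic of Theorem~\ref{gauss:tail-asymptotics} and its restriction to $P$, combined with \eqref{eq:distribution-moment-asymptotics} and $\log(1/u)=\log n+O(\log\log n)$; the norm bound is a union bound on the radial Gaussian tail, and the count bound applies \eqref{eq:binomial-count-tail} to a binomial whose mean is smaller than the target by a factor of order $e^{\sqrt{\log\log n}/12}$.

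The uniformity you flag as the main obstacle is automatic. On $P$, dominated convergence with the theorem's uniform radial bound already gives an asymptotic as $u\downarrow0$, and such an asymptotic holds uniformly over any window $u\le c/n$ that shrinks to zero. Redoing the sandwich uniformly in $\theta\in P$ is therefore harmless but unnecessary.
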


\begin{proof}
Theorem~\ref{gauss:tail-asymptotics} and its restriction to $P$ give
$H(u)\sim a_\Sigma u(\log(1/u))^{d-1}$ and the same formula with
$a_P$ for points in $P$. In particular,
\begin{equation}\label{gauss:F-order}
H(u)=O_C\bigl(u(1+\log(1/u))^{d-1}\bigr),\qquad 0<u\le1.
\end{equation}
The moment assertions now follow from
\eqref{eq:distribution-moment-asymptotics} and
$\log(1/u)=\log n+O(\log\log n)$ throughout the stated interval.

\proofstep{Sample bounds.}
The Gaussian radial tail is $\Prb\left\{\|X\|>t\right\}=O_d(t^{d-2}e^{-t^2/2})$ as $t\to\infty$. A union bound at the radius in \eqref{gauss:sample-bounds} gives $O_d((\log n)^{(d-2)/2-20d})$. The count in its second line is binomial, with mean
\[
O_C\left((\log n)^{d-1}e^{-\sqrt{\log\log n}/3}\right).
\]
Applying \eqref{eq:binomial-count-tail} at the upper bound on the count in \eqref{gauss:sample-bounds} gives error $o(((\log n)^{d-1}\log\log n)^{-1})$. A violation can be recorded when the observations already seen first exceed either bound. The stopping proofs therefore do not condition on an event involving future observations.
\end{proof}

\begin{lemma}\label{gauss:sample-exclusion}
Fix $d\ge2$, a cone $C$ as in Lemma~\ref{gauss:covered-probability}, and a compact $P\subset\operatorname{int}(C^*)\cap S^{d-1}$. A history is exceptional if the observations already seen violate either bound in \eqref{gauss:sample-bounds}. The probability of any such history is $o(((\log n)^{d-1}\log\log n)^{-1})$. At every other history, for every $s\le n$ and
\begin{equation}\label{gauss:smooth-threshold-band}
\frac1{n(\log n)^{2(d-1)}}\le u\le
\frac{e^{-\sqrt{\log\log n}/3}}n,
\end{equation}
we have, using $\mathcal H_{s-1}$ from Section~\ref{sec:model}
with $X_i$ in place of $Z_i$,
\begin{equation}\label{gauss:negligible-covered-mass}
\Prb\left\{q_C(X_s)\le u,\ X_s/\|X_s\|\in P,\
       \exists i<s:\ X_s\prec_C X_i\mid\mathcal H_{s-1}\right\}
=o\bigl(u(\log n)^{d-1}\bigr).
\end{equation}
The error is uniform in $s,u$ and these histories.
\end{lemma}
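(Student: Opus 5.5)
The probability bound for exceptional histories is exactly the sample-bound statement of Lemma~\ref{gauss:threshold-probability-estimates}, recorded at the first time a violation is seen. The main task is \eqref{gauss:negligible-covered-mass}. Conditional on $\mathcal H_{s-1}$, the new point $X_s$ is independent of $X_1,\ldots,X_{s-1}$. A union bound over earlier observations therefore gives
\[
\Prb\{\cdots\mid\mathcal H_{s-1}\}\le\sum_{i<s}\Prb\left\{q_C(X)\le u,\ X/\|X\|\in P,\ x_i-X\in C\right\}\Big|_{x_i=X_i}.
\]
Since $P$ is compact inside $\operatorname{int}(C^*)$, we have $P\subseteq P_n$ for large $n$. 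On non-exceptional histories, every $\|X_i\|$ is within the radius allowed in Lemma~\ref{gauss:covered-probability}. The band \eqref{gauss:smooth-threshold-band} also lies inside the range of $u$ in that lemma. So each term is at most
\[
c_Cu\,\ind\{q_C(X_i)\le u\}\bigl(1+\log(u/q_C(X_i))\bigr)^d .
\]

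Next I bound the sum of these weights. Only indices with $q_C(X_i)\le u\le n^{-1}e^{-\sqrt{\log\log n}/3}$ contribute. By the second line of \eqref{gauss:sample-bounds}, there are at most $K_n:=\lceil(\log n)^{d-1}e^{-\sqrt{\log\log n}/4}\rceil$ such indices. Each log factor must also be controlled. I first try a crude bound: on non-exceptional histories, $\|X_i\|^2\le2\log n+40d\log\log n$. The Laplace asymptotics \eqref{gauss:cone-corner-laplace}, together with the uniform lower bound \eqref{gauss:cone-uniform-lower}, then give
\[
q_C(X_i)\ge c\,n^{-1}(\log n)^{-O_d(1)}.
\]
Hence $\log(u/q_C(X_i))=O_d(\log\log n)$, and the whole sum is
\[
O_C\bigl(uK_n(\log\log n)^d\bigr)=O_C\bigl(u(\log n)^{d-1}e^{-\sqrt{\log\log n}/4}(\log\log n)^d\bigr)=o\bigl(u(\log n)^{d-1}\bigr).
\]
Every bound used depends only on $n$, not on $s$, $u$, or the particular history, so the estimate is uniform as required.

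The main obstacle is the lower bound on $q_C(X_i)$ for directions of $X_i$ near or outside $\partial C^*$. There \eqref{gauss:cone-uniform-lower} only gives $q_C(x)\ge c_Ce^{-r^2/2}(1+r)^{-d}$. With $r^2\le2\log n+40d\log\log n$, this still yields $q_C(X_i)\ge c\,n^{-1}(\log n)^{-20d-d/2-O(1)}$. That is polylogarithmic, so the log factor remains $O(\log\log n)$ in every direction.

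I also need to check that a point $x_i$ contributing a nonzero term actually meets the hypotheses of Lemma~\ref{gauss:covered-probability}. The lemma's proof shows that its direction is then close to $P_n$, so this holds. If the lemma's indicator formulation already handles the other cases by returning zero, this step is automatic.
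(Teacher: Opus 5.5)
Your proposal is correct and follows essentially the same route as the paper: a union bound over earlier observations using Lemma~\ref{gauss:covered-probability} (with $P\subseteq P_n$ and the norm bound), then the uniform lower bound \eqref{gauss:cone-uniform-lower} to make every logarithm $O_C(\log\log n)$, then the count in \eqref{gauss:sample-bounds} to limit the number of contributing indices. Your last worry is unnecessary. Lemma~\ref{gauss:covered-probability} imposes no condition on the direction of $x_i$, only the norm bound, so it applies directly to every earlier observation on a non-exceptional history.
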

\begin{proof}
The norm bound and \eqref{gauss:cone-uniform-lower} imply $q_C(X_i)\ge n^{-1}(\log n)^{-c_C}$ for a fixed $c_C$. Also $P\subseteq P_n$ for all sufficiently large $n$. Thus every logarithm in \eqref{gauss:covered-weighted} is $O_C(\log\log n)$ for the earlier points that can contribute. Adding the single-point bound over the earlier observations and using the sample count in \eqref{gauss:sample-bounds} gives
\[
\begin{aligned}
&\Prb\left\{q_C(X_s)\le u,\ X_s/\|X_s\|\in P,\
       \exists i<s:\ X_s\prec_C X_i\mid\mathcal H_{s-1}\right\}\\
&\quad\le c_Cu(\log\log n)^d
 \left\lceil(\log n)^{d-1}e^{-\sqrt{\log\log n}/4}\right\rceil
=o\bigl(u(\log n)^{d-1}\bigr).
\end{aligned}
\]
The exceptional probability and its dependence only on observations already seen follow from Lemma~\ref{gauss:threshold-probability-estimates}.
\end{proof}

\Needspace{4\baselineskip}
\bigskip
\noindent\textbf{Cutoff estimates}

\nopagebreak

\begin{lemma}\label{gauss:cutoff-volume-estimate}
Let $C=\{x:Ax\ge0\}$ be associated with $\Sigma=AA^{\mathsf T}$ as above, in dimension $d\ge2$. For $s$ independent standard Gaussian observations, let $\mathcal U_s=\{x:\nexists i\le s\text{ with }x\prec_C X_i\}$ and $A_s=\gamma(\mathcal U_s)$. As $s\to\infty$,
\begin{equation}\label{gauss:cutoff-moments}
\E\left[A_s\right]\sim\frac{a_\Sigma(\log s)^{d-1}}s,
\qquad
\E\left[\int_{\mathcal U_s}q_C(x)\,d\gamma(x)\right]\sim\frac{a_\Sigma(\log s)^{d-1}}{s^2}.
\end{equation}
If $ma_\Sigma(\log s)^{d-1}/s=(d-1)\log\log s+O(1)$, then
\begin{equation}\label{gauss:cutoff-no-selection}
\E\left[(1-A_s)^m\right]=O_C((\log s)^{1-d}).
\end{equation}
For every fixed $\varepsilon\in(0,1)$,
\begin{equation}\label{gauss:cutoff-concentration}
\Prb\left\{A_s<(1-\varepsilon)\E\left[A_s\right]\right\}
=O_{C,\varepsilon}((\log s)^{1-d}),
\qquad \frac{A_s}{\E\left[A_s\right]}\longrightarrow1\text{ in }L^1.
\end{equation}
Moreover,
\begin{equation}\label{gauss:cutoff-waiting-estimate}
\E\left[\frac1{A_s^2}\int_{\mathcal U_s}q_C(x)\,d\gamma(x)\right]
=O_C((\log s)^{1-d}),
\end{equation}
and, for each fixed $T>0$,
\begin{equation*}
s\E\left[\frac1{A_s}\int_{\mathcal U_s}
       q_C(x)\ind\{q_C(x)\le T/s\}\,d\gamma(x)\right]
\longrightarrow\int_0^T t e^{-t}\,dt.
\end{equation*}
Changing coordinates gives the same estimates for a fixed positive definite Gaussian correlation matrix, with $F$ and $q_F$ in place of $\gamma$ and $q_C$, and with $\mathcal U_s$ expressed in the original coordinates. The coefficient remains $a_\Sigma$.
\end{lemma}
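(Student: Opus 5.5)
The plan is to reduce every assertion to one-point or two-point integrals of $q_C$ and then feed in the tail asymptotic $H(v)\sim a_\Sigma v(\log(1/v))^{d-1}$ from Theorem~\ref{gauss:tail-asymptotics}.

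\textbf{Means.} For the two means, Fubini and independence give
\[
\E[A_s]=\E\bigl[(1-q_C(X))^s\bigr],\qquad
\E\Bigl[\int_{\mathcal U_s}q_C\,d\gamma\Bigr]=\E\bigl[q_C(X)(1-q_C(X))^s\bigr].
\]
The identity \eqref{eq:distribution-cutoff-moments} with $k=0,1$ and $T=\infty$ then yields $H(1/s)$ and $H(1/s)/s$, respectively. The truncated ratio with fixed $T$ is handled the same way: first replace $1/A_s$ by $1/\E[A_s]$ on the event $\{A_s\ge(1-\varepsilon)\E[A_s]\}$. On the complement, use the pointwise bound $A_s^{-1}\int_{\mathcal U_s}q\,\ind\{q\le T/s\}\,d\gamma\le T/s$. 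Then apply \eqref{eq:distribution-cutoff-moments} with $k=1$ and finite $T$. The waiting estimate \eqref{gauss:cutoff-waiting-estimate} is immediate from \eqref{eq:undominated-denominator-bound}, the two means, and the concentration bound, since $s^{-2}(\log s)^{d-1}/\E[A_s]^2\asymp(\log s)^{1-d}$.

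\textbf{Concentration.} For \eqref{gauss:cutoff-concentration} I would bound $\operatorname{Var}(A_s)$ by a double integral. For $x,y$ let $q_C(x,y)=\gamma\bigl((x+C)\cap(y+C)\bigr)$. Then
\[
\operatorname{Cov}\bigl(\ind\{x\in\mathcal U_s\},\ind\{y\in\mathcal U_s\}\bigr)
\le e^{-s(q(x)+q(y))}\bigl(e^{2s\,q_C(x,y)}-1\bigr)+O(s^{-1}),
\]
and this is nonzero only when the two dominating cones overlap. I would restrict $x,y$ to directions in $P_n$; the remaining directions and the far-out radii carry mass $o(\E[A_s])$ by Theorem~\ref{gauss:tail-asymptotics}. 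On the restricted set, the overlap is controlled by the same slicing argument as in Lemma~\ref{gauss:covered-probability}: given $x$ with $q(x)\approx t/s$, the set of $y$ with $q(y)\lesssim 1/s$ whose cone meets $x+C$ in mass comparable to $1/s$ has $\gamma$-measure $O(s^{-1}\,\mathrm{polylog}\log s)$. Integrating over $x$ should give
\[
\operatorname{Var}(A_s)=O_C\bigl(\E[A_s]\,s^{-1}(\log\log s)^{c}\bigr).
\]
I expect the logarithmic factors to cancel against $e^{-s q}$ once the integral is done carefully. This gives $\operatorname{Var}(A_s)/\E[A_s]^2=O((\log s)^{1-d})$, so Chebyshev yields the probability bound, and the $L^1$ convergence follows as well.

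\textbf{No-selection bound.} This is the main obstacle, because Chebyshev is too weak here: with $m\E[A_s]=(d-1)\log\log s+O(1)$, allowing a loss of $\varepsilon$ in the exponent costs a factor $(\log s)^{\varepsilon(d-1)}$. I would instead bound the moment generating function directly,
\[
\E\bigl[(1-A_s)^m\bigr]\le\E\bigl[e^{-mA_s}\bigr]\le e^{-m\E[A_s]+\frac{m^2}{2}\E[V]},
\]
using the lower-tail entropy method (Boucheron--Lugosi--Massart) for the function $A_s(X_1,\ldots,X_s)$. Here $V=\sum_i D_i^2$, and $D_i$ is the mass that $X_i$ alone removes from $\mathcal U_s$. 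The inequality is valid because removing an observation only increases $A_s$, and $0\le D_i\le q_C(X_i)$. Only frontier points have $D_i>0$. There are about $a_\Sigma(\log s)^{d-1}$ of them, each with $D_i\lesssim q_C(X_i)\approx 1/s$, so a computation like the one for $\E\bigl[\int_{\mathcal U_s}q\,d\gamma\bigr]$ should give $\E[V]=O\bigl((\log s)^{d-1}s^{-2}\bigr)$. Hence $m^2\E[V]=O\bigl(m\E[A_s]\cdot m/s\bigr)=o(1)$, because $m/s=O(\log\log s/(\log s)^{d-1})$. This yields $\E[(1-A_s)^m]=O\bigl(e^{-(d-1)\log\log s}\bigr)=O((\log s)^{1-d})$.

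The delicate point is checking the exact hypothesis of the entropy inequality, namely whether it needs a bound on $\sum_i D_i^2$ or on the squared differences under resampling. If it fails, I would fall back to bounding $\E[e^{-mA_s}]$ through a Poissonization of $s$ and the FKG-type negative dependence of the undominated indicators.

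Finally, for a fixed positive definite correlation matrix, all statements transfer to the original coordinates through $X=\Sigma^{-1/2}Z$, since $q_F(z)=q_C(\Sigma^{-1/2}z)$ and the change of coordinates preserves domination.
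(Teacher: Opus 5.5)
Your treatment of the means, the truncated ratio, the waiting estimate and the change of coordinates is fine. The no-selection bound \eqref{gauss:cutoff-no-selection}, however, has a genuine gap in two places.

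\emph{First}, the inequality $\E[e^{-mA_s}]\le\exp\bigl(-m\E[A_s]+\tfrac{m^2}{2}\E[\sum_iD_i^2]\bigr)$ is not what the entropy method gives. With $Z_i=A_s^{(-i)}\ge A_s$, the modified log-Sobolev/Herbst argument yields only $\operatorname{Ent}(e^{-\lambda A_s})\le\frac{\lambda^2}{2}\E\bigl[e^{-\lambda A_s}\sum_iD_i^2\bigr]$. To decouple this you would need either an almost sure bound on $\sum_iD_i^2$ at scale $(\log s)^{d-1}/s^2$, or exponential moments of $m^2\sum_iD_i^2$. Neither is available. The $D_i$ can be of order one on rare configurations, and $m^2$ is of order $s^2$ up to logarithms. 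Your bound $D_i\le q_C(X_i)$ is false: already for $s=1$, $D_1=\gamma(X_1-C)$ tends to one while $q_C(X_1)$ tends to zero as $X_1$ moves out along an interior direction of $C$.

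\emph{Second}, even granting that inequality, you use $m\E[A_s]=(d-1)\log\log s+O(1)$. The hypothesis, however, concerns $ma_\Sigma(\log s)^{d-1}/s$, and your mean computation only gives $\E[A_s]\sim a_\Sigma(\log s)^{d-1}/s$. An $o(1)$ relative error multiplied by $(d-1)\log\log s$ leaves an unbounded error in the exponent, so $e^{-m\E[A_s]}$ need not be $O((\log s)^{1-d})$. You need a lower bound on the mean with relative error $O(1/\log\log s)$. The paper obtains this for the restricted variable $V=\int_{\mathcal U_s}\ind\{x/\|x\|\in P_s,\ q_C(x)\le(\log s)^2/s\}\,d\gamma(x)\le A_s$. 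It uses the quantitative Laplace bound \eqref{gauss:cone-quantitative-laplace} on $P_s$ with $\delta=1/\log\log s$, together with $a_\Sigma-a_{P_s}=O(\delta)$.

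The missing idea is that no exponential concentration is needed. Let $Y\ge0$ with $\E[Y]=1$. The quadratic that agrees with $e^{-tx}$ at $x=0$ and to first order at $x=1$ dominates $e^{-tx}$ on $[0,\infty)$, because the third derivative is nonpositive. Hence $\E[e^{-tY}]\le e^{-t}+\operatorname{Var}(Y)$. Take $Y=V/\E[V]$, $t=m\E[V]$, and use $(1-A_s)^m\le e^{-mV}$. Then a relative variance of order $(\log s)^{1-d}$, which is already the target order, suffices.

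That variance comes from Efron--Stein. Its bound is exactly your $\sum_i\E[D_i^2]$, computed for $V$ rather than $A_s$. It is controlled through Lemma~\ref{gauss:covered-probability} and the norm bound on $X_i$, not through $D_i\le q_C(X_i)$. This also replaces your covariance double integral, which you leave at ``I expect the logarithmic factors to cancel''. A first-moment statement about the discarded directions cannot bound their contribution to $\operatorname{Var}(A_s)$. Working with $V\le A_s$ avoids this issue entirely, because only the lower tail of $A_s$ is needed, and $L^1$ convergence then follows from $\E|Y-1|=2\E(1-Y)_+$.
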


\begin{proof}
We estimate the mean and variance of the probability left in directions $P_s$. This probability is at most $A_s$ and has the same leading mean, so its concentration controls small values of $A_s$ and the random denominators.

\proofstep{Means.}
Conditioning on an independent $X$ gives
\[
\E\left[\int_{\mathcal U_s}q_C(x)^k\,d\gamma(x)\right]
=\int_0^1v^k(1-v)^s\,dH(v),\qquad k=0,1.
\]
Apply \eqref{eq:distribution-cutoff-moments} with $T=\infty$ and
$H(1/s)\sim a_\Sigma(\log s)^{d-1}/s$ to obtain
\eqref{gauss:cutoff-moments}. For fixed $T>0$, the same formula gives
\begin{equation}\label{gauss:cutoff-truncated-mean}
\E\left[\int_{\mathcal U_s}q_C(x)\ind\{q_C(x)\le T/s\}\,d\gamma(x)\right]
\sim\frac{a_\Sigma(\log s)^{d-1}}{s^2}\int_0^T t e^{-t}\,dt.
\end{equation}

\proofstep{Uniform mean bound.}
Put $\delta=1/\log\log s$, $u=(\log s)^2/s$, and
\[
V=\int_{\mathcal U_s}\ind\{x/\|x\|\in P_s,\ q_C(x)\le u\}\,d\gamma(x).
\]
We use $V\le A_s$. First,
\begin{equation}\label{gauss:cutoff-direction-coefficient}
0\le a_\Sigma-a_{P_s}=O_C(\delta).
\end{equation}
To see this, choose a unit vector $e\in\operatorname{int}(C^*)$ and $b>0$ such that $e\cdot z\ge b\|z\|$ on $C$. For $h=2\delta/b$, the map $\theta\mapsto(\theta+he)/\|\theta+he\|$ sends $C^*\cap S^{d-1}$ into $P_s$ for all sufficiently large $s$. It is injective for $h<1$, and its surface Jacobian is $(1+h\theta\cdot e)/\|\theta+he\|^d=1+O_C(\delta)$. Thus the area omitted by $P_s$ is $O_C(\delta)$. The integrand $1/J(\theta)$ is bounded, since $J(\theta)\ge e^{-1}\vol(C\cap B(0,1))$ for unit $\theta\in C^*$. Integrating over the omitted directions proves \eqref{gauss:cutoff-direction-coefficient}.

We next quantify the radial calculation used in \eqref{gauss:cone-angular-limit}. For $\theta\in P_s$,
\[
0<e^{-1}\vol(C\cap B(0,1))\le J(\theta)\le O_d(\delta^{-d}).
\]
The upper bound follows by integrating $e^{-\delta\|z\|}$ over $\mathbb R^d$. Let $t_v(\theta)$ solve $q_C(t_v\theta)=v$. Along these directions, $q_C(t\theta)$ is strictly decreasing. For every $s^{-1}(\log s)^{-22d}\le v\le s^{-1}(\log s)^2$, \eqref{gauss:cone-quantitative-laplace} gives
\[
t_v(\theta)^2=2\log s+O_C(\log\log s).
\]
The Gaussian radial tail and the same relative error bound then give
\begin{equation}\label{gauss:cutoff-quantitative-radial}
\begin{aligned}
 \phi(0)\int_{t_v(\theta)}^\infty e^{-t^2/2}t^{d-1}\,dt
 &=\frac{v\,t_v(\theta)^{2d-2}}{J(\theta)}
     \left[1+O_C\left(\frac{(\log\log s)^2}{\log s}\right)\right]\\
 &=\frac{2^{d-1}v(\log s)^{d-1}}{J(\theta)}
     \left[1+O_C\left(\frac{(\log\log s)^2}{\log s}\right)\right].
\end{aligned}
\end{equation}
Integrating over $P_s$ and then by parts gives
\begin{equation}\label{gauss:cutoff-restricted-mean}
\begin{aligned}
\E\left[V\right]
&=(1-u)^s\Prb\left\{q_C(X)\le u,\ X/\|X\|\in P_s\right\}\\
&\quad+s\int_0^u(1-v)^{s-1}
           \Prb\left\{q_C(X)\le v,\ X/\|X\|\in P_s\right\}\,dv\\
&=\frac{a_{P_s}(\log s)^{d-1}}s
 \left[1+O_C\left(\frac{(\log\log s)^2}{\log s}\right)\right]\\
&=\frac{a_\Sigma(\log s)^{d-1}}s
 \left[1+O_C\left(\frac1{\log\log s}\right)\right].
\end{aligned}
\end{equation}
The second equality uses \eqref{gauss:cutoff-quantitative-radial},
whose error holds throughout $P_s$ and the stated interval of $v$.
The endpoint is at most $e^{-(\log s)^2}$, and \eqref{gauss:F-order}
bounds the contribution below $s^{-1}(\log s)^{-22d}$ within the
error. The last equality uses \eqref{gauss:cutoff-direction-coefficient}.

\proofstep{Variance.}
Let $V^{(-i)}$ be the same integral after deleting observation $X_i$, with $P_s$ and $u$ unchanged. The difference counts points dominated by $X_i$ but by no other sample point. To bound $\E\left[(V^{(-i)}-V)^2\right]$, condition on $X_i=x_i$ and write
\[
g_{x_i}(v)=\Prb\left\{X/\|X\|\in P_s,\ q_C(X)\le v,\ x_i-X\in C\right\}.
\]
Two points $x,x'$ both contribute to $V^{(-i)}-V$ only if the other $s-1$ observations avoid their combined dominating region. That region has probability at least $\max\{q_C(x),q_C(x')\}$. Integrating $e^{-(s-1)\max\{q_C(x),q_C(x')\}}$ over these pairs therefore gives
\begin{equation}\label{gauss:cutoff-deletion-bound}
\begin{aligned}
\E\left[(V^{(-i)}-V)^2\mid X_i=x_i\right]
&\le e^{-(s-1)u}g_{x_i}(u)^2\\
&\quad+(s-1)\int_0^u e^{-(s-1)v}g_{x_i}(v)^2\,dv.
\end{aligned}
\end{equation}
Here we integrate by parts against the distribution $g_{x_i}(v)^2$ of the larger of the two $q_C$ values.

Suppose $\|x_i\|\le\sqrt{2\log s+40d\log\log s}$. Lemma~\ref{gauss:covered-probability} gives, for $s^{-1}(\log s)^{-22d}\le v\le u$,
\[
g_{x_i}(v)\le c_Cv\,\ind\{q_C(x_i)\le v\}
                   (1+\log(v/q_C(x_i)))^d.
\]
For smaller $v$ the left side is zero. Indeed, the norm bound and \eqref{gauss:cone-uniform-lower} give $q_C(x_i)\ge c_Cs^{-1}(\log s)^{-20d-d/2}>s^{-1}(\log s)^{-22d}$ for all large $s$, and $x_i-X\in C$ requires $q_C(x_i)\le q_C(X)$. For each fixed integer $k\ge1$, integration of the logarithmic tail gives
\begin{equation*}
\begin{aligned}
&\E\left[\ind\{q_C(X)\le v\}(1+\log(v/q_C(X)))^k\right]\\
&\quad=H(v)+k\int_0^\infty(1+t)^{k-1}H(ve^{-t})\,dt\\
&\quad\le c_{C,k}v(1+\log(1/v))^{d-1}.
\end{aligned}
\end{equation*}
By \eqref{gauss:F-order} the integral is finite. Taking $k=2d$ gives
\[
\E\left[g_{X_i}(v)^2
 \ind\{\|X_i\|\le\sqrt{2\log s+40d\log\log s}\}\right]
\le c_Cv^3(\log s)^{d-1},\qquad 0<v\le u.
\]
Substitution into \eqref{gauss:cutoff-deletion-bound} bounds each term by $O_C((\log s)^{d-1}/s^3)$.

Outside the norm bound, use $V^{(-i)}-V\le H(u)=O_C((\log s)^{d+1}/s)$. The radial-tail calculation in the proof of Lemma~\ref{gauss:threshold-probability-estimates}, with $n=s$, bounds the probability of this event by $O_d(s^{-1}(\log s)^{(d-2)/2-20d})$. Its total contribution is $o((\log s)^{d-1}/s^2)$. Efron--Stein~\cite{efron1981} now gives
\begin{equation}\label{gauss:cutoff-patch-variance}
\operatorname{Var}(V)
\le\sum_{i=1}^s\E\left[(V^{(-i)}-V)^2\right]
\le s\,O_C\left(\frac{(\log s)^{d-1}}{s^3}\right)
 +o\left(\frac{(\log s)^{d-1}}{s^2}\right)
=O_C\left(\frac{(\log s)^{d-1}}{s^2}\right).
\end{equation}

\proofstep{Exponential moment and concentration.}
For any nonnegative random variable $Y$ with $\E\left[Y\right]=1$,
\begin{equation*}
\E\left[e^{-tY}\right]\le e^{-t}
 +\bigl(1-(1+t)e^{-t}\bigr)\operatorname{Var}(Y)
\le e^{-t}+\operatorname{Var}(Y),\qquad t\ge0.
\end{equation*}
Indeed, the quadratic polynomial agreeing with $e^{-tx}$ at $x=0,1$
and with its derivative at $x=1$ is an upper bound for $x\ge0$,
because the third derivative of $e^{-tx}$ is nonpositive.
Taking expectations gives the first inequality.
Apply it to $Y=V/\E\left[V\right]$ and $t=m\E\left[V\right]$. Since $V\le A_s$,
\eqref{gauss:cutoff-restricted-mean} and \eqref{gauss:cutoff-patch-variance}
give
\[
\E\left[(1-A_s)^m\right]\le\E\left[e^{-mV}\right]
\le e^{-m\E\left[V\right]}+\frac{\operatorname{Var}(V)}{(\E\left[V\right])^2}
=O_C((\log s)^{1-d})
\]
when $ma_\Sigma(\log s)^{d-1}/s=(d-1)\log\log s+O(1)$.
This proves \eqref{gauss:cutoff-no-selection}.

For fixed $\varepsilon>0$, the same mean and variance bounds give
\[
\Prb\left\{A_s<(1-\varepsilon)\E\left[A_s\right]\right\}
\le\Prb\left\{V<(1-\varepsilon)\E\left[A_s\right]\right\}
=O_{C,\varepsilon}((\log s)^{1-d})
\]
by Chebyshev's inequality. Since $Y=A_s/\E\left[A_s\right]$ has mean one,
\[
\E\left[|Y-1|\right]=2\E\left[(1-Y)_+\right]
\le2\varepsilon+2\Prb\left\{Y<1-\varepsilon\right\}
=2\varepsilon+o(1).
\]
Let $s\to\infty$ and then $\varepsilon\downarrow0$.

Apply \eqref{eq:undominated-denominator-bound} with $\gamma$ and $q_C$. Equation~\eqref{gauss:cutoff-moments} bounds its first term, and \eqref{gauss:cutoff-concentration} with $\varepsilon=1/2$ bounds its second. Both are $O_C((\log s)^{1-d})$, proving \eqref{gauss:cutoff-waiting-estimate}.

Finally,
\[
\begin{aligned}
s\E\left[\frac1{A_s}\int_{\mathcal U_s}q_C\ind\{q_C\le T/s\}\,d\gamma\right]
&=\frac{s}{\E\left[A_s\right]}\E\left[\int_{\mathcal U_s}q_C\ind\{q_C\le T/s\}\,d\gamma\right]+o(1)\\
&\longrightarrow\int_0^T te^{-t}\,dt.
\end{aligned}
\]
The replacement error is at most $T\E\left[|A_s/\E\left[A_s\right]-1|\right]=o(1)$,
since $s\int_{\mathcal U_s}q_C\ind\{q_C\le T/s\}\,d\gamma\le T A_s$.
The limit follows from \eqref{gauss:cutoff-truncated-mean} and
\eqref{gauss:cutoff-moments}.
\end{proof}

\Needspace{4\baselineskip}
\bigskip
\noindent\textbf{Random denominators}

\nopagebreak

\begin{lemma}\label{lem:cutoff-common-replacements}
For fixed positive definite $\Sigma$, as $s\to\infty$,
\[
\E\left[(1-A_s)^m\right]=e^{-m\E\left[A_s\right]}+o_\Sigma(1),\qquad
\E\left[\frac{A_{s+\ell}}{A_s}\right]=\frac{\E\left[A_{s+\ell}\right]}{\E\left[A_s\right]}+o_\Sigma(1)
\]
for all integers $m\ge1$ and $\ell\ge0$, with errors independent of $m,\ell$. Moreover,
\begin{align}
\E\left[\sum_{j=1}^m(1-A_s)^{j-1}(A_{s+m-j}-A_{s+m})\right]
 &=O_\Sigma((\log s)^{1-d}),\label{eq:cutoff-volume-replacement}\\
\E\left[\left(1-\frac{A_{s+m}}{A_s}\right)(1-A_s)^m\right]
 &=O_\Sigma((\log s)^{1-d}).\label{eq:cutoff-volume-cross-term}
\end{align}
The same bounds hold for the cone model in
Lemma~\ref{gauss:cutoff-volume-estimate}, with constants depending on $C$.
\end{lemma}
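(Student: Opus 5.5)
The plan is to work throughout with the quantities already controlled in Lemma~\ref{gauss:cutoff-volume-estimate}: the mean asymptotics \eqref{gauss:cutoff-moments}, the concentration \eqref{gauss:cutoff-concentration} (including $A_s/\E[A_s]\to1$ in $L^1$), and the waiting estimate \eqref{gauss:cutoff-waiting-estimate}. The proof of the lemma itself is in the cone model; the Gaussian case with fixed $\Sigma$ follows by the same change of coordinates.

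\textbf{First replacement.} For $\E[(1-A_s)^m]$, write $(1-A_s)^m=e^{-mA_s}+O(mA_s^2e^{-mA_s})$. The error is at most $O(\sup_m \E[mA_s^2e^{-mA_s}])$. Splitting at $A_s\le 2\E[A_s]$, this error is $O(\E[A_s])=o(1)$, uniformly in $m$, since $xe^{-x}\le1$. Next compare $\E[e^{-mA_s}]$ with $e^{-m\E[A_s]}$. The map $x\mapsto e^{-mx}$ is $m$-Lipschitz, but that is not uniform in $m$. Instead, set $Y=A_s/\E[A_s]$ and $t=m\E[A_s]$, and bound
\[
|\E[e^{-tY}]-e^{-t}|\le \sup_{t\ge0}\E\bigl[|e^{-tY}-e^{-t}|\bigr].
\]
On $\{|Y-1|\le\varepsilon\}$ this is at most $\sup_t te^{-t(1-\varepsilon)}\varepsilon=O(\varepsilon)$. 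Outside that event the integrand is bounded by one, and $\Prb\{|Y-1|>\varepsilon\}\to0$ by the $L^1$ convergence. Letting $\varepsilon\downarrow0$ gives the first identity uniformly in $m$.

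\textbf{Ratio identity.} For the mean-ratio identity, write
\[
\frac{A_{s+\ell}}{A_s}=\frac{A_{s+\ell}}{\E[A_s]}+\frac{A_{s+\ell}}{A_s}\Bigl(1-\frac{A_s}{\E[A_s]}\Bigr).
\]
The second term has absolute value at most $|1-Y|$, because $A_{s+\ell}\le A_s$. Its expectation is therefore $o(1)$, independently of $\ell$.

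\textbf{Bound \eqref{eq:cutoff-volume-replacement}.} Condition on $Z_{1:s}$. The increment $A_{s+m-j}-A_{s+m}$ is the mass removed from $\mathcal U_{s+m-j}$ by $j$ further observations. Its conditional expectation is at most $j\int_{\mathcal U_s}q\,dF$, since each new point removes mass at most $\int_{\mathcal U_s}q\,dF$ in expectation: the removed set is dominated by that point and lies in $\mathcal U_s$. Summing against $(1-A_s)^{j-1}$ gives
\[
\sum_j j(1-A_s)^{j-1}\le A_s^{-2},
\]
so the left side is at most $\E\bigl[A_s^{-2}\int_{\mathcal U_s}q\,dF\bigr]=O((\log s)^{1-d})$ by \eqref{gauss:cutoff-waiting-estimate}.

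\textbf{Cross term \eqref{eq:cutoff-volume-cross-term}.} Similarly,
\[
\E[A_s-A_{s+m}\mid Z_{1:s}]\le m\int_{\mathcal U_s}q\,dF,
\]
so the term is at most $\E\bigl[m(1-A_s)^mA_s^{-1}\int_{\mathcal U_s}q\,dF\bigr]$. Using $m(1-A_s)^m\le 1/A_s$, this is again bounded by \eqref{gauss:cutoff-waiting-estimate}.

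\textbf{Main obstacle.} The delicate step is obtaining uniformity in $m$ in the first identity. The route above handles this through the $\sup_t te^{-ct}$ trick, using only $L^1$ convergence of $Y$. A further check is needed: the conditional-increment bound must use that later points land in $\mathcal U_s$ only if undominated by the first $s$ points, and that removed regions are measured within $\mathcal U_s$. This holds by transitivity, so nothing beyond the cone structure is required. For the cone model, the same argument runs verbatim with constants depending on $C$.
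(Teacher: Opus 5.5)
Your proposal is correct and follows the paper's proof essentially step by step: the comparison of $(1-A_s)^m$ with $e^{-mA_s}$ at cost $O(\E[A_s])$, the $L^1$ concentration of $A_s/\E[A_s]$, the same ratio decomposition bounded by $\E|1-A_s/\E[A_s]|$, and the conditional increment bound $j\int_{\mathcal U_s}q\,dF$ combined with $\sum_j j(1-A_s)^{j-1}=A_s^{-2}$ and \eqref{gauss:cutoff-waiting-estimate}. The only cosmetic differences are that you use an $\varepsilon$-split where the paper uses the global bound $|e^{-ty}-e^{-t}|\le c|y-1|$, and the weaker but sufficient bound $m(1-A_s)^m\le A_s^{-1}$ where the paper uses $1/(eA_s)$.
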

\begin{proof}
Lemma~\ref{gauss:cutoff-volume-estimate} gives
$\E\left[|A_s/\E\left[A_s\right]-1|\right]=o_\Sigma(1)$. The elementary bounds
\[
0\le e^{-mx}-(1-x)^m\le c x,\qquad
|e^{-ty}-e^{-t}|\le c|y-1|
\quad(0\le x\le1,\ y,t\ge0)
\]
follow from $-\log(1-x)=x+O(x^2)$ for $x\le1/2$ and the mean value theorem for $y\ge1/2$; the remaining cases are bounded directly by one.
They imply, for all $m\ge1$,
\[
\left|\E\left[(1-A_s)^m\right]-e^{-m\E\left[A_s\right]}\right|
\le c\E\left[A_s\right]+c\E\left[|A_s/\E\left[A_s\right]-1|\right]=o_\Sigma(1).
\]
Similarly,
\[
\left|\E\left[\frac{A_{s+\ell}}{A_s}\right]-\frac{\E\left[A_{s+\ell}\right]}{\E\left[A_s\right]}\right|
\le\E\left[\frac{A_{s+\ell}}{A_s}
 \left|1-\frac{A_s}{\E\left[A_s\right]}\right|\right]
\le\E\left[\left|1-\frac{A_s}{\E\left[A_s\right]}\right|\right].
\]
For the last two assertions, conditional on $Z_{1:s}$, each additional
observation removes expected probability at most $\int_{\mathcal U_s}q\,dF$.
The difference $A_{s+m-j}-A_{s+m}$ involves $j$ such observations,
whereas $A_s-A_{s+m}$ involves $m$. Therefore the left sides of
\eqref{eq:cutoff-volume-replacement} and
\eqref{eq:cutoff-volume-cross-term} are bounded, respectively, by
\[
\begin{aligned}
\E\left[\int_{\mathcal U_s}q\,dF\sum_{j=1}^m j(1-A_s)^{j-1}\right]
&\le\E\left[A_s^{-2}\int_{\mathcal U_s}q\,dF\right],\\
\E\left[\int_{\mathcal U_s}q\,dF\frac{m(1-A_s)^m}{A_s}\right]
&\le\frac1e\E\left[A_s^{-2}\int_{\mathcal U_s}q\,dF\right].
\end{aligned}
\]
The inequalities use $\sum_{j\ge1}j(1-A_s)^{j-1}=A_s^{-2}$ and
$mA_s(1-A_s)^m\le1/e$. Apply \eqref{gauss:cutoff-waiting-estimate} to finish.
\end{proof}

\Needspace{4\baselineskip}
\bigskip
\noindent\textbf{Marginal tail estimates}

\nopagebreak
From this point on, $d$ denotes the number of observed coordinates.
For a singular Gaussian covariance matrix, we write $r$ for its rank.

For a binomial count $N$ and an integer $k\ge\E\left[N\right]$,
\begin{equation}\label{eq:binomial-count-tail}
\Prb\left\{N\ge k\right\}\le(e\E\left[N\right]/k)^k.
\end{equation}

\begin{lemma}\label{lem:marginal-count-estimates}
For any continuous marginals, outside an event of probability $O_d(n^{-1})$ determined by the initial $N=\lfloor n/2\rfloor$ observations, their upper-tail probabilities $p$ and counts $j$ satisfy, simultaneously at every coordinate value,
\begin{equation}\label{gauss:binomial-table-accuracy}
0\le\overline p_N(j)-p\le
c\left(\sqrt{\frac{p(1-p)\log n}{N}}+\frac{\log n}{N}\right).
\end{equation}
No independence between coordinates is required.
\end{lemma}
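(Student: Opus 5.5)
The plan is to prove \eqref{gauss:binomial-table-accuracy} separately at each of the finitely many order statistics, then extend to every coordinate value by monotonicity. Fix a coordinate $k$ and write $p(x)=\Prb\{Z^{(k)}>x\}$. By continuity, the count $j(x)=\#\{i\le N:Z_i^{(k)}>x\}$ given $p(x)=p$ is distributed as $\operatorname{Bin}(N,p)$ for each fixed $x$. The difficulty is that the bound must hold \emph{simultaneously} over all $x$. Both sides are monotone in $x$, and $j$ changes only at the $N$ sample points. So it suffices to control a finite grid: for instance, the points $x$ with $p(x)=i/n^2$, $i=0,\ldots,n^2$. A union bound over $d\,(n^2+1)$ points costs only $n^{-2}$ per point times $O(dn^2)$ points. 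To get $O(n^{-1})$ overall, I would instead use a grid of $O(n)$ points with spacing $1/n$ in $p$, each with failure probability $O(n^{-2})$. Between consecutive grid points, both $p$ and $j$ move monotonically. The slack of $1/n\le(\log n)/N$ is absorbed into the additive term, after checking that the count also cannot jump by more than $O(\log n)$ across a gap of mass $1/n$, except with probability $O(n^{-2})$ by \eqref{eq:binomial-count-tail}.

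\textbf{Lower bound $\overline p_N(j)\ge p$.} By definition, $\Prb\{\operatorname{Bin}(N,\overline p_N(j))\le j\}=n^{-2}$. The map $q\mapsto\Prb\{\operatorname{Bin}(N,q)\le j\}$ is decreasing, so $\overline p_N(j)<p$ holds exactly when $\Prb\{\operatorname{Bin}(N,p)\le j\}<n^{-2}$. For fixed $x$, the event that the observed $j$ has lower-tail probability below $n^{-2}$ therefore has probability at most $n^{-2}$. This is the usual validity of an exact one-sided confidence bound. The grid/monotonicity argument transfers it to all $x$. The conservative direction matters: when passing from a grid point to a nearby $x$, I round $p$ up and $j$ down.

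\textbf{Upper bound on $\overline p_N(j)-p$.} Write $\bar p=\overline p_N(j)$. By Bernstein's inequality,
\[
\Prb\{\operatorname{Bin}(N,\bar p)\le N\bar p-t\}\le\exp\bigl(-t^2/(2N\bar p(1-\bar p)+2t/3)\bigr).
\]
Since this lower-tail probability equals $n^{-2}$ at $t=N\bar p-j$, solving gives
\[
N\bar p-j\le c\bigl(\sqrt{N\bar p(1-\bar p)\log n}+\log n\bigr).
\]
A second Bernstein bound for $\operatorname{Bin}(N,p)$, again at level $n^{-2}$ per grid point, gives
\[
j-Np\le c\bigl(\sqrt{Np(1-p)\log n}+\log n\bigr).
\]
Adding the two, I get $N(\bar p-p)\le c(\sqrt{N\bar p\log n}+\sqrt{Np\log n}+\log n)$. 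The inequality $\sqrt{\bar p}\le\sqrt p+\sqrt{\bar p-p}$, followed by solving the resulting quadratic in $\sqrt{\bar p-p}$, replaces $\bar p$ by $p$ at the cost of another $(\log n)/N$ term. For the factor $(1-p)$, I apply the symmetric argument to the upper-tail count $N-j$ when $p>1/2$. The edge case $j=N$, where $\bar p=1$, fits this argument: it forces $p\ge1-O((\log n)/N)$ outside the bad event.

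\textbf{Main obstacle.} The step needing the most care is the uniformity over all coordinate values with only an $O_d(n^{-1})$ exceptional probability, while keeping the relative (variance-scaled) error when $p$ is tiny. Near $p\approx(\log n)/N$ the grid gaps must be small relative to $p$ itself. I would therefore use a geometric grid in $p$ below $(\log n)/N$ together with the linear grid above it, so the total is still $O(n)$ points. The tiny-$p$ region is covered by the additive term anyway. Independence across coordinates is never used, since each coordinate is handled marginally and combined by a union bound over $k\le d$.
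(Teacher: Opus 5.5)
Your upper bound on $\overline p_N(j)-p$ is essentially fine. The gap is in the exact lower bound $\overline p_N(j)\ge p$ holding simultaneously at every coordinate value: your grid-plus-monotonicity transfer does not give it. Both $p(x)$ and $j(x)$ decrease in $x$, so you cannot ``round $p$ up and $j$ down'' from a single grid point. For $x_i\le x\le x_{i+1}$, the grid point with the smaller count gives only
\[
\overline p_N(j(x))\ge\overline p_N(j(x_{i+1}))\ge p(x_{i+1})\ge p(x)-\bigl(p(x_i)-p(x_{i+1})\bigr).
\]
The other grid point gives $p(x)\le p(x_i)\le\overline p_N(j(x_i))$. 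This is useless, because $j(x_i)\ge j(x)$. The left inequality in \eqref{gauss:binomial-table-accuracy} has no additive slack to absorb the mesh. No deterministic grid removes this loss. Moreover, the per-point failure probability can be nearly $n^{-2}$ by the definition of $\overline p_N$, so you can afford only $O(n)$ points anyway. The exactness is used later, e.g.\ for $q\le\widehat q$ everywhere and for $\E[q\ind\{\widehat q\le u\}]\le\E[q\ind\{q\le u\}]$.

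The missing idea is to take the union bound over ranks rather than over values. Let $V_i$ be the uniform upper-tail probabilities of the initial observations in coordinate $k$. If $x$ has count $j<N$, then at most $j$ of the $V_i$ lie strictly below $p(x)$, so $p(x)\le V_{(j+1)}$. Also $\Prb\{V_{(j+1)}>\overline p_N(j)\}=\Prb\{\operatorname{Bin}(N,\overline p_N(j))\le j\}=n^{-2}$ exactly. A union bound over $j<N$ and $k\le d$ costs $dN/n^2=O_d(n^{-1})$ and covers every coordinate value at once; the case $j=N$ is trivial since $\overline p_N(N)=1$. With validity in place, your remaining steps give the upper bound, as in the paper: the deterministic Bernstein inversion for $\overline p_N(j)-j/N$, and a grid bound for $j-Np$ (the paper uses mesh $n^{-3}$ for the empirical distribution function). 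For the $(1-p)$ factor you do not need the ``symmetric argument with $N-j$''. That map is not a genuine symmetry here, since $\overline p_N$ is one-sided. Instead, $|r(1-r)-p(1-p)|\le|r-p|$ with $r=j/N$, or $1-\overline p_N(j)\le1-p$ once validity holds, handles it directly.
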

\begin{proof}
For one coordinate, let $N=\lfloor n/2\rfloor$ and transform the initial observations to their uniform upper-tail probabilities, with order statistics $V_{(1)},\ldots,V_{(N)}$. If a candidate has true upper-tail probability $p$ and count $j<N$, then $p\le V_{(j+1)}$. By the binomial identity,
\[
\Prb\left\{V_{(j+1)}>\overline p_N(j)\right\}
=\Prb\left\{\operatorname{Bin}(N,\overline p_N(j))\le j\right\}=n^{-2}.
\]
For $j=N$ the bound holds because $\overline p_N(N)=1$. A union bound over all ranks and coordinates gives exceptional probability at most $dN/n^2$. Outside this event, $p\le\overline p_N(j)$ for every candidate position at once.

For accuracy, let $F_N^-(p)=N^{-1}\#\{i:V_i<p\}$.
Bernstein's inequality on the grid $p=i/n^3$, a union bound, and
monotonicity between adjacent grid points give, with probability
$1-O(n^{-3})$,
\begin{equation*}
|F_N^-(p)-p|\le c\left(\sqrt{\frac{p(1-p)\log n}{N}}
+\frac{\log n}{N}\right),\qquad 0\le p\le1.
\end{equation*}
The grid error is absorbed by $(\log n)/N$.

Inverting the lower-tail Bernstein bound also gives, for $r=j/N$ and an absolute constant $c$,
\begin{equation*}
0\le \overline p_N(j)-r
\le c\left(\sqrt{\frac{r(1-r)\log n}{N}}
                   +\frac{\log n}{N}\right).
\end{equation*}
Combine these inequalities with $r=F_N^-(p)$ and
$|r(1-r)-p(1-p)|\le|r-p|$ to obtain
\eqref{gauss:binomial-table-accuracy} simultaneously in all coordinates.

For fixed $\kappa>0$, when $p,1-p\ge n^{-1+\kappa}$, the error relative to either tail is $O(n^{-\kappa/2}\sqrt{\log n}+n^{-\kappa}\log n)$. Writing $\phi$ for the standard normal density, $1/\phi(\Phi^{-1}(u))\le O(1)/\min(u,1-u)$. Thus the estimated and true standardized coordinates differ by $o(1/\sqrt{\log n})$ for every point in the stated range.

\end{proof}

\Needspace{4\baselineskip}
\bigskip
\noindent\textbf{Correlation estimates}

\nopagebreak

\begin{lemma}\label{gauss:conservative-estimates}
Fix $d\ge2$ and a positive definite Gaussian correlation matrix $\Sigma$. For the estimates in \eqref{gauss:conservative-correlation} and \eqref{gauss:conservative-rank-score}, there is an event determined by the first $\lfloor n/2\rfloor$ observations, of probability $1-O_d(n^{-1})$, on which, for all sufficiently large $n$,
\begin{gather*}
\overline{\Sigma}\text{ is positive definite},\qquad
\overline{\Sigma}_{k\ell}\ge \Sigma_{k\ell},\qquad
\|\overline{\Sigma}-\Sigma\|=O_d(\sqrt{\log n/n}),\\
a_{\overline{\Sigma}}/a_\Sigma=1+o(1),\qquad
q(z)\le\widehat q(z)\quad\text{for every }z.
\end{gather*}
For each fixed $\kappa>0$, the same event gives $\widehat q(z)/q(z)=1+o(1)$ for all points whose standardized normal coordinates have norm $O(\sqrt{\log n})$ and whose two marginal tails are at least $n^{-1+\kappa}$. The error tends to zero independently of the point. The marginal estimates in the proof do not require independence of coordinates or a positive definite covariance.
\end{lemma}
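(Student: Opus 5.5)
We work on the event from Lemma~\ref{lem:marginal-count-estimates}, intersected with a concentration event for the pair counts $N_{k\ell}$. The statement then follows in four steps.

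\textbf{Step 1: each correlation estimate is conservative and accurate.} For a pair $(Z_{2i-1},Z_{2i})$, the difference $W=Z_{2i}-Z_{2i-1}$ is $\mathcal N_d(0,2\Sigma)$. By the classical orthant (Sheppard) formula, $\Prb\{W^{(k)}W^{(\ell)}>0\}=1-\arccos(\Sigma_{k\ell})/\pi$. Call this value $p_{k\ell}$. The pairs are disjoint, so $N_{k\ell}\sim\operatorname{Bin}(\lfloor n/4\rfloor,p_{k\ell})$. The defining identity of $\overline p$ gives $\Prb\{\overline p(N_{k\ell})<p_{k\ell}\}\le n^{-2}$ (same argument as in Lemma~\ref{lem:marginal-count-estimates}), and the inverted Bernstein bound gives $\overline p(N_{k\ell})-p_{k\ell}=O(\sqrt{\log n/n})$ outside probability $O(n^{-2})$. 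The map $p\mapsto-\cos(\pi p)$ is increasing and Lipschitz, and it sends $p_{k\ell}$ to $\Sigma_{k\ell}$. Hence $\overline\Sigma_{k\ell}\ge\Sigma_{k\ell}$ and $\|\overline\Sigma-\Sigma\|=O_d(\sqrt{\log n/n})$ after a union bound over the $\binom d2$ pairs. Since $\Sigma$ is positive definite, a perturbation this small keeps $\overline\Sigma$ positive definite for large $n$. Continuity of the integral \eqref{gauss:sigma-orthant-coefficient} in $\Sigma$ near a positive definite matrix (dominated convergence, using a uniform lower bound on the quadratic form) gives $a_{\overline\Sigma}/a_\Sigma\to1$.

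\textbf{Step 2: conservativeness $q\le\widehat q$.} Write $q(z)=\Prb_{W\sim\mathcal N(0,\Sigma)}\{W^{(k)}\ge z_*^{(k)}\ \forall k\}$, where $z_*^{(k)}=\Phi^{-1}(1-p_k)$ are the true standardized coordinates. By Lemma~\ref{lem:marginal-count-estimates}, $\overline p\ge p$ coordinatewise, so $\widehat z^{(k)}\le z_*^{(k)}$ and the rectangle only grows. Slepian's inequality (the Gaussian orthant probability $\Prb\{W\ge x\}$ is nondecreasing in each off-diagonal correlation, by Plackett's identity) together with $\overline\Sigma_{k\ell}\ge\Sigma_{k\ell}$ then yields $q(z)\le\widehat q(z)$ for every $z$ simultaneously. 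The total exceptional probability is $O_d(n^{-1})$, and the event depends only on the first $\lfloor n/2\rfloor$ observations.

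\textbf{Step 3: ratio accuracy, which is the main obstacle.} For points with tails at least $n^{-1+\kappa}$, Lemma~\ref{lem:marginal-count-estimates} gives $|\widehat z^{(k)}-z_*^{(k)}|=o(1/\sqrt{\log n})$. I would bound $\log\widehat q-\log q$ by interpolation along the path $(x_t,\Sigma_t)$ from $(z_*,\Sigma)$ to $(\widehat z,\overline\Sigma)$. This needs two derivative bounds.

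\emph{Location derivative.} For $\|x\|=O(\sqrt{\log n})$, the derivative of $\log\Prb\{W\ge x\}$ in $x^{(k)}$ is $O(1+\|x\|)=O(\sqrt{\log n})$. This follows from the conditional-density representation and a Mills-ratio bound, uniformly for $\Sigma_t$ in a compact neighborhood of $\Sigma$.

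\emph{Correlation derivative.} By Plackett, the derivative in $\Sigma_{k\ell}$ is a bivariate density times an orthant probability. Relative to the orthant probability, this is $O(\log n)$ when $\|x\|=O(\sqrt{\log n})$, via a Laplace-type comparison of the two orthant integrals.

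Combining the two bounds, the log-ratio is $o(1)\cdot O(1)+O(\sqrt{\log n/n})\,O(\log n)=o(1)$. The delicate part is the uniformity in the direction of $x$, especially near the boundary of the dual cone where $q$ is not in its corner regime. There I would rely on the crude bounds \eqref{gauss:cone-uniform-lower} and the fact that $\Prb\{W\ge x\}$ is log-concave in $x$, so its gradient is controlled by its value at the nearby point; if needed I would handle the two parameter changes separately via monotonicity.
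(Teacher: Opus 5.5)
Steps 1 and 2 are correct and are essentially the paper's argument. The paper proves the correlation monotonicity by Gaussian interpolation with Price's theorem, which is the same fact as your Plackett--Slepian step.

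The gap is in Step 3, which rests on two bounds you assert but do not prove. Write $q_S(x)=\Prb_{W\sim\mathcal N_d(0,S)}\{W\ge x\}$. You need $|\nabla_x\log q_S(x)|=O(1+\|x\|)$ and a relative correlation derivative $O(1+\|x\|^2)$, uniformly over all directions of $x$ and all $S$ near $\Sigma$.

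Your log-concavity remark does give the first bound, but only at the right scale. By Pr\'ekopa's theorem $\log q_S$ is concave. Integrating $f_S$ over $y+[0,1]^d$ gives $q_S(y)\ge cn^{-K}$ on the ball of radius $\sqrt{\log n}$ around $x$. So the oscillation of $\log q_S$ there is $O(\log n)$, and the gradient is $O(\sqrt{\log n})$. Comparing with a nearby point at distance $O(1)$, as you suggest, only yields a gradient of order $\log n$. Multiplied by $\|\widehat z-z\|=o(1/\sqrt{\log n})$, that is not $o(1)$.

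For the correlation part, nothing you list gives a two-sided bound. Plackett's formula writes $\partial q_S/\partial S_{k\ell}$ as an integral of $f_S$ over a codimension-two face of the orthant. A Laplace comparison of that face integral with the volume integral $q_S(x)$ is valid only in the corner regime. It therefore fails near the boundary of the dual cone, which is exactly the case you flag. Monotonicity in the correlations only reproduces $q\le\widehat q$ from Step 2, not $\widehat q\le(1+o(1))q$.

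The missing idea is to differentiate under $q_S(x)=\int_{[0,\infty)^d}f_S(x+v)\,dv$ rather than through Plackett. This gives $\nabla_x\log q_S(x)=-S^{-1}\E_S[W\mid W\ge x]$. The derivative of $\log q_S(x)$ in a symmetric direction $E$ is $\tfrac12\operatorname{tr}\bigl(\E_S[S^{-1}WW^{\mathsf T}S^{-1}-S^{-1}\mid W\ge x]\,E\bigr)$. Both bounds then reduce to $\E_S[\|W\|^2\mid W\ge x]=O(1+\|x\|^2)$, and this holds in every direction:
\begin{itemize}
\item $W-x$ given $W\ge x$ has a strongly log-concave density on the orthant;
\item its constrained mode has norm at most $2\|S^{-1}x\|/\lambda_{\min}(S^{-1})$;
\item integration by parts over the orthant bounds the mean squared distance from that mode by $d/\lambda_{\min}(S^{-1})$, since the boundary term has a favorable sign.
\end{itemize}

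The paper avoids derivatives altogether. On the ball $\|z+v\|\le R\sqrt{\log n}$ it compares the integrands directly: $\log f_{\overline\Sigma}(\widehat z+v)-\log f_\Sigma(z+v)=O(\|\overline\Sigma-\Sigma\|\log n+\|\widehat z-z\|\sqrt{\log n})=o(1)$. Outside this ball, both integrals are $O(n^{-K-2})$ for large $R$. That is negligible against $q_\Sigma(z)\ge cn^{-K}$, obtained by integrating $f_\Sigma$ over $z+[0,1]^d$.

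This single pointwise comparison treats the location and correlation errors together and makes uniformity over directions automatic. It is in effect the integrated form of the derivative bounds you needed.
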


\begin{proof}
Use the marginal bounds and the standardized-coordinate error proved
in Lemma~\ref{lem:marginal-count-estimates}.

\proofstep{Correlation estimates.}
For this estimate, let $N=\lfloor n/4\rfloor$. The difference of two observations has the same correlation matrix. In standard Gaussian coordinates the two projection vectors make angle $\arccos(\Sigma_{k\ell})$. Their planar direction is uniform, and opposite signs occupy fraction $\arccos(\Sigma_{k\ell})/\pi$ of the circle. The sign-agreement probability is therefore $p_{k\ell}=1-\arccos(\Sigma_{k\ell})/\pi$. Disjoint pairs are independent, so $N_{k\ell}\sim\operatorname{Bin}(N,p_{k\ell})$. Inverting the one-sided binomial bound gives $\Prb\left\{\overline p_N(N_{k\ell})<p_{k\ell}\right\}\le n^{-2}$. Moreover, Hoeffding's inequality gives
\[
\overline p_N(j)\le\min\{1,j/N+\sqrt{\log n/N}\},\qquad
|N_{k\ell}/N-p_{k\ell}|=O(\sqrt{\log n/N})
\]
simultaneously for all coordinate pairs except with probability $O_d(n^{-2})$ for the second bound. Since $-\cos(\pi p)$ is increasing and Lipschitz, the correlation estimates have the stated entrywise ordering and matrix error. Since $\Sigma$ is fixed and positive definite, $\overline{\Sigma}$ is also positive definite for large $n$. Continuity of the positive integral \eqref{gauss:sigma-orthant-coefficient} gives $a_{\overline{\Sigma}}/a_\Sigma\to1$.

\proofstep{Gaussian rectangle probabilities.}
Work in standardized marginal coordinates in this step, and write
\[
q_S(z)=\Prb_{W\sim \mathcal N_d(0,S)}
 \left\{W^{(k)}\ge z^{(k)}\text{ for every }k\in[d]\right\}
\]
when the correlation matrix $S$ must be specified. If $S\ge \Sigma$ entrywise off the diagonal, then $q_S(z)\ge q_\Sigma(z)$. For completeness, interpolate the covariance and approximate the rectangle indicator by $g(x)=\prod_k g_k(x^{(k)})$, with smooth nonnegative increasing functions $g_k$. The Gaussian differentiation formula (Price's theorem; see \cite{voigtlaender2021}) gives
\[
\frac{d}{dt}\E\left[g(W_t)\right]
=\frac12\sum_{k,\ell}(S-\Sigma)_{k\ell}
                    \E\left[\partial_{k\ell}g(W_t)\right],\qquad
W_t\sim \mathcal N_d(0,(1-t)\Sigma+tS).
\]
Diagonal terms vanish and all remaining terms are nonnegative. Taking limits proves the bound, also when some thresholds are $-\infty$. Together with the marginal bounds this gives
\[
q_\Sigma(z)\le q_\Sigma(\widehat z)\le
q_{\overline{\Sigma}}(\widehat z)=\widehat q(z).
\]

Finally, fix constants $B,\kappa>0$ and consider points satisfying
$\|z\|\le B\sqrt{\log n}$ and
$\min\{\Phi(z^{(k)}),1-\Phi(z^{(k)})\}\ge n^{-1+\kappa}$
for every $k\in[d]$.
Lemma~\ref{lem:marginal-count-estimates} gives
$\|\widehat z-z\|=o(1/\sqrt{\log n})$ for all these points, and
$\|\overline{\Sigma}-\Sigma\|\log n=o(1)$.
If $f_S$ denotes the density of $\mathcal N_d(0,S)$, then
\[
q_S(z)=\int_{[0,\infty)^d}f_S(z+v)\,dv.
\]
We compare the two integrals first on a ball and then control their tails. Since the eigenvalues of $\overline\Sigma$ stay in a fixed compact subset of $(0,\infty)$, expansion of the Gaussian log densities gives, for every fixed $R$,
\begin{equation}\label{gauss:local-density-ratio}
\sup_{\substack{
 \|z\|\le B\sqrt{\log n},\ v\ge0\\
 \|z+v\|\le R\sqrt{\log n}\\
 \min_{k\in[d]}\min\{\Phi(z^{(k)}),1-\Phi(z^{(k)})\}
 \ge n^{-1+\kappa}
}}
\left|
 \log\frac{f_{\overline\Sigma}(\widehat z+v)}
               {f_\Sigma(z+v)}
\right|=o(1).
\end{equation}
The log-density error is $O_\Sigma(\|\overline\Sigma-\Sigma\|\log n+\|\widehat z-z\|\sqrt{\log n})=o(1)$. This comparison uses only the covariance and translation errors. It holds on the same ball for any measurable integration region, not just $v\ge0$.

Next integrate $f_\Sigma(z+v)$ over the fixed cube $v\in[0,1]^d$. There are constants $K=K(B,\Sigma)$ and $c=c(B,\Sigma)>0$ such that
\begin{equation}\label{gauss:rectangle-polynomial-lower}
q_\Sigma(z)\ge c n^{-K}
\qquad\text{whenever }\|z\|\le B\sqrt{\log n}.
\end{equation}
Choose $R$ so large that the Gaussian tails outside this ball are
$O(n^{-K-2})$, uniformly for covariance matrices near $\Sigma$.
By \eqref{gauss:rectangle-polynomial-lower}, both discarded integrals
are $o(q_\Sigma(z))$. Integrating \eqref{gauss:local-density-ratio}
on the ball therefore proves
$q_{\overline\Sigma}(\widehat z)/q_\Sigma(z)=1+o(1)$ uniformly over the stated points.
\end{proof}

To apply the relative estimate in the stopping proof, choose $P$ inside $C^*$ and avoid the finitely many directions $\pm a_k$, where $a_k$ are the unit rows of $\Sigma^{1/2}$. Their surface measure is zero when $d\ge2$, so compact sets avoiding small neighborhoods of these directions can have $a_P$ arbitrarily close to $a_\Sigma$. On such a fixed set, $|a_k\cdot\theta|\le b<1$ for all $k$. For $\|X\|\le\sqrt{2\log n+40d\log\log n}$, both marginal tails are therefore at least $n^{-1+\kappa}$ for some fixed $\kappa>0$. Lemma~\ref{gauss:conservative-estimates} gives relative accuracy there. The probability beyond this radius is $O_d(n^{-1}(\log n)^{(d-2)/2-20d}) =o(u(\log n)^{d-1})$ throughout \eqref{gauss:smooth-threshold-band}.

\Needspace{4\baselineskip}
\bigskip
\noindent\textbf{Known correlation matrix}

\nopagebreak

\begin{lemma}\label{gauss:support-estimates}
Suppose the correlation matrix $\Sigma=AA^{\mathsf T}$ is known,
where $A\in\mathbb R^{d\times r}$ has full column rank $r\ge2$
and unit rows $a_1^{\mathsf T},\ldots,a_d^{\mathsf T}$.
Assume that $C=\{x\in\mathbb R^r:Ax\ge0\}$ has nonempty interior. Use the marginal estimates in \eqref{gauss:conservative-rank-score} with $\Sigma$ in place of $\overline\Sigma$. Outside an event of probability $O_d(n^{-1})$ determined by the initial sample, $q(z)\le\widehat q(z)$ for every $z$. For any fixed compact set of directions inside $C^*$ avoiding the unit row directions $\pm a_k$, the same event gives $\widehat q(Ax)=(1+o(1))q_C(x)$ when $\|x\|\le\sqrt{2\log n+40r\log\log n}$. The error is uniform over these points.
\end{lemma}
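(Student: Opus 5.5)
The plan is to follow the proof of Lemma~\ref{gauss:conservative-estimates}, with the known $\Sigma$ in place of $\overline\Sigma$ and with coordinates parametrized by $x\in\mathbb R^r$ through $Z=\mu+Ax$. First, standardize each nonconstant coordinate. Lemma~\ref{lem:marginal-count-estimates} then gives an event of probability $1-O_d(n^{-1})$, determined by the initial $\lfloor n/2\rfloor$ observations, on which $\overline p_N(j)\ge p$ in every coordinate simultaneously. That lemma needs no independence and no positive definiteness. On this event $\widehat z^{(k)}\le z^{(k)}$ for every $k$. Hence $\{W\ge\widehat z\}\supseteq\{W\ge z\}$ for $W\sim\mathcal N_d(0,\Sigma)$, and the law of $W$ is exactly that of $A X$ with $X\sim\mathcal N_r(0,I_r)$. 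This gives $q(z)\le q_\Sigma(z)\le q_\Sigma(\widehat z)=\widehat q(z)$ with no Price-type comparison needed, since the matrix is exact. Ties and $-\infty$ thresholds cause no trouble, because the bound is pure monotonicity of a rectangle probability.

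For relative accuracy, fix a compact set $P$ of directions in $\operatorname{int}(C^*)\cap S^{r-1}$ avoiding small neighborhoods of $\pm a_k$. Then $|a_k\cdot\theta|\le b<1$ for all $k$ and all $\theta\in P$. For $x=t\theta$ with $t\le\sqrt{2\log n+40r\log\log n}$, the standardized coordinate $a_k\cdot x$ has absolute value at most $b\sqrt{2\log n}(1+o(1))$. So both of its marginal tails are at least $n^{-b^2-o(1)}\ge n^{-1+\kappa}$ for some fixed $\kappa>0$. Lemma~\ref{lem:marginal-count-estimates} then gives $\|\widehat z-z\|_\infty=o(1/\sqrt{\log n})$ uniformly over these points, where $z=Ax$. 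Because $\widehat z\le z$, we can write $\widehat z=z-\eta$ with $0\le\eta^{(k)}=o(1/\sqrt{\log n})$. Then $\widehat q(Ax)=\Prb\{AX\ge Ax-\eta\}$, while $q_C(x)=\Prb\{AX\ge Ax\}=\gamma(x+C)$.

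The main obstacle is comparing these two probabilities in the singular setting. Here $\{y:Ay\ge Ax-\eta\}$ is a slightly enlarged polyhedron in $\mathbb R^r$, and a density-ratio argument in $\mathbb R^d$ is not available. I would pick a fixed $e\in\operatorname{int}(C)$ with $Ae\ge\mathbf 1$ after scaling, which exists since the interior is nonempty. Then the enlarged region is contained in $x-\|\eta\|_\infty e+C$. Translating the Gaussian density by a vector $w=\|\eta\|_\infty e$ of length $o(1/\sqrt{\log n})$ changes $\gamma$ of the set by a factor $\exp(O(\|w\|\,\|y\|)+O(\|w\|^2))$ wherever $\|y\|=O(\sqrt{\log n})$, and this factor is $1+o(1)$. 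The part of the region with $\|y\|\ge R\sqrt{\log n}$ must also be controlled. Its mass is $O(n^{-K-2})$ for large $R$, while \eqref{gauss:cone-uniform-lower} gives $q_C(x)\ge c_C n^{-1}(\log n)^{-O(r)}$. That tail is therefore negligible in relative terms. Combined with the lower inequality $q_C(x)\le\widehat q(Ax)$ already established, this yields $\widehat q(Ax)=(1+o(1))q_C(x)$ uniformly. The constant coordinates, dropped after standardization, contribute no constraint.
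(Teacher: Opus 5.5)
Your proposal is correct and follows the paper's proof essentially step for step. The marginal confidence bounds give $\widehat z\le z$ and hence $q\le\widehat q$. Avoiding $\pm a_k$ gives standardized-coordinate errors $o(1/\sqrt{\log n})$. A fixed interior vector $v$ of $C$ with $Av\ge\mathbf 1$ then sandwiches $\widehat q(Ax)$ between $q_C(x)$ and $q_C(x-\delta v)$, and these agree up to a factor $1+o(1)$ by the translated log-density comparison on a ball of radius $R\sqrt{\log n}$ together with the polynomial lower bound \eqref{gauss:cone-uniform-lower}.
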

\begin{proof}
Lemma~\ref{lem:marginal-count-estimates} gives the marginal confidence bounds without any assumption on dependence. Applying them with the known $\Sigma$ gives the first assertion. Avoiding the row directions gives the same lower bounds on both marginal tails as above. The standardized coordinate errors are therefore at most $\delta=o(1/\sqrt{\log n})$. Choose a fixed $v$ with $(Av)^{(k)}\ge1$ for every coordinate $k$. Such a vector exists by scaling an interior point of $C$, since no row of $A$ is zero. Then
\[
x+C\ \subseteq\ \{y:Ay\ge\widehat z\}
\ \subseteq\ x-\delta v+C,\qquad z=Ax.
\]
Hence $q_C(x)\le\widehat q(z)\le q_C(x-\delta v)$.
The Gaussian log-density expansion used to prove
\eqref{gauss:local-density-ratio}, now in dimension $r$ with covariance
$I_r$ and translation $-\delta v$, gives a log-density error
$O_{v,R}(\delta\sqrt{\log n}+\delta^2)=o(1)$ on
$\|y\|\le R\sqrt{\log n}$ for every fixed $R$.
Integrating over $x+C$ inside this ball therefore changes its
probability by a factor $1+o(1)$.
Equation~\eqref{gauss:cone-uniform-lower} gives a polynomial lower bound
on $q_C(x)$ throughout the stated range. Choosing $R$ sufficiently
large makes both discarded Gaussian tails negligible relative to
$q_C(x)$. Thus $q_C(x-\delta v)/q_C(x)=1+o(1)$, proving the claim.
\end{proof}

\Needspace{4\baselineskip}
\bigskip
\noindent\textbf{Conditional selection bounds}

\nopagebreak

The preceding estimates give the same two bounds whether a rule uses $q$ itself or estimates it from the initial sample.

\begin{corollary}\label{lem:threshold-common-inputs}
Let $Z=AX$, where $X\sim\mathcal N_r(0,I_r)$, $r\ge2$, $A$ has column rank $r$ and unit rows, and $C=\{x:Ax\ge0\}$ has nonempty interior. Write $\Sigma=AA^{\mathsf T}$ and let $a_\Sigma$ be the coefficient in \eqref{gauss:cone-tail-exact}, in dimension $r$. For full information, set $\widehat q=q$. Otherwise use the estimate from Lemma~\ref{gauss:support-estimates} when $\Sigma$ is known, or Lemma~\ref{gauss:conservative-estimates} when $\Sigma$ is unknown and positive definite. Fix these estimates after the first $\lfloor n/2\rfloor$ observations.

For every fixed $\varepsilon\in(0,1)$ and all sufficiently large $n$, outside exceptional histories of total probability $o(((\log n)^{r-1}\log\log n)^{-1})$, we have $q\le\widehat q$ and
\begin{equation*}
\begin{aligned}
\Prb\left\{Z_s\in\Max(Z_1,\ldots,Z_s),\ \widehat q(Z_s)\le u
       \mid\mathcal H_{s-1}\right\}
&\ge(1-\varepsilon)a_\Sigma(\log n)^{r-1}u,\\
\E\left[q(Z_s)\ind\{\widehat q(Z_s)\le u\}\mid\mathcal H_{s-1}\right]
&\le(1+\varepsilon)\frac{a_\Sigma}{2}(\log n)^{r-1}u^2.
\end{aligned}
\end{equation*}
Here $\mathcal H_{s-1}$ is the information defined in Section~\ref{sec:model}, $s>\lfloor n/2\rfloor$, and
\[
\frac1{n(\log n)^{2(r-1)}}\le u\le\frac{e^{-\sqrt{\log\log n}/3}}n.
\]
The bounds hold for all these times, thresholds, and histories. Each exception is determined by observations already seen. The same exceptional histories work for every threshold in this interval, including a threshold chosen using the initial sample. For independent coordinates, take $A=I_d$, $r=d$, and $a_I=1/(d-1)!$. Then the known-matrix estimate is exactly the product in \eqref{eq:threshold-rank-score}. For $\widehat q=q$, the bounds also hold for standard Gaussian observations compared by any fixed closed convex cone with nonempty interior and no line. Use $q_C$ and replace the maximality condition by $\nexists i<s:X_s\prec_C X_i$, and replace $a_\Sigma$ by the coefficient in \eqref{gauss:cone-tail-exact}.
\end{corollary}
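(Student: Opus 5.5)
We prove both bounds by splitting the event according to the direction of $X_s$ and its distance from the origin. Throughout we work in the coordinates $X_s$ with $Z_s=AX_s$. Declare a history exceptional when any of the following holds: the initial-sample event of Lemma~\ref{gauss:conservative-estimates} or Lemma~\ref{gauss:support-estimates} fails, or the observations seen so far violate \eqref{gauss:sample-bounds}. These events have probability $O_d(n^{-1})$ and $o(((\log n)^{r-1}\log\log n)^{-1})$ respectively. Each is determined by past observations, and none depends on $u$. Outside them we have $q\le\widehat q$ everywhere, which comes from the confidence bounds in the marginal lemma together with Price monotonicity or the known-matrix sandwich.

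\emph{Second bound.} We use $q\le\widehat q$ to get $\ind\{\widehat q(Z_s)\le u\}\le\ind\{q(Z_s)\le u\}$. Since $X_s$ is independent of $\mathcal H_{s-1}$, the conditional expectation is then at most $\E[q_C(X)\ind\{q_C(X)\le u\}]$. By \eqref{gauss:patch-sharp-moments} this is $(a_\Sigma/2+o(1))u^2(\log n)^{r-1}$, uniformly over the interval of $u$. The interval is covered because it lies within constant multiples of the lemma's endpoints up to $e^{-\sqrt{\log\log n}/3}$; where this falls outside the lemma as stated, we apply \eqref{eq:distribution-moment-asymptotics} directly with $\log(1/u)=\log n+O(\log\log n)$. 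The second bound therefore holds for every history.

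\emph{First bound.} Fix a compact set $P\subset\operatorname{int}(C^*)\cap S^{r-1}$ that avoids neighbourhoods of $\pm a_k$ and has $a_P\ge(1-\varepsilon/4)a_\Sigma$. This is possible by the remark after Lemma~\ref{gauss:conservative-estimates}. We lower-bound the probability by restricting to $X_s/\|X_s\|\in P$ and $\|X_s\|\le\sqrt{2\log n+40r\log\log n}$. On this set the relative accuracy $\widehat q=(1+o(1))q$ holds, so $\{q_C(X_s)\le u/(1+\varepsilon/4)\}\subseteq\{\widehat q(Z_s)\le u\}$. By \eqref{gauss:patch-sharp-moments}, the probability of the restricted event with $q_C\le u/(1+\varepsilon/4)$ is at least $(a_P-o(1))u(\log n)^{r-1}/(1+\varepsilon/4)$. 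Removing the radius restriction costs only the radial tail, which is $o(u(\log n)^{r-1})$ as noted after that lemma. Next we subtract the points that are not currently maximal. At non-exceptional histories, Lemma~\ref{gauss:sample-exclusion} bounds their conditional mass by $o(u(\log n)^{r-1})$, uniformly in $s$ and $u$. Collecting the factors gives the bound $(1-\varepsilon)a_\Sigma(\log n)^{r-1}u$ for large $n$. For the general cone with $\widehat q=q$, the same argument applies with $q_C$, the $\prec_C$ maximality condition, and the coefficient from \eqref{gauss:cone-tail-exact}. For $A=I_d$, the Gaussian rectangle probability factorizes into the product \eqref{eq:threshold-rank-score}.

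\emph{Main obstacle.} The delicate step is uniformity. The exceptional set must be independent of $u$ so that the bounds still hold when the threshold is chosen from the initial sample, and the relative estimate $\widehat q/q\to1$ must hold simultaneously for all relevant points. Our approach is to fix the initial-sample event once and for all, and to use the fact that all the $o(1)$ terms in Lemmas~\ref{gauss:conservative-estimates}, \ref{gauss:support-estimates}, \ref{gauss:threshold-probability-estimates} and \ref{gauss:sample-exclusion} are uniform over points, times and $u$ in the band. A second point to check is that $s>\lfloor n/2\rfloor$, so that the estimates are fixed and $X_s$ is independent of them.
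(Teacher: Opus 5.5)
Your proposal is correct and follows essentially the paper's proof. You restrict to a compact direction set $P\subset\operatorname{int}(C^*)$ avoiding $\pm a_k$ with $a_P$ close to $a_\Sigma$, and use $\widehat q\le(1+o(1))q_C$ there within the norm bound together with \eqref{gauss:patch-sharp-moments} at a slightly reduced threshold. You then subtract the currently dominated points with Lemma~\ref{gauss:sample-exclusion}, and get the second bound from $q\le\widehat q$ plus the truncated moment estimate.

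Two small remarks. First, the second bound needs the initial-sample event (for $q\le\widehat q$), so it holds at every history outside that event rather than literally every history. Second, the stated band for $u$ already lies inside the interval of Lemma~\ref{gauss:threshold-probability-estimates}, so no separate appeal to \eqref{eq:distribution-moment-asymptotics} is needed.
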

\begin{proof}
Choose a fixed compact set $P\subset\operatorname{int}(C^*)\cap S^{r-1}$, avoiding the finitely many unit row directions of $A$, with $a_P>(1-\varepsilon/8)a_\Sigma$. The two estimation lemmas, together with the direction argument after Lemma~\ref{gauss:conservative-estimates}, give $q\le\widehat q$ everywhere and $\widehat q(AX)\le(1+\varepsilon/8)q_C(X)$ on $P$ within the norm bound $\|X\|\le\sqrt{2\log n+40r\log\log n}$, for large $n$. For $\widehat q=q$ these statements are immediate, and for a general cone no row directions need to be excluded. The probability beyond this norm bound is $o(u(\log n)^{r-1})$ throughout the stated interval. Since the new observation is independent of $\mathcal H_{s-1}$,
\[
\begin{aligned}
&\Prb\left\{Z_s\in\Max(Z_1,\ldots,Z_s),\ \widehat q(Z_s)\le u
          \mid\mathcal H_{s-1}\right\}\\
&\quad\ge\Prb\left\{\widehat q(Z_s)\le u,\ X_s/\|X_s\|\in P
          \mid\mathcal H_{s-1}\right\}\\
&\qquad-\Prb\left\{q_C(X_s)\le u,\ X_s/\|X_s\|\in P,\
 \exists i<s:\ X_s\prec_C X_i\mid\mathcal H_{s-1}\right\}\\
&\quad\ge(1-\varepsilon/2)a_\Sigma u(\log n)^{r-1}
       -o(u(\log n)^{r-1})\\
&\quad\ge(1-\varepsilon)a_\Sigma u(\log n)^{r-1}.
\end{aligned}
\]
The first inequality uses $q\le\widehat q$. The second uses
\eqref{gauss:patch-sharp-moments} at $u/(1+\varepsilon/8)$ and
Lemma~\ref{gauss:sample-exclusion} for points dominated by earlier
observations. For the other bound,

\[
\E\left[q(Z_s)\ind\{\widehat q(Z_s)\le u\}\mid\mathcal H_{s-1}\right]
\le\E\left[q(Z_s)\ind\{q(Z_s)\le u\}\right]
\le(1+\varepsilon)\frac{a_\Sigma}{2}u^2(\log n)^{r-1},
\]
where the last inequality is the truncated-moment estimate in \eqref{gauss:patch-sharp-moments}. The exceptional histories come from Lemma~\ref{gauss:sample-exclusion} and, when estimating $q$, the initial-sample event in the relevant estimation lemma. Their combined probability has the stated order. For $\widehat q=q$, no estimation lemma is needed, and the same proof works for every cone stated above.
\end{proof}

\end{document}